% MULTICONTACT UNIFIED
% 2025

\documentclass[11pt]{article}

%Paquetes originales:
%\usepackage{sectsty}
%para las referencias
\usepackage[utf8]{inputenc}
%para  the referencias
\usepackage[colorlinks=true%,pagebackref
]{hyperref}
\hypersetup{urlcolor=blue, citecolor=red, linkcolor=blue}
\usepackage{mleftright}
\usepackage{stmaryrd}
\SetSymbolFont{stmry}{bold}{U}{stmry}{m}{n}
\usepackage{microtype}

\usepackage{bbm}
\usepackage{colortbl}

\usepackage{graphicx}
%(Para busqueda inversa)
\usepackage[active]{srcltx}

\usepackage{amsmath,amssymb}

\numberwithin{equation}{section}

\usepackage{physics}
\usepackage{bm}
\usepackage{enumerate}
\usepackage{amsthm}
\usepackage[all,2cell]{xy}
\usepackage{mathrsfs}
\usepackage{mathtools}
\mathtoolsset{showonlyrefs}

\usepackage{tikz-cd}
\usetikzlibrary{cd}

\usepackage{xcolor}
\DeclareUnicodeCharacter{0301}{*************************************}

\hypersetup{urlcolor=blue, citecolor=red, linkcolor=blue}

\usepackage[a4paper, left=25mm, right=25mm, top=30mm, bottom=25mm]{geometry}

\usepackage[symbol]{footmisc}

\usepackage{marginnote}

\marginparwidth=2.2cm

% index:
\usepackage{imakeidx}
\makeindex[intoc]
%\makeindex[intoc, options= -s index_style.ist]

\usepackage[nottoc]{tocbibind}
\usepackage[colorinlistoftodos]{todonotes} % add notes to pdf

\theoremstyle{plain}
\newtheorem{theorem}{Theorem}[section]
\newtheorem{proposition}[theorem]{Proposition}
\newtheorem{corollary}[theorem]{Corollary}

\theoremstyle{definition}
\newtheorem{definition}[theorem]{Definition}
\newtheorem{remark}[theorem]{Remark}

\AtBeginEnvironment{remark}{%
  \pushQED{\qed}%
}
\AtEndEnvironment{remark}{\popQED\endexample}

\AtBeginEnvironment{example}{%
  \pushQED{\qed}%
}
\AtEndEnvironment{example}{\popQED\endexample}

% aplicacio restringida
\newcommand\restr[2]{{
  \left.\kern-\nulldelimiterspace #1 \right|_{#2} 
}}

\renewcommand{\d}{\mathrm{d}}

\newcommand{\df}{\Omega}
\newcommand{\Cinfty}{\mathscr{C}^\infty}

\newcommand{\Tan}{\mathrm{T}}
\newcommand{\cT}{\mathrm{T}^\ast}

\newcommand*{\inn}[1]{\iota_{#1}}
\newcommand*{\innp}[1]{\iota\left(#1\right)}

\newcommand{\Lie}{\mathscr{L}}
\newcommand{\vf}{\mathfrak{X}}

\newcommand{\W}{\mathcal{W}}

\renewcommand{\W}{\mathcal{W}}
\newcommand*{\bd}{\overline{\mathrm{d}}}

\newcommand{\bfX}{\mathbf{X}}

\def\derpar#1#2{\frac{\partial{#1}}{\partial{#2}}}
\newcommand{\parder}[2]{\frac{\partial #1}{\partial #2}}

\newcommand{\tparder}[2]{\partial #1/\partial #2}
\newcommand{\parderr}[3]{\frac{\partial^2 #1}{\partial #2\partial #3}}

\let\graph\relax
\DeclareMathOperator{\graph}{graph}

\DeclareMathAlphabet{\mathpzc}{OT1}{pzc}{m}{it}
\def\d{\mathrm{d}}

\DeclareMathOperator{\pr}{pr}

\usepackage{xcolor}
\usepackage{fancyhdr}

 \parskip=4pt

\begin{document}

%%%%%%%%%%%%%%%%%%%%

\vspace{3em}

{\Huge\sffamily\raggedright Skinner--Rusk formalism \\[12pt] of action-dependent multicontact field theories
}
\vspace{2em}

{\large\raggedright
    \today
}

\vspace{2em}

\medskip

{\Large\raggedright\sffamily
    Xavier Rivas
}\vspace{1mm}\newline
{\raggedright
    Department of Computer Engineering and Mathematics, Universitat Rovira i Virgili.\\
    Avinguda Països Catalans 26, 43007 Tarragona, Spain.\\
    e-mail: \href{mailto:xavier.rivas@urv.cat}{xavier.rivas@urv.cat} --- {\sc orcid}: \href{https://orcid.org/0000-0002-4175-5157}{0000-0002-4175-5157}
}

\medskip

{\Large\raggedright\sffamily
    Narciso Román-Roy
}\vspace{1mm}\newline
{\raggedright
    Department of Mathematics, Universitat Politècnica de Catalunya.\\
    C. de Jordi Girona 31, 08034 Barcelona, Spain.\\
    e-mail: \href{mailto:narciso.roman@upc.edu}{narciso.roman@upc.edu } --- {\sc orcid}: \href{https://orcid.org/0000-0003-3663-9861}{0000-0003-3663-9861}
}

\medskip

{\Large\raggedright\sffamily
    Annamaria Villanova
}\vspace{1mm}\newline
{\raggedright
    Escuela Superior de Ingeniería y Tecnología, Universidad Internacional de La Rioja.\\
    Av. de la Paz, 137, 26006 Logroño, La Rioja, Spain.\\
    Department of Mathematics, Universitat Politècnica de Catalunya.\\
    C. Jordi Girona 31, 08034 Barcelona, Spain.\\
    e-mail: \href{mailto:annamaria.villanova@unir.net}{annamaria.villanova@unir.net} --- {\sc orcid}: \href{https://orcid.org/0009-0002-2350-8109}{0009-0002-2350-8109}
    \medskip
    % e-mail: \href{mailto:annamaria.villanova@upc.edu}{annamaria.villanova@upc.edu} %--- {\sc orcid}: \href{https://orcid.org/0009-0002-2350-8109}{0009-0002-2350-8109}
}

\vspace{2em}

{\large\bf\raggedright
    Abstract
}\vspace{1mm}\newline
{\raggedright
The newly developed {\sl multicontact structure},
based on {\sl contact} and {\sl multisymplectic geometries}, provides a very general geometrical framework suitable for the treatment of action-dependent classical field theories.
Having successfully applied it to formulate the Lagrangian and Hamiltonian descriptions of these theories, in the present work, the well-known Skinner--Rusk formalism is presented in this multicontact setting, which allows us to provide a combined version of both Lagrangian and Hamiltonian formalisms particularly suitable for the study and description of singular theories.
As an application of this last situation, we study a modification of Maxwell’s Lagrangian of classical electromagnetism, which incorporates action-dependent terms and allows us to describe electromagnetism in material media.
}

\vspace{1em}
%\bigskip

{\large\bf\raggedright
    Keywords:}
Action-dependent classical field theory, multicontact and multisymplectic structures, Lagrangian and Hamiltonian formalisms, Skinner--Rusk formalism, constraint algorithm, singular Lagrangian 

\medskip

{\large\bf\raggedright
MSC2020 codes}:
{\sl Primary:}
70S05, 
70S10,
53D10,
35R01. \\
\indent\indent\indent\indent\indent
{\sl Secondary:}
35Q99, 53C15, 53Z05, 58A10, 58J60, 70G45.

\vspace{2em}

%%%%%%%%%%%%%%%%%%%%
\noindent {\bf Authors' contributions:} All authors contributed to the study conception and design. The manuscript was written and revised by all authors. All authors read and approved the final version.
\medskip

\noindent {\bf Competing Interests:} The authors have no competing interests to declare. 

\newpage

{\setcounter{tocdepth}{2}
\def\baselinestretch{1}
\small
% hack per a eliminar l'espai vertical 1em
\def\addvspace#1{\vskip 1pt}
\parskip 0pt plus 0.1mm
\tableofcontents
}

\pagestyle{fancy}

% Custom header and footer content
\fancyhead[L]{Skinner--Rusk formalism of multicontact field theories}    % Left-aligned header
\fancyhead[C]{}                  % Center header (empty)
\fancyhead[R]{X. Rivas {\it et al.}}       % Right-aligned header

\fancyfoot[L]{}     % Left-aligned footer
\fancyfoot[C]{\thepage}                  % Center footer (empty)
\fancyfoot[R]{}            % Right-aligned footer with date

\setlength{\headheight}{17pt}

% Add horizontal line under header
\renewcommand{\headrulewidth}{0.1pt}  % Thickness of the header line (0.4pt)
\renewcommand{\footrulewidth}{0pt}    % No footer line

% Redefine the horizontal line with space above it
\renewcommand{\headrule}{%
    \vspace{3pt}                % Space between header text and the line
    \hrule width\headwidth height 0.4pt % The horizontal line (0.4pt thick)
    \vspace{0pt}                % Additional space below the line (if needed)
}

\setlength{\headsep}{30pt}  % Space between the header and main content

%%%%%%%%%%%%%%%%%%%%%%%%%%%%%%%%%%%%%%%%%%%%%%%%%%%%%%%%
\section{Introduction}
%%%%%%%%%%%%%%%%%%%%%%%%%%%%%%%%%%%%%%%%%%%%%%%%%%%%%%%%

In the last decades, the description of physical theories has experienced an interesting development through the use of differential geometric methods. This has proven to be a very powerful and insightful approach, providing a universal language to describe and analyze physical systems. In particular, for classical field theories, \textsl{multisymplectic geometry} has been established as the most standard and general framework for describing conservative systems \cite{Gar_74,GS_73,Kij_73,KT_79}.
Although M.J. Gotay was the first to give an abstract definition of {\sl multisymplectic form} \cite{Go_91},
this geometric structure was later studied in detail by
other authors \cite{CIL_96,CIL_99,CCI_91,LMS_03,EIMR_12,Mar_88,RW_19}.

A variation of classical field theories are \textsl{action-dependent field theories}, namely those providing
non-conservation instead of the usual conservation laws.
The Lagrangian or Hamiltonian functions that describe these theories are allowed to depend not only on the fields and their derivatives, but also on additional variables that encode the flow of action.
The key and more general geometric structure that arises in the formulation of these theories is \textsl{multicontact geometry} \cite{LGMRR_23,LIR_25,Vit_15}, a generalization of contact geometry \cite{BH_16,Bra_17,CG_19,Gei_08,Kho_13,LM_87}
based on the multisymplectic setting,
which constitutes the most general geometric framework for dealing with action-dependent field theories.
Action-dependent field theories can also be derived and studied within the framework of variational calculus \cite{LGMRR_23,GLMR_24,GM_22,GGB_03,LPAF_2018}, which provides more tools to model dissipative systems.
One of the distinctive features of action-dependent theories is that their dissipation laws are precisely governed by geometric data derived from the Lagrangian or Hamiltonian; 
that is, through a differential 1-form known as the \textsl{dissipation form}.
The ability to quantify these effects directly from the geometry is a major advantage of the multicontact approach.

In 1983, R. Skinner and R. Rusk developed a geometric formalism by merging the Lagrangian and Hamiltonian formalisms of autonomous dynamical systems into a single unified description,
where velocities and momenta are treated within a unified framework \cite{SR_83,SR_83_2}.
Regardless of the regularity of the Lagrangian that describes the system, in the Skinner--Rusk formalism the theory is always singular.
A key property of this formalism is that the dynamical equations inherently incorporate the second-order or holonomy condition
for the solutions, regardless of whether the Lagrangian is regular or not,
as well as the Legendre map, which is obtained as compatibility conditions for the dynamical equations.
This unified framework is especially powerful in the context of singular theories, and the Lagrangian and Hamiltonian versions of the constraint algorithm, as well as the corresponding
solutions to the Euler--Lagrange and Hamilton equations, are recovered straightforwardly.

This Skinner--Rusk formalism was later extended to other kinds of mechanical systems and classical field theories, and, recently, also in the context
of contact mechanics \cite{LGLRR_20,LGMR_21,RT_23} and $k$-contact field theories \cite{GRR_22} (see also the references therein for an exhaustive bibliography on the generalizations of the Skinner--Rusk formalism).

In this paper, we extend the Skinner--Rusk formalism to the multicontact framework, in the context of action-dependent field theories,
developing a unified structure that integrates the two classical Lagrangian and Hamiltonian formalisms, which are recovered as specific cases of the Skinner--Rusk formalism. We study the relevant geometric structures which arise in this context and use a modification of Maxwell's electromagnetism to illustrate the theory.
This unified approach is shown to be more comprehensive, since the resulting field equations contain all the relevant features of both the Lagrangian and Hamiltonian descriptions of the theory.

The paper is organized as follows. Section \ref{reviw} provides a review of the general structure of multicontact manifolds and of the Lagrangian and Hamiltonian formalisms for action-dependent field theories.
Section \ref{skinner--Rusk} contains the main original contributions of this work. In particular, it is devoted to studying the geometric frameworks mentioned above. To give the mathematical theory a physical relevance, Section \ref{example} proposes, as an application, the well-known theory of Maxwell's electromagnetism modified by incorporating extra action-dependent terms.

All manifolds are real, second-countable, and smooth, and the mappings are assumed to be of class $\mathscr{C}^\infty$.
Einstein summation over repeated indices is understood.
The following notation will be used throughout the work, adhering to standard conventions:
\vspace{-10pt}

\begin{itemize}\itemsep1pt
    \item $\Cinfty({\cal M})$: Smooth functions in a manifold ${\cal M}$.
    \item $\df^k({\cal M})$: Module of differential forms of degree $k$ in a manifold ${\cal M}$.
    \item $\vf({\cal M})$: Module of vector fields in a manifold ${\cal M}$.
    \item $\vf^k({\cal M})$: Module of $k$-multivector fields in a manifold ${\cal M}$.   
    \item $\inn{X}\Omega$: Inner contraction of a vector field $X\in\vf({\cal M})$ and a $k$-form $\Omega\in\df^k({\cal M})$.
    \item $\Lie_X$: Lie derivative by a vector field $X\in\vf({\cal M})$.
    \item $\d$: Exterior differential of differential forms.
    \end{itemize}

\section{Multicontact formulation of action-dependent field theories} % REVIEW
\label{reviw}

The {\sl multicontact formulation} of action-dependent classical field theories
is a very general geometric framework recently developed to describe these kinds of theory.
It is based on an extension of the contact and multisymplectic geometries, called {\sl multicontact structure}, which was first introduced in \cite{LGMRR_23,LGMRR_25},
although a more general definition of this structure has recently been proposed in \cite{LIR_25}.
The next review summarizes the main concepts and results of these references and follows
the terminology introduced in \cite{GRR_25}.

\subsection{Multivector fields and connections}
\label{ap:multivector}

This section reviews the basics on multivector fields and connections, including the integrability conditions of these objects. For more details, see \cite{BCGGG_91,CCI_91,EIMR_12,EMR_98,KMS_93,Mar_97}.

Let ${\cal M}$ be a manifold with $\dim{\cal M}=n$.
The {\sl\textbf{$m$-multivector fields}} or {\sl\textbf{multivector fields of grade $m$}} on ${\cal M}$ are the contravariant skew-symmetric tensor fields of order $m$ in ${\cal M}$. 
The set of $m$ multivector fields in ${\cal M}$ is denoted by $\vf^m({\cal M})$.
A multivector field $\bfX\in \vf^m({\cal M})$ is {\sl\textbf{locally decomposable}} if, for every ${\rm p}\in{\cal M}$, there exists an open neighbourhood $U_{\rm p} \subset{\cal M}$ such that 
\begin{equation}
\restr{\bfX}{U_{\rm p}} = X_1\wedge\dotsb\wedge X_m \,,\quad \text{for some }X_1,\dotsc,X_m \in\vf(U_{\rm p}) \,.
\end{equation}
The {\sl \textbf{contraction}} of a locally decomposable multivector field $\bfX\in \vf^m({\cal M})$ and a differentiable form $\Omega\in\df^k({\cal M})$
is the natural contraction between contravariant and covariant tensor fields,
\begin{equation}
    \restr{\inn{\bfX}\Omega}{U_p} = \begin{dcases}
        \inn{X_1\wedge\dotsb \wedge X_m}\Omega = \inn{X_m}\dotsb\inn{X_1}\Omega\,, & k \geq m\,,\\
        0\,, & k < m\,.
    \end{dcases}
\end{equation}

Let $\kappa\colon {\cal M}\rightarrow M$ be a fiber bundle
with $\dim{M}=m$ and $\dim{{\cal M}}=N+m$,
and denote $(x^\mu,z^A)$ the local coordinates on ${\cal M}$;
where $x^\mu$ are coordinates on the base $M$ and $z^A$ are coordinates on the fibers (with $0\leq\mu\leq {m-1}$ and $1\leq A\leq N$).
A multivector field $\bfX \in \vf^m({\cal M})$  is $\kappa$-\textsl{transverse} if
$\inn{\bfX}(\kappa^*\beta)\vert_{\rm p} \neq 0$,
for every ${\rm p} \in{\cal M}$ and $\beta \in \df^m({M})$.
The local expression for a locally decomposable $\kappa$-transverse multivector field $\bfX\in\vf^m({\cal M})$ is 
\begin{equation}
\bfX= \bigwedge^{m-1}_{\mu=0} \bfX_\mu= 
f\bigwedge^{m-1}_{\mu=0} \left( \frac{\partial}{\partial x^\mu} + F_\mu^A\frac{\partial}{\partial z^A}\right)\,, \quad \text{where }f\in \Cinfty({\cal M}) \,.
\label{mvf}
\end{equation}
 If $M$ is an orientable manifold with volume form $\omega_M\in\df^m(M)$, then
the condition for $\bfX \in \mathfrak{X}^m({\cal M})$ to be $\kappa$-transverse can be expressed as
$\inn{\bfX}(\kappa^*\omega_M)\neq 0$.
This transversality condition can be fixed by taking
$\inn{\bfX}(\kappa^*\omega_M)=1$,
which implies $f=1$ in \eqref{mvf}.

 If $\bfX \in \vf^m({\cal M})$ is a locally decomposable and $\kappa$-transverse multivector field
and $\psi\colon U\subset M \rightarrow{\cal M}$
is a local section of $\kappa$,
with local expression
$\psi(x^\mu)=(x^\mu,z^A(x^\nu))$;
then $\psi$ is an {\sl\textbf{integral section}} of $\bfX$ if $\displaystyle\derpar{z^A}{x^\mu}=F_\mu^A$.
Then, $\bfX$ is {\sl \textbf{integrable}} if,
for every ${\rm p}\in{\cal M}$, there exist $x\in M$ and an integral section, $\psi\colon U_x\subset M\to {\cal M}$, of $\bfX$ such that ${\rm p} = \psi(x)$.

Furthermore, the {\sl\textbf{canonical prolongation}} of a section 
$\psi\colon U\subset M\to \mathcal{M}$ to 
$\bigwedge\nolimits^m\Tan{\cal M}$ is
the section $\psi^{(m)}\colon U\subset M\to\bigwedge\nolimits^m\Tan\mathcal{M}$ 
defined as $\psi^{(m)}:=\bigwedge\nolimits^m\Tan\psi\circ{\bf Y}_{\omega_M}$, where 
$\bigwedge\nolimits^m\Tan\psi\colon\bigwedge\nolimits^m\Tan M\to\bigwedge\nolimits^m\Tan\mathcal{M}$ is the natural extension of $\psi$ 
to the corresponding multitangent bundles,
and ${\bf Y}_{\omega_M}\in\vf^m(M)$ is the unique $m$-multivector field on $M$
such that $\inn{{\bf Y}_{\omega_M}}\omega_M=1$. Then,
$\psi$ is an integral section of ${\bf X}\in\vf^m({\cal M})$ if, and only if, ${\bf X}\circ\psi=\psi^{(m)}$.

An {\sl\textbf{Ehresmann connection}} on the bundle $\mathcal{M}\to M$
is a $\kappa$-semibasic $1$-form $\nabla$ on $\mathcal{M}$
with values in $\Tan \mathcal{M}$, that is, a $(1,1)$-tensor field on $\mathcal{M}$,
$\nabla\in\mathcal T^1_1(\mathcal{M})$,
such that $\inn{\nabla}\alpha =\alpha$, for every
$\kappa$-semibasic form $\alpha\in\df^1(\mathcal{M})$
(where $\inn{\nabla}\alpha$ denotes the usual tensor contraction).
An Ehresmann connection splits $\Tan \mathcal{M}$ into the {\sl vertical} and a {\sl horizontal distribution} and, in this way, $\nabla$ represents also the horizontal projector.
The connection is said to be {\sl\textbf{integrable}} if its associated horizontal distribution is integrable
(the necessary and sufficient condition is that 
%the {\sl curvature tensor} associated to $\nabla$ is zero; that is, 
the connection is {\sl flat}).
In this case, the integral submanifolds are local sections $\psi\colon U\subset M\to \mathcal{M}$ of $\kappa$.

Then, classes of locally decomposable and
$\kappa$-transverse multivector fields $\{{\bf X}\} \subseteq \vf^m (\mathcal{M})$
are in one-to-one correspondence with orientable Ehresmann connections $\nabla$ on $\mathcal{M}\to M$. This correspondence is
characterized by the fact that the horizontal distribution associated
with $\nabla$ is the distribution associated with $\{ {\bf X}\}$.
In this correspondence,
classes of integrable (locally decomposable) and $\kappa$-transverse
$m$-multivector fields correspond to flat orientable Ehresmann
connections.
In natural coordinates of the bundle, the local expression of a connection is:
$$
\nabla= \d x^\mu\otimes\left( \frac{\partial}{\partial x^\mu} + F_\mu^A\frac{\partial}{\partial z^A}\right) \,.
$$

\subsection{Multicontact structures}

\begin{definition}
\label{multicont}
Let $\mathcal{M}$ be a differentiable manifold.
A form $\Theta\in\df^m(\mathcal{M})$, with $\dim\mathcal{M}>m$,
is a {\sl \textbf{multicontact form}} in $\mathcal{M}$ if:
\begin{enumerate}[{\rm(1)}]
    \item
$\ker \Theta\cap\ker\d \Theta=\{0\}$, and
    \item 
$\ker\d\Theta\neq\{0\}$.
    \end{enumerate}
    Then, the pair $(\mathcal{M},\Theta)$ is said to be a {\sl \textbf{multicontact manifold}}.
If condition (1) does not hold, then $\Theta$ 
is said to be a {\sl \textbf{premulticontact form}}
and $(\mathcal{M},\Theta)$ is a {\sl \textbf{premulticontact manifold}}.
\end{definition}

This definition is very general and,
as in the case with multisymplectic structures \cite{LMS_03,Mar_88}, 
the existence of adapted or Darboux coordinates
as well as other useful properties is not guaranteed for these multicontact forms, 
unless additional conditions are imposed.
Thus, let $\mathcal{M}$ be a manifold,  with $\dim{\mathcal{M}}=m+N$ and $N\geq m\geq 1$;
and let $\Theta,\omega\in\df^m(\mathcal{M})$ be two forms with constant ranks.
Given a regular distribution $\mathcal{D}\subset\Tan \mathcal{M}$, consider the $\Cinfty(\mathcal{M})$-module of sections $\Gamma(\mathcal{D})$ and, for every $k\in\mathbb{N}$, define
the set of $k$-forms on $\mathcal{M}$ vanishing by the vector fields in $\Gamma(\mathcal{D})$; that is,
\[
\mathcal{A}^k(\mathcal{D}):=\big\{ \alpha\in\df^k(\mathcal{M})\mid
\inn{Z}\alpha=0 \,,\ \text{ for every}\ Z\in\Gamma({\cal D})\big\}=
\big\{ \alpha\in\df^k(\mathcal{M}) \mid 
\Gamma(\cal D)\subset\ker\alpha \big\} \,,
\]
where $\ker\alpha =\{ Z\in\vf(\mathcal{M})\mid \inn{Z}\alpha=0\}$ is the one-kernel of a $k$-form $\alpha\in\df^k(\mathcal{M})$, with $k>1$.

\begin{definition}
\label{def:reeb_dist}
The {\sl\textbf{Reeb distribution}} associated with
the pair $(\Theta,\omega)$ is the distribution
$\mathfrak{R}\subset\Tan \mathcal{M}$ defined as
\begin{equation}
\label{Reebdef}
\mathfrak{R}=\big\{ R\in\Gamma(\ker\omega)\mid \inn{R}\dd\Theta\in {\cal A}^m(\ker\omega)\big\} \,.
\end{equation}
The set of sections of the Reeb distribution will be also denoted by $\mathfrak{R}$,
and its elements $R\in\mathfrak{R}$ are called {\sl\textbf{Reeb vector fields}}.
\end{definition}

Note that $\ker\omega\cap\ker\dd\Theta\subset\mathfrak{R}$.
Furthermore, if $\omega\in\df^k(\mathcal{M})$ is a closed form and has a constant rank, then $\mathfrak{R}$ is involutive. Therefore:

\begin{definition}
\label{multicontactdef}
A pair $(\Theta,\omega)$ is a {\sl \textbf{special premulticontact structure}} on $\mathcal{M}$ if $\omega\in\Omega^m(\mathcal{M})$ is closed and, for $0\leq k\leq N-m$,
we have the following:
\begin{enumerate}[{\rm (1)}]
\item\label{prekeromega}
$\rank\ker\omega=N$,
\item\label{prerankReeb}
$\rank\mathfrak{R}=m+k$,
\item\label{prerankcar}
$\rank\left(\ker\omega\cap\ker\Theta\cap\ker\d\Theta\right)=k$,
\item \label{preReebComp}
${\cal A}^{m-1}(\ker\omega)=\{\inn{R}\Theta\mid R\in \mathfrak{R}\}$.
\end{enumerate}
Then, the triple $(\mathcal{M},\Theta,\omega)$ is said to be
a {\sl \textbf{special premulticontact manifold}}
and $\Theta\in\Omega^m(\mathcal{M})$ is called a {\sl \textbf{special premulticontact form}} on $\mathcal{M}$. 
The distribution
$\mathcal{C} = \ker\omega\cap\ker\Theta\cap\ker\d\Theta$
is the {\sl \textbf{characteristic distribution}} of $(\mathcal{M},\Theta,\omega)$.
If $k=0$, the pair $(\Theta,\omega)$ is a {\sl \textbf{special multicontact structure}}, the triple
$(\mathcal{M},\Theta,\omega)$ is a {\sl \textbf{special multicontact manifold}},
and $\Theta\in\Omega^m(\mathcal{M})$ is a {\sl \textbf{special multicontact form}}.
\end{definition}

\begin{remark}
As it is expected, if $(\mathcal{M},\Theta,\omega)$ is a special multicontact manifold, then $(\mathcal{M},\Theta)$ is a multicontact manifold \cite{GRR_25}.
\end{remark}

\begin{remark}
We will be interested in the particular case  of special (pre)multicontact manifolds $(\mathcal{M},\Theta,\omega)$,
where $\kappa\colon\mathcal{M}\to M$ is a fiber
bundle, with $\dim\, M= m$, $\dim\,\mathcal{M}=m+N$, 
and such that $M$ is an orientable manifold with volume form 
$\omega_M\in\df^m(M)$ and $\omega=\kappa^*\omega_M\in\df^m(\mathcal{M})$.
Then condition (1) of Definition\ref{multicontactdef} holds automatically and we say that $(\mathcal{M},\Theta,\omega)$ is a {\sl\textbf{special (pre)multicontact bundle}}.
This is, really, the canonical model for special (pre)multicontact structures
and is also the situation which is interesting in action-dependent field theories.
\end{remark}

\begin{proposition}
Let $(\mathcal{M},\Theta,\omega)$ be a special (pre)multicontact manifold, then
there exists a unique $1$-form
$\sigma_{\Theta}\in\df^1(\mathcal{M})$, called the {\sl\textbf{dissipation form}}, satisfying
\begin{equation}
\label{sigma}
\sigma_{\Theta}\wedge\inn{R}\Theta=\inn{R}\dd\Theta \,,\qquad \text{for every }\,R\in\mathfrak{R} \,.
\end{equation}
\end{proposition}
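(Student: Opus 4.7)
The plan is to establish uniqueness and existence separately, with both arguments resting essentially on condition \eqref{preReebComp} of Definition \ref{multicontactdef}.

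For uniqueness, suppose $\sigma,\sigma'\in\df^1(\mathcal{M})$ both satisfy \eqref{sigma}. Their difference $\eta:=\sigma-\sigma'$ satisfies $\eta\wedge\inn{R}\Theta=0$ for every $R\in\mathfrak{R}$, and condition \eqref{preReebComp} tells us that the forms $\inn{R}\Theta$ exhaust $\mathcal{A}^{m-1}(\ker\omega)$ as $R$ runs over $\mathfrak{R}$. Working in a local chart adapted to the fiber bundle $\kappa\colon\mathcal{M}\to M$, so that $\omega=\d x^0\wedge\dotsb\wedge\d x^{m-1}$ and $\ker\omega$ is spanned by the vertical vector fields $\partial/\partial z^A$, the module $\mathcal{A}^{m-1}(\ker\omega)$ is locally generated by the $m$ forms $\inn{\partial/\partial x^\mu}\omega$. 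A direct bookkeeping then shows that $\eta\wedge\inn{\partial/\partial x^\mu}\omega=0$ for every $\mu$ forces every component of $\eta$ (horizontal and vertical) to vanish, hence $\sigma=\sigma'$.

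For existence, I would construct $\sigma_\Theta$ locally and glue by uniqueness. In the same chart, condition \eqref{preReebComp} supplies Reeb vector fields $R_0,\dotsc,R_{m-1}\in\mathfrak{R}$ with $\inn{R_\mu}\Theta=\inn{\partial/\partial x^\mu}\omega$. By the definition of the Reeb distribution, $\inn{R_\mu}\d\Theta\in\mathcal{A}^m(\ker\omega)$, and since $\mathcal{A}^m(\ker\omega)$ is the free $\Cinfty(\mathcal{M})$-module of rank one locally generated by $\omega$, there exist unique functions $g_\mu$ with $\inn{R_\mu}\d\Theta=g_\mu\,\omega$. Setting
\begin{equation}
\sigma_\Theta:=g_\mu\,\d x^\mu,
\end{equation}
a short computation gives $\sigma_\Theta\wedge\inn{R_\mu}\Theta=g_\mu\,\omega=\inn{R_\mu}\d\Theta$, so \eqref{sigma} holds on the chosen frame.

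To promote \eqref{sigma} to every $R\in\mathfrak{R}$, I would use that both sides are $\Cinfty(\mathcal{M})$-linear in $R$, and that the kernel of the surjection $R\mapsto\inn{R}\Theta$ from $\mathfrak{R}$ onto $\mathcal{A}^{m-1}(\ker\omega)$ coincides with the characteristic distribution $\mathcal{C}=\ker\omega\cap\ker\Theta\cap\ker\d\Theta$, by a rank count using conditions \eqref{prerankReeb} and \eqref{prerankcar}; both sides of \eqref{sigma} vanish identically on $\mathcal{C}$. Linearity therefore extends the identity from $\{R_\mu\}$ to all of $\mathfrak{R}$, and the already-established uniqueness ensures that the local expressions patch into a globally defined $1$-form. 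The main obstacle I anticipate is precisely this consistency check: the construction effectively builds a $\Cinfty(\mathcal{M})$-module morphism $\mathcal{A}^{m-1}(\ker\omega)\to\mathcal{A}^m(\ker\omega)$ out of the Reeb data and must then show it is represented by wedging with a genuine $1$-form; both the well-definedness (i.e.\ the kernel identification) and the wedge representation are where the full force of Definition \ref{multicontactdef} is used.
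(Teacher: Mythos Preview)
The paper does not prove this proposition: it sits in the review Section~\ref{reviw}, which summarizes results from \cite{LGMRR_23,LGMRR_25,LIR_25,GRR_25}, and is immediately followed by Definition~\ref{def:bard} with no intervening argument. There is therefore nothing in the paper to compare your proposal against.

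On its own merits your argument is correct. The uniqueness step works because $\eta\wedge\d^{m-1}x_\mu=\eta_\mu\,\d^mx+\eta_A\,\d z^A\wedge\d^{m-1}x_\mu$, and these summands are linearly independent for each fixed~$\mu$; vanishing for all~$\mu$ kills every component of~$\eta$. The existence step correctly uses that $\mathcal{A}^m(\ker\omega)$ is locally free of rank one, and your rank count $(m+k)-m=k$ identifying the kernel of $R\mapsto\inn{R}\Theta$ on~$\mathfrak{R}$ with~$\mathcal{C}$ is exactly what conditions \eqref{prerankReeb}, \eqref{prerankcar}, \eqref{preReebComp} together provide. One minor imprecision: you phrase the chart as ``adapted to the fiber bundle~$\kappa$'', but the proposition is stated for arbitrary special (pre)multicontact manifolds, not only the bundle case of the subsequent Remark. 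This does no harm, since $\omega$ closed with $\ker\omega$ of constant rank~$N$ forces $\ker\omega$ to be involutive and hence yields local coordinates $(x^\mu,z^A)$ with $\omega=\d x^0\wedge\dotsb\wedge\d x^{m-1}$ regardless; you should simply say this rather than invoke the bundle.
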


\begin{definition}\label{def:bard}
\label{bard}
Let $\sigma_{\Theta}\in\df^1(\mathcal{M})$ be the dissipation form. 
We define the operator
\begin{align*}
\bd\colon\df^m(\mathcal{M})&\longrightarrow\df^{m+1}(\mathcal{M})
\\
\beta&\longmapsto\bd\beta=\dd \beta+\sigma_{\Theta}\wedge\beta\,.
\end{align*}
\end{definition}

The (pre)multicontact structures corresponding to action-dependent field theories arising from the Herglotz variational principle (see \cite{GLMR_24}) satisfy the following additional requirement.

\begin{definition}\label{def:variational}
Let \((\mathcal{M},\Theta,\omega)\) be a special (pre)multicontact manifold satisfying
\begin{equation}
\iota_X\iota_{Y}\Theta = 0 \,, \qquad
\text{for every } X,Y\in\Gamma(\ker\omega) \,.
\end{equation}
Then $(\mathcal{M},\Theta,\omega)$ is a {\sl\textbf{variational (pre)multicontact manifold}} and $(\Theta,\omega)$ is said to be a {\sl\textbf{variational (pre)multicontact structure}}.
\end{definition}

\subsection{Multicontact Lagrangian formalism}
\label{mlf}

Consider a bundle $\pi\colon E\to M$,
where $M$ is an orientable $m$-dimensional manifold with volume form $\omega_M\in\df^m(M)$, and let $J^1\pi\to E\to M$ be the corresponding first-order jet bundle.
If $\dim M=m$ and $\dim E=n+m$, then $\dim J^1\pi=nm+n+m$.
The natural coordinates in $J^1\pi$ adapted to the bundle structure
are $(x^\mu,y^A,y^A_\mu)$, with $\mu = 1,\ldots,m$ and $A=1,\ldots,n$,
and are taken so that
$\omega_M=\d x^1\wedge\cdots\wedge\d x^m =: \d^mx$.

In the multicontact Lagrangian formalism for action-dependent field theories,
the {\sl configuration bundle} of the theory is $E\times_M\bigwedge\nolimits^{m-1}(\Tan^*M)\to M$,
where $\bigwedge\nolimits^{m-1}(\Tan^*M)$ denotes the bundle of $(m-1)$-forms on $M$.
The  {\sl multivelocity phase bundle} is
${\cal P}=J^1\pi\times_M\bigwedge\nolimits^{m-1}(\Tan^*M)$.
Natural coordinates in ${\cal P}$ are $(x^\mu,y^A,y^A_\mu,s^\mu)$,
and $\dim\mathcal{P}=2m+n+nm$.
We have the natural projections depicted in the following diagram:
\begin{equation}\label{diagrammulticoLag}
\begin{tikzcd}[column sep=huge, row sep=large]
    & {\cal P}=J^1\pi\times_M\bigwedge\nolimits^{m-1}\Tan^*M
    \arrow[dl, "\rho", swap] 
    \arrow[ddd, "\tau"] 
    \arrow[dr, "\tau_1"] 
    \arrow[ddl, "\rho_E"] \\
    J^1\pi
    \arrow[d, "\pi^1", swap]
    \arrow[ddr, "\bar\pi^1", crossing over, yshift = 1.0]
    & & \bigwedge\nolimits^{m-1}\Tan^*M
    \arrow[ddl, "\tau_\circ"] \\
    E
    \arrow[dr, "\pi", swap, end anchor={[yshift=1.5ex]south west}] \\
    & M
\end{tikzcd}
\end{equation}

As $\bigwedge\nolimits^{m-1}\Tan^*M$ is a bundle of forms, it is endowed with a canonical structure, the {\sl tautological form}
$\theta\in \df^{m-1}(\bigwedge\nolimits^{m-1}\Tan^*M)$,
whose expression in natural coordinates is $\theta=s^\mu\,\d^{m-1}x_\mu$. Then:

\begin{definition}
The form $S:=\tau_1^*\theta\in\df^{m-1}({\cal P})$
is called the {\sl\textbf{canonical action form}} of ${\cal P}$.
\end{definition}

Its expression in coordinates is also $S=s^\mu\,\d^{m-1} x_\mu$.

%\begin{definition}
A section $\bm{\psi}\colon M\rightarrow {\cal P}$ of the projection $\tau:\mathcal{P}\rightarrow M$ is said to be {\sl\textbf{holonomic}} in ${\cal P}$ if
the section $\psi:=\rho\circ\bm{\psi}\colon M\to J^1\pi$
is holonomic in $J^1\pi$; that is,
there is a section $\phi\colon M\to E$ of $\pi$ such that $\psi=j^1\phi$.
It is customary to write $\bm{\psi}=(\psi,s)=(j^1\phi,s)=j^1\bm\phi$, 
where $s\colon M\to\bigwedge\nolimits^{m-1}(\Tan^*M)$  is a section of the projection $\tau_\circ\colon\bigwedge\nolimits^{m-1}(\Tan^*M)\to M$;
then, we also say that $\bm{\psi}$ is the
{\sl\textbf{canonical prolongation}} of the section 
$\bm\phi:=(\phi,s)\colon M\to E\times_M\bigwedge\nolimits^{m-1}(\Tan^*M)$ to ${\cal P}$.
%\end{definition}

Now, 

a $\tau$-transverse $m$-multivector field
$\bm{\Gamma}\in\vf^m({\cal P})$ or an Ehresmann connection $\nabla\in\mathcal T^1_1(\mathcal{M})$ are said to be {\sl\textbf{holonomic}}
or a {\sl\textbf{second-order partial differential equation}} (\textsc{sopde}) in ${\cal P}$ if
they are integrable and its integral sections are holonomic on ${\cal P}$.

The local expression of a {\sc sopde} multivector field in ${\cal P}$, satisfying the condition $\inn{\bfX}\omega=1$, is 
\begin{equation}
    \label{localsode2}
    \bfX = \bigwedge\nolimits^m_{\mu=1}
    \Big(\parder{}{x^\mu}+y^A_\mu\frac{\displaystyle\partial} {\displaystyle
    \partial y^A}+F_{\mu\nu}^A\frac{\displaystyle\partial}{\displaystyle \partial y^A_\nu}+g^\nu_\mu\,\frac{\partial}{\partial s^\nu}\Big)\,,
\end{equation}
and, similarly, the local expression of a {\sc sopde} connection is
\begin{equation}
    \label{localsode22}
    \nabla=\d x^\mu\otimes
    \Big(\derpar{}{x^\mu}+y^A_\mu\frac{\displaystyle\partial} {\displaystyle
    \partial y^A}+F_{\mu\nu}^A\frac{\displaystyle\partial}{\displaystyle \partial y^A_\nu}+g^\nu_\mu\,\frac{\partial}{\partial s^\nu}\Big)\,.
\end{equation}
Their integral sections are solutions to the system of second-order partial differential equations:
\begin{equation}
    y^A_\mu=\parder{y^A}{x^\mu}\,,\qquad F^A_{\mu\nu}=\frac{\partial^2y^A}{\partial x^\mu \partial x^\nu}\,.
\end{equation}

The $\tau$-transverse locally decomposable $m$-multivector fields and Ehresmann connections in~${\cal P}$ 
%satisfying the condition \eqref{Jsopde}, 
whose local expressions are \eqref{localsode2}
and \eqref{localsode22}, respectively,
are usually referred to as {\sl\textbf{semi-holonomic multivector fields}} and {\sl\textbf{semi-holonomic connections}}.

The first-order jet bundle $J^1\pi$ is endowed with a canonical structure
which is called the {\sl canonical endomorphism},
and is a $(1,2)$-tensor field in $J^1\pi$,
denoted ${\rm J}$. Its local expression in natural coordinates of $J^1\pi$ is \cite{EMR_96,Sau_89}
$${\rm J}=\left(\d y^a-y^a_\mu\d x^\mu\right)\otimes
\parder{}{y^a_\nu}\otimes\parder{}{x^\nu}\,.
$$
As ${\cal P}=J^1\pi\times_M\bigwedge\nolimits^{k-1}(\Tan^*M)$
is a trivial bundle, this canonical structure 
can be extended to ${\cal P}$ in a natural way. This extension is denoted with the same notation ${\rm J}$, and has the same coordinate expression.
Then, a $m$-multivector field $\bfX\in\vf^m(\mathcal{M})$ is semi-holonomic
if, and only if,
%\begin{equation*}%\label{Jsopde}
$\inn{\bfX}{\rm J}=0$,
%\end{equation}
where $\inn{\bfX}{\rm J}$ denotes the natural inner contraction between the tensor fields.
For an Ehresmann connection $\nabla\in\mathcal T^1_1(\mathcal{M})$, this condition is
$\inn{\nabla}{\rm J}=0$, where the tensor contraction is between the contravariant part of $\nabla$ and the covariant part of~${\rm J}$.

\vspace{5pt}

Physical information in field theories is given by {\sl Lagrangian densities}.
A {\sl\textbf{Lagrangian density}} is a $\tau$-semibasic $m$-form $\mathcal{L}\in\df^m({\cal P})$;
hence $\mathcal{L}=L\,\d^mx$, where $L\in\Cinfty({\cal P})$ is the
{\sl\textbf{Lagrangian function}} and $\d^mx$ is also the local expression of the form $\omega:=\tau^*\omega_M$.
%\begin{definition}\label{lagrangean2}
Then, the {\sl\textbf{Lagrangian form}} associated with $\mathcal{L}$ is the form
\[
\Theta_{\mathcal{L}}=-\inn{\rm J}\d\mathcal{L}-\mathcal{L}+\d S\in\df^m({\cal P}) \,,
\]
%$\end{definition}
whose expression in natural coordinates reads
\begin{equation}
\label{thetacoor1}
    \Theta_{\mathcal{L}} =-\frac{\partial L}{\partial y^A_\mu}\d y^A\wedge\d^{m-1}x_\mu +\left(\frac{\partial L}{\partial y^A_\mu}y^A_\mu-L\right)\d^m x+\d s^\mu\wedge \d^{m-1}x_\mu\,,
\end{equation}
where the local function
$\displaystyle E_{\mathcal{L}}:=\frac{\partial L}{\partial y^A_\mu}y^A_\mu-L$
is called the {\sl\textbf{Lagrangian energy}} associated with $\mathcal{L}$.

\begin{proposition}
\label{Prop-regLag2}
The Lagrangian form $\Theta_{\mathcal{L}}$ is a special (variational) multicontact form in ${\cal P}$
(and hence $(\Theta_{\mathcal{L}},\omega)$ is a special (variational) multicontact structure)
if, and only if, the Hessian matrix
$\displaystyle (W_{ij}^{\mu\nu})= 
\bigg(\frac{\partial^2L}{\partial y^A_\mu\partial y^B_\nu}\bigg)$
is regular everywhere.
\end{proposition}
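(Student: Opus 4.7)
The plan is to check the four items of Definition \ref{multicontactdef} with $k=0$ for the pair $(\Theta_{\mathcal{L}},\omega)$, together with the variational requirement of Definition \ref{def:variational}, and to trace every non-trivial condition back to the non-singularity of the Hessian $W^{\mu\nu}_{AB}=\partial^2 L/\partial y^A_\mu\partial y^B_\nu$. Two items come for free. Since $\omega=\tau^*\omega_M$ is the pull-back of a volume form from the base, $\ker\omega$ coincides with the entire $\tau$-vertical distribution and has rank $N=m+n+nm$, settling condition (1). The variational condition $\inn{X}\inn{Y}\Theta_{\mathcal{L}}=0$ for vertical $X,Y$ is immediate from \eqref{thetacoor1}: every summand contains at least $m-1$ purely horizontal factors $\d x^\mu$ and at most one ``vertical'' factor ($\d y^A$ or $\d s^\mu$), so two vertical contractions annihilate each term.

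The core of the argument is the explicit computation of $\d\Theta_{\mathcal{L}}$ in coordinates. Differentiating \eqref{thetacoor1} and using $\d x^\nu\wedge\d^{m-1}x_\mu=\delta^\nu_\mu\,\d^m x$, one observes that the Hessian $W^{\mu\nu}_{AB}$ occurs solely as the coefficient of the wedge products $\d y^B_\nu\wedge\d y^A\wedge\d^{m-1}x_\mu$; all remaining contributions involve only first partial derivatives of $L$ in $y^A_\mu$. A general $\tau$-vertical vector field reads $R=f^A\,\partial/\partial y^A+f^A_\nu\,\partial/\partial y^A_\nu+g^\nu\,\partial/\partial s^\nu$, and the Reeb condition $\inn{R}\d\Theta_{\mathcal{L}}\in\mathcal{A}^m(\ker\omega)$, paired against a second vertical vector field, produces a linear system on the unknowns $(f^A,f^A_\nu,g^\nu)$ whose principal block is precisely $W^{\mu\nu}_{AB}$.

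If the Hessian is regular, this system forces $f^A=0$ and determines $f^A_\nu$ in closed form in terms of $g^\nu$; a basis of the Reeb distribution is therefore provided by $m$ independent vector fields $R_\nu=\partial/\partial s^\nu+B^A_{\nu\mu}\,\partial/\partial y^A_\mu$, so $\rk\mathfrak{R}=m$, which is condition (2). Since $\inn{R_\nu}\Theta_{\mathcal{L}}=\d^{m-1}x_\nu$ and the forms $\d^{m-1}x_\nu$ span $\mathcal{A}^{m-1}(\ker\omega)$, condition (4) also follows. For condition (3), one tests a candidate $Z\in\ker\omega\cap\ker\Theta_{\mathcal{L}}\cap\ker\d\Theta_{\mathcal{L}}$: the Hessian block forces the $\partial/\partial y^A_\nu$-components of $Z$ to vanish, after which $\inn{Z}\Theta_{\mathcal{L}}=0$ successively eliminates the $\partial/\partial y^A$- and $\partial/\partial s^\nu$-components, giving $Z=0$.

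For the converse, suppose $W^{\mu\nu}_{AB}$ is singular at some point. Then any non-trivial null vector $u^A_\mu\,\partial/\partial y^A_\mu$ of the Hessian yields a $\tau$-vertical vector field annihilated by both $\Theta_{\mathcal{L}}$ and $\d\Theta_{\mathcal{L}}$, enlarging the characteristic distribution and violating condition (3). The main obstacle in writing this up is the bookkeeping for condition (4): one must solve explicitly for the corrective coefficients $B^A_{\nu\mu}$ turning $\partial/\partial s^\nu$ into a genuine Reeb vector field, and confirm that no further Reeb directions appear beyond the $m$ already identified. This is the only step that genuinely exploits the invertibility of $W^{\mu\nu}_{AB}$; everything else reduces to elementary linear algebra on the coordinate expression of $\d\Theta_{\mathcal{L}}$.
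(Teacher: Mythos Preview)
The paper does not supply a proof of this proposition: it appears in the review Section~\ref{reviw}, where the result is quoted from \cite{LGMRR_23,LIR_25,GRR_25} without argument. Your sketch is therefore not competing with anything in the text, and as an outline it is essentially sound---checking the four conditions of Definition~\ref{multicontactdef} for $k=0$ directly on the coordinate expression \eqref{thetacoor1} is exactly how one proves this.

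There is one genuine slip in your treatment of condition~(3). You claim that for $Z=f^A\partial_{y^A}+f^A_\nu\partial_{y^A_\nu}+g^\nu\partial_{s^\nu}\in\mathcal{C}$, the Hessian block first kills the $f^A_\nu$, after which $\inn{Z}\Theta_{\mathcal{L}}=0$ eliminates $f^A$ and $g^\nu$. But $\inn{Z}\Theta_{\mathcal{L}}=0$ only gives the single relation $g^\mu=f^A\,\partial L/\partial y^A_\mu$; it cannot kill $f^A$ and $g^\mu$ independently. And the $\d y^A\wedge\d^{m-1}x_\mu$ coefficient of $\inn{Z}\d\Theta_{\mathcal{L}}$, where the Hessian multiplies $f^B_\nu$, also contains terms in $f^B$ and $g^\lambda$, so it does not isolate $f^B_\nu$ by itself. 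The correct order is the reverse: the $\d y^B_\nu\wedge\d^{m-1}x_\mu$ coefficient of $\inn{Z}\d\Theta_{\mathcal{L}}$ is $W^{\mu\nu}_{AB}f^A$, and since for fixed $\mu$ the columns $(A,\mu)$ of the invertible Hessian are independent, this forces $f^A=0$; then $\inn{Z}\Theta_{\mathcal{L}}=0$ gives $g^\mu=0$; and only then does the $\d y^A\wedge\d^{m-1}x_\mu$ coefficient reduce to $-W^{\mu\nu}_{AB}f^B_\nu=0$, yielding $f^B_\nu=0$ by invertibility. With this reordering your argument goes through, and the rest of the sketch (the Reeb computation, condition~(4), the variational check, and the converse via a Hessian null vector $u^A_\mu\partial_{y^A_\mu}$) is correct.
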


\begin{definition}
A Lagrangian function $L\in\Cinfty({\cal P})$ is said to be {\sl\textbf{regular}} if the equivalent
conditions of Proposition \ref{Prop-regLag2} hold.
Otherwise, $L$ is a {\sl\textbf{singular}} Lagrangian.
\end{definition}

It is important to point out that
singular Lagrangians can induce premulticontact structures, but also structures which are neither multicontact nor premulticontact.
Then, when a Lagrangian $L\in\Cinfty({\cal P})$ originates a (pre)multicontact structure,
the triad $({\cal P},\Theta_{\mathcal{L}},\omega)$ is called a {\sl\textbf{(pre)multicontact Lagrangian system}}.

For a multicontact Lagrangian system $({\cal P},\Theta_{\mathcal{L}},\omega)$,
there exists the inverse 
$(W^{AB}_{\mu\nu})$ of the Hessian matrix,
namely $\displaystyle W^{AB}_{\mu\nu}\frac{\partial^2L}{\partial y^B_\nu \partial y^C_\gamma}=\delta^A_C\delta^\gamma_\mu$.
Then, the the Reeb vector fields are
$$
(R_{\mathcal{L}})_\mu=\frac{\partial}{\partial s^\mu}-W^{BA}_{\gamma\nu}\frac{\partial^2L}{\partial s^\mu\partial y^B_\gamma}\,\frac{\partial}{\partial y^A_\nu} \,,
$$
and, for the dissipation form, we obtain that
\begin{equation}
\displaystyle \sigma_{\Theta_{\mathcal{L}}}=-\parder{L}{s^\mu}\,\d x^\mu\,.
\label{sigmaL}
\end{equation}
If $({\cal P},\Theta_{\mathcal{L}})$ is a premulticontact Lagrangian system, the Reeb vector fields are not uniquely determined;
but the local expression of the dissipation form is the same, \eqref{sigmaL}.

Finally, in order to write the Lagrangian field equations we need to introduce the form
$$
\overline\d\Theta_\mathcal{L} =\d\Theta_\mathcal{L}+\sigma_{\Theta_\mathcal{L}}\wedge\Theta_\mathcal{L}\,,
$$
whose local expression is
\begin{equation}
\label{bardtheta}
\bd\Theta_{\mathcal{L}}=
\d\left(-\frac{\partial L}{\partial y^A_\mu}\d y^A\wedge\d^{m-1}x_\mu +\Big(\frac{\partial L}{\partial y^A_\mu}y^A_\mu-L\Big)\d^m x\right)
-\left(\parder{L}{s^\mu}\frac{\partial L}{\partial y^A_\mu}\d y^A
-\parder{L}{s^\mu}\d s^\mu\right)\wedge\d^mx
\,.
\end{equation}

For a (pre)multicontact Lagrangian system $({\cal P},\Theta_\mathcal{L},\omega)$,
the {\sl\textbf{Lagrangian problem}} consists in finding holonomic sections 
$\bm{\psi}_{\mathcal{L}}=j^1\bm\phi\colon M\to{\cal P}$ which are solutions to the Lagrangian field equations which are derived from the {\sl generalized Herglotz Variational Principle} \cite{GLMR_24},
and are stated equivalently (at least locally) as follows:

\begin{enumerate}[{\rm(1)}]
\item  
These holonomic sections are solutions to the {\sl\textbf{(pre)multicontact Lagrangian equations for holonomic sections}}:
\begin{equation}
\label{sect1H}
(j^1\bm\phi)^*\Theta_{\cal L}= 0  \,,\quad
(j^1\bm\phi)^*\inn{X}\bd\Theta_{\cal L}= 0 \,, \qquad \text{for every }\ X\in\vf({\cal P}) \,,
\end{equation}
or, equivalently, for their canonical prolongations $(j^1\bm\phi)^{(m)}$,
\begin{equation}
\label{sect2H}
\inn{(j^1\bm\phi)^{(m)}}(\Theta_{\mathcal{L}}\circ\bm{\psi})=0 \,,\qquad
\inn{(j^1\bm\phi)^{(m)}}(\bd\Theta_{\mathcal{L}}\circ\bm{\psi}) = 0\,.
\end{equation}
\item 
They are the integral sections of holonomic multivector fields $\bfX_{\mathcal{L}} \in\vf^m({\cal P})$ 
which are solutions to the {\sl\textbf{(pre)multicontact Lagrangian equations for multivector fields}}:
\begin{equation}\label{vfH}
\inn{\mathbf{X}_{\mathcal{L}}}\Theta_{\cal L}=0 \,,\qquad \inn{\bfX_{\mathcal{L}}}\bd\Theta_{\cal L}=0 \,,
\end{equation}
where the condition of $\tau$-transversality 
can be imposed simply by asking $\inn{\bfX_{\mathcal{L}}}\omega=1$.
Equations \eqref{vfH} and the $\tau$-transversality condition hold for every multivector field of the equivalence class $\{\bfX_\mathcal{L}\}$.
The $\tau$-transverse locally decomposable multivector fields which are solutions to \eqref{vfH} are called {\sl\textbf{Lagrangian multivector fields}} and, if they are holonomic, are called {\sl\textbf{Euler--Lagrange multivector fields}} (associated with $\mathcal{L}$).
\item
They are the integral sections of holonomic Ehresmann connections 
$\nabla_{\mathcal{L}}\in\mathcal{T}^1_1({\cal P})$ which are solutions to the {\sl\textbf{(pre)multicontact Lagrangian equations for Ehresmann connections}}:
\begin{equation}
\label{EcL}
\inn{\nabla_{\mathcal{L}}}\Theta_{\mathcal{L}}=(m-1)\Theta_{\mathcal{L}} \,,\qquad
\inn{\nabla_{\mathcal{L}}}\bd\Theta_{\mathcal{L}}=(m-1)\bd\Theta_{\mathcal{L}}  \,.
\end{equation}
Ehresmann connections which are solutions to these equations are called {\sl\textbf{Lagrangian connections}} and, if they are holonomic, are called {\sl\textbf{Euler--Lagrange connections}}.
\end{enumerate}

In a natural chart of coordinates of ${\cal P}$, 
a $\tau$-transverse and locally decomposable $m$-multivector field
satisfying $\inn{\bfX_{\mathcal{L}}}\omega=1$, has the local expression
$$\displaystyle
{\bf X}_{\mathcal{L}}= \bigwedge_{\mu=1}^m
\bigg(\parder{}{x^\mu}+(X_{\mathcal{L}})_\mu^A\frac{\displaystyle\partial}{\displaystyle
\partial y^A}+(X_{\mathcal{L}})_{\mu\nu}^A\frac{\displaystyle\partial}{\displaystyle\partial y^A_\nu}+(X_{\mathcal{L}})_\mu^\nu\,\frac{\partial}{\partial s^\nu}\bigg) \,;
$$
and the Ehresmann connection $\nabla_{\mathcal{L}}$
associated with the class $\{{\bf X}_{\mathcal{L}}\}$ is,
$$\displaystyle
\nabla_{\mathcal{L}}= \d x^\mu\otimes
\bigg(\derpar{}{x^\mu}+(X_{\mathcal{L}})_\mu^A\frac{\displaystyle\partial}{\displaystyle
\partial y^A}+(X_{\mathcal{L}})_{\mu\nu}^A\frac{\displaystyle\partial}{\displaystyle\partial y^A_\nu}+(X_{\mathcal{L}})_\mu^\nu\,\frac{\partial}{\partial s^\nu}\bigg) \,,
$$
and, if $\bfX_{\mathcal{L}}$ and $\nabla_{\mathcal{L}}$ are {\sc sopde}s, they are semi-holonomic and
\begin{equation}
\label{semihol}
y^A_\mu=(X_{\mathcal{L}})_\mu^A \,.
\end{equation}
Then, bearing in mind the local expressions\eqref{thetacoor1} and \eqref{bardtheta},
equations \eqref{vfH} or \eqref{EcL} lead to: 
\begin{align}
(X_{\mathcal{L}})_\mu^\mu&= L \,, \nonumber
\\
\frac{\partial L}{\partial y^A}- \parderr{L}{x^\mu}{y_\mu^A}
-\frac{\partial^2L}{\partial y^B \partial y^A_\mu}y_\mu^B
-\frac{\partial^2L}{\partial s^\nu\partial y^A_\mu}(X_{\mathcal{L}})_\mu^\nu
-\frac{\partial^2L}{\partial y^B_\nu\partial y^A_\mu}(X_{\mathcal{L}})_{\mu\nu}^B
&=-\frac{\partial L}{\partial s^\mu}
\frac{\partial L}{\partial y^A_\mu} \,.
\label{ELeqmvf}
\end{align}
Observe that equations \eqref{ELeqmvf} are compatible when $L$ is regular since the Hessian matrix 
$\displaystyle\bigg(\frac{\partial^2L}{\partial y^B_\nu\partial y^A_\mu}\bigg)$ is regular everywhere (the solution is not unique unless $m=1$), but they could be incompatible in the singular case.
In addition, if $L$ is a regular Lagrangian, the semi-holonomy condition \eqref{semihol} always holds.
Finally, for the holonomic integral sections $\displaystyle\bm{\psi}(x^\nu)=\Big(x^\mu,y^A(x^\nu),\parder{y^A}{x^\mu}(x^\nu),s^\mu(x^\nu)\Big)$ of $\bfX_{\mathcal{L}}$
and $\nabla_{\mathcal{L}}$, these last equations transform into,
\begin{align}
    \label{actioneqs}
     \parder{s^\mu}{x^\mu}&=L\circ{\bm{\psi}} \,, 
     \\
    \label{ELeqs2}
    \frac{\partial}{\partial x^\mu}
    \left(\frac{\displaystyle\partial L}{\partial
    y^B_\mu}\circ{\bm{\psi}}\right)&=
    \left(\frac{\partial L}{\partial y^B}+
    \displaystyle\frac{\partial L}{\partial s^\mu}\displaystyle\frac{\partial L}{\partial y^B_\mu}\right)\circ{\bm{\psi}} \,,
\end{align}
which are the coordinate expression of the Lagrangian equations
\eqref{sect1H} or \eqref{sect2H} for holonomic sections.
s \eqref{ELeqs2}
are called the {\sl\textbf{Herglotz--Euler--Lagrange field equations}}.

When $L$ is not regular and $({\cal P},\Theta_{\mathcal{L}},\omega)$
is a premulticontact system, 
the field equations \eqref{vfH} and \eqref{EcL} have no
solutions everywhere on ${\cal P}$, in general. 
In the most favourable situations,
they have solutions on a submanifold of ${\cal P}$ which is obtained by applying a suitable constraint algorithm.
However, solutions to equations \eqref{vfH} and \eqref{EcL}
are not necessarily {\sc sopde}s and then, if they are integrable, 
their integral sections are not necessarily holonomic;
so this requirement must be imposed as an additional condition.
Thus, the final step consists in finding the maximal submanifold 
${\cal S}_f$ of ${\cal P}$ where there are 
Euler--Lagrange multivector fields and connections which are
solutions to the premulticontact Lagrangian field equations on ${\cal S}_f$ and are tangent to ${\cal S}_f$.

%%%%%%%%%%%%%%%%%%%%%%%%%%%%%%%%%%%%%%%%%%%%%%%%%%%%%%%%%%%%
\subsection{Multicontact Hamiltonian formalism}
\label{multcontHam}

Consider the bundle $\pi\colon E\to M$. 
Let ${\cal M}\pi\equiv\bigwedge\nolimits_2^m\Tan^*E$ denote the bundle of $m$-forms on
$E$ vanishing by contraction with two $\pi$-vertical vector fields which, in field theories, is called the {\sl extended multimomentum bundle}.
It has natural coordinates $(x^\nu,y^A,p^\nu_A,p)$
adapted to the bundle structure ${\cal M}\pi\to E\to M$, and such that
$\omega_M=\d^mx$; so $\dim\, {\cal M}\pi=nm+n+m+1$.
Consider also the quotient manifold
$J^{1*}\pi={\cal M}\pi/\pi^*\bigwedge\nolimits^m\Tan^*M$
($\pi^*\bigwedge\nolimits^m\Tan^*M$ is the bundle of $\pi$-basic $m$-forms on $E$),
which is called the {\sl restricted multimomentum bundle}.
Its natural coordinates are $(x^\mu,y^A,p_A^\mu)$, and so 
$\dim\,J^{1*}\pi=nm+n+m$.

Then, for the Hamiltonian formalism of action-dependent field theories,
in the regular case, consider the {\sl multimomentum phase bundles}
$$
\widetilde{\cal P}={\cal M}\pi\times_M\bigwedge\nolimits^{m-1}\Tan^* M
\ , \qquad
{\cal P}^* =J^{1*}\pi\times_M\bigwedge\nolimits^{m-1}\Tan^*M \,,
$$
which have natural coordinates $(x^\mu, y^A,p_A^\mu,p,s^\mu)$ and$(x^\mu,y^A,p_A^\mu,s^\mu)$, respectively. 
We have the natural projections depicted in the diagram:
\begin{equation}\label{firstdiag}
\begin{tikzcd}[column sep=huge, row sep=huge]
    & \widetilde{\mathcal{P}} = \mathcal{M}\pi \times_M\bigwedge\nolimits^{m-1}\Tan^*M
    \arrow[dl, "\widetilde{\rho}_1", swap, bend right=10]
    \arrow[ddl, "\widetilde{\rho}", swap, bend right=10]
    \arrow[d, "\widetilde{\mathfrak{p}}", swap]
    \arrow[ddr, "\widetilde\tau_1", bend left=20] \\
    \mathcal{M}\pi 
    \arrow[d, "\mathfrak{p}", swap]
    & {\cal P}^*=J^{1*}\pi\times_M\bigwedge\nolimits^{m-1}\Tan^*M
    \arrow[u, "\mathbf{h}", swap, bend right]
    \arrow[dl, "\varrho", swap] 
    \arrow[ddd, "\bar\tau"] 
    \arrow[dr, "\bar\tau_1"] 
    \arrow[ddl, "\varrho_E"] \\
    J^{1*}\pi
    \arrow[d, "\kappa^1", swap]
    \arrow[ddr, "\bar\kappa^1", crossing over, yshift = 2.0]
    & & \bigwedge\nolimits^{m-1}\Tan^*M
    \arrow[ddl, "\tau_\circ"] \\
    E
    \arrow[dr, "\pi", swap, yshift = -1.0] \\
    & M
\end{tikzcd}
\end{equation}

Since ${\cal M}\pi$ and $\bigwedge\nolimits^{m-1}\Tan^*M$ are bundles of forms, they have canonical structures, namely their {\sl tautological forms}
$\Theta\in\df^m({\cal M}\pi)$ and $\theta\in \df^{m-1}(\bigwedge\nolimits^{m-1}\Tan^*M)$.
In particular, if $\mathfrak{p}_1\colon{\cal M}\pi\to E$ is the canonical projection,
$(y, \alpha)$ is a point in $\bigwedge\nolimits^m_2 \mathrm{T}^*E$, so that $y \in E$ and $\alpha \in \bigwedge\nolimits^m_2 \mathrm{T}^*_y E$; then, for every $X_1,\dotsc, X_m \in \mathrm{T}_{(y, \alpha)}(\mathcal{M}\pi)$, the so-called {\sl\textbf{Liouville form}} $\Theta \in \Omega^m(\mathcal{M\pi})$ is defined as:
    \[
\Theta((y, \alpha); X_1, \dotsc, X_m) := \alpha (y; \mathrm{T}_{(y, \alpha)} \mathfrak{p}_1(X_1), \dotsc, \mathrm{T}_{(y, \alpha)} \mathfrak{p}_1(X_m)) \, .
    \]
and we naturally define $\Omega = \mathrm{d} \Theta \in \Omega^{m+1}(\mathcal{M}\pi)$. 
Similarly, we can define the form $\theta \in \bigwedge\nolimits^{m-1} \mathrm{T}^* M$.
Their local expressions are
\begin{equation}
\label{canformmulticot}
\Theta=p_A^\mu\d y^A\wedge\d^{m-1}x_\mu+p\,\d^m x
\,,\quad
\theta=s^\mu\,\d^{m-1}x_\mu \,.
\end{equation}

\begin{definition}
The {\sl\textbf{canonical (special) multicontact form}} of $\widetilde{\cal P}$ is
\begin{equation}
\label{canoncontact}
\widetilde\Theta:= -\widetilde\rho_1^{\,*}\Theta+\d(\widetilde\tau_1^{\,*}\theta) \,. 
\end{equation}
\end{definition}

Its local expression in natural coordinates is
$$
\widetilde\Theta=
-p_A^\mu\d y^A\wedge\d^{m-1}x_\mu-p\,\d^m x+\d s^\mu\wedge \d^{m-1}x_\mu\ . 
$$

Next, we construct the {\sl canonical multicontact Hamiltonian system} associated with a multicontact Lagrangian system $({\cal P},\Theta_\mathcal{L},\omega)$, with $\mathcal{L}=L\,\omega$.
First, denote $\mathcal{F}L_s\colon J^1\pi\to J^{1*}\pi$ the Legendre map associated with the restriction of the Lagrangian function $L\in\Cinfty({\cal P})$
to the fibers of the projection $\tau_1$, which is denoted $L_s\in\Cinfty(J^1\pi)$ (see diagram  \eqref{diagrammulticoLag}).
Then, the {\sl\textbf{restricted Legendre map}} associated with the Lagrangian function $L\in\Cinfty({\cal P})$
is the map
${\cal FL}\colon {\cal P}\to {\cal P}^*$
given by ${\cal FL}:=(\mathcal{F}L_s,{\rm Id}_{\bigwedge\nolimits^{m-1}\Tan^*M})$;
which is locally given by
$$\displaystyle
{\cal FL}(x^\mu,y^A,y^A_\mu,s^\mu)=\Big(x^\mu,y^A,\frac{\partial L}{\partial y^A_\mu},s^\mu\Big) \,. $$
Similarly, the {\sl\textbf{extended Legendre map}} associated with $L$ is the map 
$\widetilde{\mathcal{FL}}\colon{\cal P}\to\widetilde{\cal P}$ given by $\widetilde{\mathcal{FL}}:=(\widetilde{\mathcal{F}L}_s,{\rm Id}_{\bigwedge\nolimits^{m-1}\Tan^*M})$,
and its local expression is
$$\widetilde{\mathcal{FL}}(x^\mu,y^A,y^A_\mu,s^\mu)=\Big(x^\mu,y^A,\frac{\partial L}{\partial y^A_\mu},L-y^A_\mu\parder{L}{y^A_\mu},s^\mu\Big)\ . $$
It is proved that $L$ is a regular Lagrangian function if, and only if,
the Legendre map ${\cal FL}$ is a local diffeomorphism;
in particular, 
$L$ is said to be {\sl\textbf{hyperregular}}
when ${\cal FL}$ is a global diffeomorphism.

\subsubsection*{The (hyper)regular case}

First, consider the case where $L$ is a hyperregular Lagrangian function. We can define a section ${\bf h}:=\widetilde{\mathcal{FL}}\circ\mathcal{FL}^{-1}\colon {\cal P}^*\to\widetilde{\cal P}$ of the projection $\widetilde{\mathfrak{p}}$
(see Diagram \eqref{firstdiag}),
which is locally determined by a function $H\in\Cinfty(U)$, $U\subset{\cal P}^*$,
such that ${\bf h}(x^\mu,y^A,p^\mu_A,s^\mu)=(x^\mu,y^A,p^\mu_A,p=-H(x^\nu,y^B,p^\nu_B,s^\nu),s^\mu)$.
The elements ${\bf h}$ and $H$ are called
a {\sl\textbf{Hamiltonian section}} and its associated {\sl\textbf{Hamiltonian function}}.
Then, denoting $\overline{S}\equiv\overline{\tau}_1^*\theta$, the {\sl\textbf{Hamiltonian form}} associated with ${\bf h}$ is defined by
\begin{equation}
\label{thetaHcoor}
\Theta_{\cal H}:=\mathbf{h}^*\widetilde{\Theta}=
-(\widetilde\varrho_1\circ{\bf h})^*\Theta+\d\overline{S} \,. 
\end{equation}
It is a special variational multicontact form whose local expression is
\begin{equation}
\label{thetacoor2}
\Theta_{\cal H}=
-p_A^\mu\d y^A\wedge\d^{m-1}x_\mu+H\,\d^m x+\d s^\mu\wedge \d^{m-1}x_\mu\ ,
\end{equation}
and the dissipation form is expressed as
\begin{equation}
\label{sigmaH}
\sigma_{\cal H}=\parder{H}{s^\mu}\,\d x^\mu\,.
\end{equation}
The triad $({\cal P}^*,\Theta_{\cal H},\omega=(\bar{\tau}\circ\widetilde{\mathfrak{p}})^*\omega_M)$
is called the {\sl\textbf{canonical multicontact Hamiltonian system}} associated with the multicontact Lagrangian system
$({\cal P},\Theta_\mathcal{L},\omega)$.
Bearing in mind the coordinate expressions \eqref{thetacoor1}, \eqref{sigmaL}, \eqref{thetacoor2}, and \eqref{sigmaH},
we obtain $\mathcal{FL}^*\Theta_\mathcal{H}=\Theta_{\mathcal{L}}$ and
$\mathcal{FL}^*\bd\Theta_\mathcal{H}=\bd\Theta_{\mathcal{L}}$.

If $\mathcal{L}$ is regular, this construction is local.

For this multicontact Hamiltonian system $({\cal P}^*,\Theta_{\cal H},\omega)$, 
the field equations can be stated in three equivalent ways (at least locally). 
Thus, the {\sl\textbf{Hamiltonian problem}} consists of finding sections $\bm{\psi}_{\mathcal{H}}\colon M\to{\cal P}^*$ such that:
\begin{enumerate}[{\rm(1)}]
\item  
They are solutions to the {\sl\textbf{(pre)multicontact Hamilton--de Donder--Weyl equations for sections}}:
\begin{equation}
\label{sect1H0}
\bm{\psi}_{\mathcal{H}}^*\Theta_{\cal H}= 0  \,, \quad
\bm{\psi}_{\mathcal{H}}^*\inn{Y}\bd\Theta_{\cal H}= 0 \,, \qquad \text{for every }\ Y\in\vf({\cal P}^*) \, ,
\end{equation}
or, equivalently, for their canonical prolongations,
\begin{equation}
\label{sect2H0}
\inn{\bm{\psi}_{\mathcal{H}}^{(m)}}(\Theta_{\cal H}\circ\bm{\psi}_{\mathcal{H}})=0 \,,\qquad
\inn{\bm{\psi}_{\mathcal{H}}^{(m)}}(\bd\Theta_{\cal H}\circ\bm{\psi}_{\mathcal{H}}) = 0\,.
\end{equation}
\item 
They are the integral sections of $\overline\tau$-transverse and integrable multivector fields ${\bf X}_{\cal H}\in\vf^m({\cal P}^*)$ which are solutions to the {\sl\textbf{(pre)multicontact Hamilton--de Donder--Weyl for multivector fields}}:
\begin{equation}
\label{vfH2}
\inn{{\bf X}_{\cal H}}\Theta_{\cal H}=0 \,, \qquad \inn{\bfX_{\cal H}}\bd\Theta_{\cal H}=0 \,.
\end{equation}
Equations \eqref{vfH2} and the $\overline\tau$-transversality condition hold for every multivector field of the equivalence class $\{\bfX_\mathcal{H}\}$,
and the transversality condition can be imposed by asking $\inn{\bfX_\mathcal{H}}\omega=1$.
\item
They are the integral sections of Ehresmann connections
$\nabla_{\cal H}\in\mathcal{T}^1_1({\cal P}^*)$
which are solutions to the {\sl\textbf{(pre)multicontact Hamilton--de Donder--Weyl equations for Ehresmann connections}}:
\begin{equation}
\label{EcH}
\inn{\nabla_{\cal H}}\Theta_{\cal H}=(m-1)\Theta_{\cal H} \,,\qquad
\inn{\nabla_{\cal H}}\bd\Theta_{\cal H}=(m-1)\bd\Theta_{\cal H}  \,.
\end{equation}
\end{enumerate}

In natural coordinates, a $\overline\tau$-transverse, locally decomposable multivector field and an Ehresmann connection in $\mathcal{P}^*$ are:
\begin{align}
\label{Hammv}
{\bf X}_{\cal H} &=
\bigwedge_{\mu=0}^{m-1}\Big(\parder{}{x^\mu}+ (X_{\cal H})^A_\mu\frac{\partial}{\partial y^A}+
(X_{\cal H})_{\mu A}^\nu\frac{\partial}{\partial p_A^\nu}+(X_{\cal H})_\mu^\nu\parder{}{s^\nu}\Big) \,, \\
\nabla_{\cal H} &= \d x^\mu\otimes\Big(\derpar{}{x^\mu}+ (X_\mu)^A\frac{\partial}{\partial y^A}+
(X_\mu)_A^\nu\frac{\partial}{\partial p_A^\nu}+(X_\mu)^\nu\derpar{}{s^\nu}\Big) \,.
\end{align}
If they are solutions to equations \eqref{vfH2} and \eqref{EcH},
keeping in mind the local expression \eqref{thetacoor2},
these field equations lead to
\begin{gather}
(X_{\cal H})_\mu^\mu = 
p_A^\mu\,\frac{\partial H}{\partial p^\mu_A}-H \,,\\
(X_{\cal H})^A_\mu=\frac{\partial H}{\partial p^\mu_A} \,,  \qquad
(X_{\cal H})_{\mu a}^\mu= 
-\Big(\frac{\partial H}{\partial y^A}+ p_A^\mu\,\frac{\partial H}{\partial s^\mu}\Big) \,.  
\end{gather}
Then, the integral sections $\bm{\psi}(x^\nu)=(x^\mu,y^A(x^\nu),p^\mu_A(x^\nu),s^\mu(x^\nu))$
of the integrable solutions to \eqref{vfH2} and \eqref{EcH}
are the solutions to the equations \eqref{sect1H0} and \eqref{sect2H0} which read as
\begin{gather}
\frac{\partial s^\mu}{\partial x^\mu} = \Big(p_A^\mu\,\frac{\partial H}{\partial p^\mu_A}-H\Big)\circ\bm{\psi}_{\mathcal{H}}\,, 
\label{actionHameqs} \\ 
\frac{\partial y^A}{\partial x^\mu}= \frac{\partial H}{\partial p^\mu_A}\circ\bm{\psi}_{\mathcal{H}} \,,  \qquad
\frac{\partial p^\mu_A}{\partial x^\mu} = 
-\Big(\frac{\partial H}{\partial y^A}+ p_A^\mu\,\frac{\partial H}{\partial s^\mu}\Big)\circ\bm{\psi}_{\mathcal{H}} \,.
\label{HHDWeqs}
\end{gather}
Equations \eqref{HHDWeqs} are called the {\sl\textbf{Herglotz--Hamilton--de Donder--Weyl equations}}
for action-dependent field theories.
These equations are compatible in ${\cal P}^*$.

\subsubsection*{The singular (almost-regular) case}

For singular Lagrangians, the existence of
an associated Hamiltonian formalism is not assured, in general,
unless some minimal regularity conditions are assumed:

\begin{definition}
A singular Lagrangian $L\in\Cinfty({\cal P})$ is {\sl\textbf{almost-regular}} if
\ (i) ${\cal P}_\circ^*:= {\cal FL}({\cal P})$
is a submanifold of ${\cal P}^*$,
\ (ii) ${\cal F}L$ is a submersion onto its image,
and \ (iii)
the fibers ${\cal FL}^{-1}({\rm p})$ are connected submanifolds of ${\cal P}$, for every ${\rm p}\in {\cal P}_\circ^*$.
\end{definition}

We have that $L\in\Cinfty({\cal P})$ is an almost-regular Lagrangian on ${\cal P}$
if, and only if, $L_s\in\Cinfty(J^1\pi)$ is an almost-regular Lagrangian on $J^1 \pi$.
Then, we consider the submanifolds
\begin{gather}
    \widetilde P_\circ=\widetilde{\mathcal{F}L}_s(J^1\pi)\,,\qquad
    \widetilde{\cal P}_\circ=\widetilde P_\circ\times\bigwedge\nolimits^{m-1}\Tan^*M=
    \widetilde{\cal FL}({\cal P})\,,\\
    P_\circ=\mathcal{F}L_s(J^1\pi)\,,\qquad {\cal P}^*_\circ=P_\circ\times\bigwedge\nolimits^{m-1}\Tan^*M={\cal FL}({\cal P})\,,
\end{gather}
and we have the following diagram:
\begin{equation}
\begin{tikzcd}[column sep=huge, row sep=large]
    & \mathcal{P}^* = J^{1*}\pi\times_M\bigwedge^{m-1}\Tan^*M\\
    & {\cal P}_\circ^* = P_\circ\times_M\bigwedge\nolimits^{m-1}\Tan^*M
    \arrow[u, hook, "\jmath_\circ"]
    \arrow[dl, "\varrho_\circ", swap] 
    \arrow[ddd, "\widetilde\tau_\circ"] 
    \arrow[dr, "\upsilon"] \\
    P_\circ
    \arrow[d, "\kappa^1_\circ", swap]
    \arrow[ddr, "\bar\kappa^1_\circ", yshift= 2.0]
    & & \bigwedge\nolimits^{m-1}\Tan^*M
    \arrow[ddl, "\tau_\circ"] \\
    E
    \arrow[dr, "\pi", swap, yshift= -1.0] \\
    & M
\end{tikzcd}
\end{equation}

The restriction of $\widetilde{\mathfrak{p}}$ to $\widetilde{\cal P}_\circ$, denoted
$\mathfrak{p}_\circ\colon\widetilde{\cal P}_\circ\to{\cal P}^*_\circ$, is a diffeomorphism
(the proof is the same as for multisymplectic Hamiltonian singular field theories \cite{LMM_96a,LMM_96}).
Then, we can take ${\bf h}_\circ:=\mathfrak{p}_\circ^{-1}$
and construct the Hamiltonian sections
$\mathbf{h}_\circ:=\widetilde{\jmath_\circ}\circ\mathbf{h}_\circ\colon \mathcal{P}_\circ^*\to\widetilde{\mathcal{P}}_\circ$,
and $\widetilde{\mathbf{h}}\colon \mathcal{P}^*\to\widetilde{\mathcal{P}}$ defined by
$\widetilde{\bf h}_\circ=\mathbf{h}_\circ \circ\jmath_\circ$.
The diagram corresponding to the Hamiltonian formalism is now the following:
\begin{equation}\label{diaghamsec0}
\begin{tikzcd}[column sep=huge, row sep=huge]
    & \widetilde{\mathcal{P}}_\circ
    \arrow[r, "\widetilde{\jmath_\circ}"]
    \arrow[d, "\mathfrak{p}_\circ", swap]
    & \widetilde{\mathcal{P}}
    \arrow[d, "\widetilde{\mathfrak{p}}", swap] \\
    \mathcal{P}
    \arrow[ur, "\widetilde{\mathcal{FL}_\circ}"]
    \arrow[r, "\mathcal{FL}_\circ", swap]
    & \mathcal{P}_\circ^*
    \arrow[ur, "\widetilde{\mathbf{h}}_\circ", swap]
    \arrow[r, "\jmath_\circ", swap]
    \arrow[u, "\mathbf{h}_\circ", swap, bend right]
    & \mathcal{P}^*
    \arrow[u, "\widetilde{\mathbf{h}}", swap, bend right]
\end{tikzcd}
\end{equation}

Then, we define the Hamilton--Cartan $m$-form
$\Theta^\circ_{\bf h}:=\widetilde{\bf h}_\circ^*\widetilde\Theta\in\df^m({\cal P}^*_\circ)$
and, taking $\overline{S}_\circ=\jmath_\circ^*\overline{S}$ we can construct the form
$$
\Theta^\circ_{\cal H}=-\Theta_{\bf h}^\circ+\d\overline{S}_\circ \in\df^m({\cal P}_\circ^*)\,,
$$
whose local expression is
$$
\Theta_{\cal H}^\circ=
\jmath_\circ^{\,*}(-p_i^\nu\d y^i\wedge\d^{m-1}x_\mu)+H_\circ\,\d^mx+\d s^\mu\wedge\d x^{m-1}x_\mu \,,
$$
where $H_\circ\in\Cinfty(P_\circ)$ is the Hamiltonian
function such that $E_\mathcal{L} = \mathcal{F}L_{s\circ}^{\ *}\,H_\circ$.
Denoting by
$\mathcal{FL}_\circ\colon{\cal P}\to{\cal P}^*$ the restriction of
$\mathcal{FL}$ on $P_\circ$, given as
$\mathcal{FL}={\rm j}_\circ\circ\mathcal{FL}_\circ$;
we have $\Theta_{\mathcal{L}}=\mathcal{FL}_\circ^{\ *}\,\Theta_{\cal H}^\circ$
and $\mathcal{FL}_\circ^*\bd\Theta_\mathcal{H}^\circ=\bd\Theta_{\mathcal{L}}$.

The pair $(\Theta_{\cal H}^\circ,\omega)$ is a premulticontact
structure if, and only if,
$(\Theta_{\mathcal{L}},\omega)$ is a premulticontact structure.
In that case, the triple $({\cal P}_\circ^*,\Theta_{\cal H}^\circ,\omega)$ 
is the {\sl \textbf{premulticontact Hamiltonian system}}
associated to the premulticontact Lagrangian system $({\cal P},\Theta_{\mathcal{L}},\omega)$.

\begin{remark}
In the premulticontact case,
for the premulticontact Hamiltonian system $({\cal P}_\circ^*,\Theta_{\cal H}^\circ,\omega)$, 
the field equations for sections $\bm{\psi}_{\mathcal{H}}^\circ\colon M\to{\cal P}_\circ^*$, 
multivector fields ${\bf X}_{{\cal H}_\circ}\in\vf^m({\cal P}_\circ^*)$, 
and Ehresmann connections $\nabla_{{\cal H}_\circ}\in\mathcal{T}^1_1({\cal P}_\circ^*)$
are like \eqref{sect1H0}, \eqref{sect2H0}, \eqref{vfH2}, and \eqref{EcH}
with $\Theta_{\cal H}^\circ$ instead of $\Theta_{\cal H}$;
that is,
\begin{align}
\label{sect1Hb0}
& \bm{\psi}_\circ^*\Theta_{\cal H}^\circ= 0  \,, &&
\bm{\psi}_\circ^*\innp{Y_\circ}\overline{\Omega}_{\cal H}^\circ= 0 \,, && \text{for every }\ Y_\circ\in\vf({\cal }P^*_\circ) \,, \\
\label{vfHb0}
&\innp{\bfX_{{\cal H}_\circ}}\Theta_{\cal H}^\circ=0 \,, && \innp{\bfX_{{\cal H}_\circ}}\overline{\Omega}_{\cal H}^\circ=0 \, , \\
\label{EcH00}
&\inn{\nabla_{{\cal H}_\circ}}\Theta_{\cal H}^\circ=(m-1)\Theta_{\cal H}^\circ \,,&&
\inn{\nabla_{{\cal H}_\circ}}\bd\Theta_{\cal H}^\circ=(m-1)\bd\Theta_{\cal H}^\circ \,.
\end{align}
In general, these equations are not compatible on ${\cal P}_\circ^*$ 
and a constraint algorithm must be implemented to find a final constraint submanifold ${\cal P}_f^*\hookrightarrow{\cal P}_\circ^*$
%(if it exists) 
where there are integrable
multivector fields and connections solution to the Hamiltonian field equations, which are tangent to ${\cal P}_f^*$.
\end{remark}

\section{Skinner--Rusk formulation of multicontact field theories}
\label{skinner--Rusk}

\subsection{Geometric framework}

Now, we develop the Skinner--Rusk formulation for the multicontact setting of action-dependent field theories.
It is based on the corresponding unified formulations for multisymplectic field theories \cite{LMM_03,ELMMR_04,PR_15} and for contact
and $k$-contact systems \cite{LGLRR_20,GRR_22,RT_23}.

\begin{definition}
The {\sl\textbf{extended}} and {\sl \textbf{restricted jet-multimomentum bundles}}
(which, sometimes, are also called the {\sl\textbf{extended}} and {\sl \textbf{restricted Pontryagin bundles}})
are
\[\mathcal{W}:=(J^1 \pi \times_E \mathcal{M}\pi) \times_M \bigwedge\nolimits^{m-1} \mathrm{T}^*M\,, \qquad \mathcal{W}_r:= (J^1\pi\times_E J^{1*}\pi) \times_M \bigwedge\nolimits^{m-1} \mathrm{T}^*M\,,\]
\end{definition}

They are endowed with natural coordinates $(x^\mu, y^A, y^A_\mu, p^\mu_A, p, s^\mu)$ and $(x^\mu, y^A, y^A_\mu, p^\mu_A, s^\mu)$, respectively. 
There is a natural projection between the extended and restricted bundles, $\mu_\mathcal{W}\colon \mathcal{W} \rightarrow \mathcal{W}_r$, and also the projections induced by the bundle structure:
\begin{equation}
\begin{tikzcd}[column sep=huge, row sep=large]
    \mathcal{W}_r 
    \arrow[ddr, "\rho_2^r", bend right]
    \arrow[ddddr, "\rho_1^r", bend right=25]
    \arrow[dddddr, "\rho_E^r", swap, bend right=30, pos=0.6]
    \arrow[ddddddr, "\rho_M^r", swap, bend right=35, pos=0.6]
    & & & \mathcal{W}
    \arrow[lll, "\mu_\mathcal{W}", swap]
    \arrow[dll, "\overline{\pr}", swap]
    \arrow[dl, "\rho_2", pos=0.6, swap]
    \arrow[llddd, "\pr", swap, bend left=20]
    \arrow[lldddd, "\rho_1", bend left=25, pos=0.6]
    \arrow[llddddd, "\rho_E", bend left=30, pos=0.6]
    \arrow[lldddddd, "\rho_M", bend left=35, pos=0.6]
    \\
    & J^{1*}\pi\times_M\bigwedge^{m-1}\cT M
    \arrow[d, "\varrho"]
    & \mathcal{M}\pi
    \arrow[dl, "\mu"]
    \\
    & J^{1*}\pi
    \\
    & J^1\pi\times_M\bigwedge^{m-1}\cT M 
    \arrow[d, "\rho"]
    \\
    & J^1\pi
    \arrow[d, "\pi^1"]
    \\
    & E 
    \arrow[d, "\pi"]
    \\
    & M
\end{tikzcd}
\end{equation}

Moreover, the bundle $\mathcal{W}$ is endowed with the following canonical structures:

    \begin{definition}
The {\sl\textbf{coupling $m$-form}}
in $\mathcal{W}$, is a $m$-form along $\rho_M$ denoted by $\mathcal{C}$ and defined as follows: for every $\bar{y} \in J^1_y\pi$ where $\bar{\pi}^1(\bar{y}) = \pi(y) = x \in M$, and for every $\mathbf{p} \in \mathcal{M}_y\pi$, let $w := (\bar{y}, \mathbf{p}) \in \mathcal{W}_y$, then
\[ \mathcal{C}(w) := (\mathrm{T}_x \phi)^* \mathbf{p}\]
where $\phi\colon M \rightarrow E$ is a section satisfying $j^1\phi(x) = \bar{y}$.

We commit an abuse of notation and denote also by $\mathcal{C}\in\df^m(\mathcal{W})$, the $\rho_M$-semibasic form associated with this coupling form. Then, there exists a function $C\in\Cinfty (\mathcal{W})$ such that $\mathcal{C} = C(\rho^*_M\omega)$, and 
$$\mathcal{C}(w)= (p + p^\mu_A y^A_\mu) \, \d ^m x\, .$$
    \end{definition}
    
Observe that, really, the coupling form is just a canonical structure on the subbundle  $J^1 \pi \oplus \mathcal{M}\pi$ of the trivial bundle $\mathcal{W}:=(J^1 \pi \times_E \mathcal{M}\pi) \times_M \bigwedge\nolimits^{m-1} \mathrm{T}^*M$.

Now, using the canonical tautological forms $\Theta \in \Omega^m(\mathcal{M\pi})$ and $\theta \in \bigwedge\nolimits^{m-1} \mathrm{T}^* M$,
we introduce:
    
\begin{definition}
The {\sl\textbf{canonical $m$-form}} in $\mathcal{W}$, $\Theta_{\mathcal{W}}\in\df^m(\mathcal{W})$,
is defined as 
$$\Theta_{\mathcal{W}} := -\rho_2^*\Theta + \mathrm{d}(\bar{\tau}_2^* \theta) \, ,
$$
where the diagram below shows the projections used:
\begin{equation}
\begin{tikzcd}[column sep=huge, row sep=huge]
    \mathcal{W}_r
    \arrow[ddr, "\overline{\tau}_2^r", bend left=20, pos=0.75]
    &&
    \mathcal{W}
    \arrow[ll, "\mu_{\mathcal{W}}", swap]
    \arrow[d, "\overline{\pr}"]
    \arrow[dll, "\pr", crossing over]
    \arrow[ddl, "\overline{\tau}_2", bend right=20, yshift = -1.0]
    \\
    \mathcal{P} = J^1\pi\times_M\bigwedge^{m-1}\cT M
    \arrow[d, "\rho", swap]
    \arrow[dr, "\tau^1", swap]
    &&
    \mathcal{P}^* = J^{1*}\pi\times_M\bigwedge^{m-1}\cT M
    \arrow[d, "\varrho"]
    \arrow[dl, "\overline{\tau}^1"]
    \\
    J^1\pi
    &
    \bigwedge^{m-1}\cT M
    &
    J^{1*}\pi
\end{tikzcd}
\end{equation}
\end{definition}
        
In coordinates,
\[ \Theta_\mathcal{W} = -p_A^\mu \, \mathrm{d}y^A \wedge \mathrm{d} ^{m-1}x_\mu - p \,  \mathrm{d}^mx + \mathrm{d}s^\mu \wedge \mathrm{d}^{m-1}x_\mu \, , \]
and 
\[ \d\Theta_\mathcal{W} = -\d p_A^\mu \wedge \mathrm{d}y^A \wedge \mathrm{d} ^{m-1}x_\mu - \d p \wedge \mathrm{d}^mx \, , \]

\begin{remark}
From the above local expression, we obtain,
\begin{align}
\ker\Theta_{\mathcal{W}}=&\left\langle \frac{\partial}{\partial y^A}+p^\mu_A\frac{\partial}{\partial s^\mu},\frac{\partial}{\partial p_A^\mu},\frac{\partial}{\partial y^A_\mu},\frac{\partial}{\partial p}\right\rangle \, , \\
\ker\d\Theta_{\mathcal{W}}=&\left\langle\frac{\partial}{\partial y^A_\mu},\frac{\partial}{\partial s^\mu}\right\rangle \, , \\
\ker\omega=&\left\langle\frac{\partial}{\partial y^A},\frac{\partial}{\partial y^A_\mu},\frac{\partial}{\partial p_A^\mu},\frac{\partial}{\partial p},\frac{\partial}{\partial s^\mu}\right\rangle \, , \\
\mathfrak{R}=&\left\langle\frac{\partial}{\partial p},\frac{\partial}{\partial s^\mu}
\right\rangle \, , \qquad \text{ (since ${\cal A}^m(\ker\omega)=\left<\d^mx\right>$)} \, .
\end{align}
Notice that, according to Definition \ref{multicont}, $\Theta_{\mathcal{W}}$ is a premulticontact form in $\mathcal{W}$, since $\ker\Theta_{\mathcal{W}}\cap\ker\d\Theta_{\mathcal{W}}\neq\{ 0\}$ and 
$\ker\d\Theta_{\mathcal{W}}\not=\{ 0\}$. 
However, the pair $(\Theta_{\mathcal{W}},\omega)$ does not define a special (pre)multicontact structure in $\mathcal{W}$, since
$$
\ker \Theta_\mathcal{W}\cap\ker\d\Theta_\mathcal{W}\cap\ker\omega =
\left\langle\frac{\partial}{\partial y^A_\mu}\right\rangle \, ;
$$
so, $\rank(\ker\Theta_\mathcal{W}\cap\ker\d\Theta_\mathcal{W}\cap\ker\omega)=mn \neq \rank\mathfrak{R}-m=1$ (except when $m=n=1$),
and then conditions (2) and (3) of Definition \ref{multicontactdef} are, in general, incompatible.
\end{remark}

\begin{definition}
Given a Lagrangian density $\mathcal{L}\in\df^m(\mathcal{P})$,
we denote also $\mathcal{L} = \pr^*\mathcal{L}\in \Omega^m(\mathcal{W})$.
Then, the {\sl\textbf{Hamiltonian submanifold}}
 ${\rm j}_\circ\colon \mathcal{W}_\circ \hookrightarrow \mathcal{W}$ is defined by
\[\mathcal{W}_\circ = \{ w \in \mathcal{W} \mid \mathcal{L}(w) = \mathcal{C}(w)\} \, ,\]
\end{definition}

Since $\mathcal{L}$ is a $\tau$-semibasic form in $\mathcal{P}$, then there is a function $L\in\Cinfty(\mathcal{P})$ such that $\mathcal{L}=L\,(\tau^*\omega_M)\in\df^m(\mathcal{P})\equiv L\,\omega$, and its pull-back $\pr^*L\in\Cinfty(\mathcal{W})$ is also denoted by $L\in\Cinfty(\mathcal{W})$; so $\mathcal{L}=L\,((\pr\circ\tau)^*\omega_M)\equiv L\,\omega\in\df^m(\mathcal{W})$. 
From this we obtain the local description of the constraint function defining $\mathcal{W}_\circ$, that is,
    \[ C-L = p + p^\mu_Ay^A_\mu - L(x^\nu, y^B, y^B_\nu, s^\mu) = 0 \, .\]

The local coordinates in $\mathcal{W}_\circ$ are $(x^\mu, y^A, y^A_\mu, p^\mu_A, s^\mu)$, and then we have ${\rm j}_\circ(x^\mu, y^A, y^A_\mu, p^\mu_A, s^\mu) = (x^\mu, y^A, y^A_\mu, p^\mu_A, L - y^A_\mu p^\mu_A, s^\mu)$.

\begin{proposition}
\label{diffeom}
    $\mathcal{W}_\circ$ is a $1$-codimensional $\mu_\mathcal{W}$-transverse submanifold of $\mathcal{W}$, and is diffeomorphic to $\mathcal{W}_r$ (this diffeomorphism is $\mu_\mathcal{W}\circ{\rm j}_\circ\colon\mathcal{W}_\circ\to\mathcal{W}_r$).

\end{proposition}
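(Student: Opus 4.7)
The plan is straightforward: exhibit $\mathcal{W}_\circ$ as the zero set of an explicit defining function, verify the regularity conditions locally, and then read off the diffeomorphism directly from coordinates.

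First, I would show that $\mathcal{W}_\circ$ is a regular $1$-codimensional submanifold. The subset $\mathcal{W}_\circ$ is by construction the zero locus of the smooth function $\xi := C - L \in \Cinfty(\mathcal{W})$, whose local expression is $\xi = p + p^\mu_A y^A_\mu - L(x^\nu,y^B,y^B_\nu,s^\mu)$. Since $\partial\xi/\partial p = 1 \neq 0$ everywhere, $0$ is a regular value of $\xi$, so by the regular value theorem $\mathcal{W}_\circ = \xi^{-1}(0)$ is a closed embedded submanifold of codimension~$1$.

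Next, I would check $\mu_\mathcal{W}$-transversality. The projection $\mu_\mathcal{W}\colon\mathcal{W}\to\mathcal{W}_r$ is locally given by dropping the coordinate $p$, so its vertical distribution is $\ker \Tan\mu_\mathcal{W} = \langle \partial/\partial p\rangle$. Transversality amounts to $\Tan_w\mathcal{W}_\circ + \ker\Tan_w\mu_\mathcal{W} = \Tan_w\mathcal{W}$ for every $w\in\mathcal{W}_\circ$. Because $\inn{\partial/\partial p}\d\xi = 1 \neq 0$, the vector $\partial/\partial p$ does not lie in $\Tan_w\mathcal{W}_\circ = \ker\d\xi|_w$; adding this one-dimensional complement to the codimension-$1$ subspace $\Tan_w\mathcal{W}_\circ$ recovers all of $\Tan_w\mathcal{W}$.

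Finally, I would establish the diffeomorphism by writing the map in coordinates. Using the local expression ${\rm j}_\circ(x^\mu, y^A, y^A_\mu, p^\mu_A, s^\mu) = (x^\mu, y^A, y^A_\mu, p^\mu_A, L - y^A_\mu p^\mu_A, s^\mu)$ given just before the statement, the composition $\Phi := \mu_\mathcal{W}\circ{\rm j}_\circ$ becomes the identity in the shared coordinates $(x^\mu,y^A,y^A_\mu,p^\mu_A,s^\mu)$; it is clearly smooth and bijective. A smooth inverse is produced explicitly by
\begin{equation}
\Phi^{-1}(x^\mu, y^A, y^A_\mu, p^\mu_A, s^\mu) = (x^\mu, y^A, y^A_\mu, p^\mu_A,\, L(x^\nu,y^B,y^B_\nu,s^\mu) - y^A_\mu p^\mu_A,\, s^\mu) \,,
\end{equation}
which lands in $\mathcal{W}_\circ$ since the constraint $p + p^\mu_Ay^A_\mu - L = 0$ is satisfied by construction. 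Both $\Phi$ and $\Phi^{-1}$ are smooth, so $\Phi$ is a diffeomorphism, completing the proof.

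There is no real obstacle here: the only subtlety is that the arguments are coordinate-based, so I would briefly note that the defining condition $\mathcal{L}(w)=\mathcal{C}(w)$ is intrinsic (a global equality of $m$-forms along $\rho_M$), guaranteeing that the submanifold and the diffeomorphism are well-defined independently of the chart, even though the computations are carried out in natural bundle coordinates.
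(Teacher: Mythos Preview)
Your proof is correct and follows essentially the same approach as the paper: both use $\partial(C-L)/\partial p = 1$ to obtain codimension~$1$ and $\mu_\mathcal{W}$-transversality, and both identify $\mu_\mathcal{W}\circ{\rm j}_\circ$ as a diffeomorphism via coordinates. The only minor difference is that the paper proves bijectivity by separate injectivity and surjectivity arguments phrased in terms of points $(\bar y,\mathbf{p},s)$ and equivalence classes $[\mathbf{p}]$, whereas you exhibit the explicit smooth inverse directly; your version is slightly more economical and makes smoothness of the inverse explicit.
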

\begin{proof}
    For every $(x^\mu, y^A, y^A_\mu, p^\mu_A, s^\mu) = (\bar{y}, \mathbf{p}, s) \in \mathcal{W}_\circ$, we have $L(\bar{y}, s) = L(\bar{y}, \mathbf{p}, s) = C(\bar{y}, \mathbf{p}, s)$. Moreover,
    \[
    (\mu_\mathcal{W} \circ {\rm j}_\circ) (\bar{y}, \mathbf{p}, s) = \mu_\mathcal{W}(\bar{y}, \mathbf{p}, s) = (\bar{y}, [\mathbf{p}], s) \, .
    \]
    
    First, $\mu_\mathcal{W}\circ{\rm j}_\circ$ is injective.
    In fact, let $(\bar{y}_1,\mathbf{p}_1,s_1) \, , (\bar{y}_2, \mathbf{p}_2,s_2)\in\mathcal{W}_\circ$, then,
    \[ 
    (\mu_\mathcal{W} \circ {\rm j}_\circ)(\bar{y}_1, \mathbf{p}_1, s_1) = (\mu_\mathcal{W} \circ {\rm j}_\circ)(\bar{y}_2, \mathbf{p}_2,s_2)\Rightarrow(\bar{y}_1,[\mathbf{p}_1],s_1) = (\bar{y}_2, [\mathbf{p}_2], s_2) \, .
    \] 
    Hence $\bar{y}_1=\bar{y}_2$, $[\mathbf{p}_1]=[\mathbf{p}_2]$, $s_1=s_2$, and
    $$
    L(\bar{y}_1,s_1)=L(\bar{y}_2,s_2)=C(\bar{y}_2,\mathbf{p}_2,s_2)= C(\bar{y}_1,\mathbf{p}_1,s_1) \, .
    $$
    In particular, in local coordinates, the equality 
    $C(\bar{y}_2,\mathbf{p}_2,s_2)=C(\bar{y}_1,\mathbf{p}_1,s_1)$ gives
    \[ 
    p(\mathbf{p_2}) + p_A^\mu(\mathbf{p}_2)\, y^A_\mu(\bar{y}_2) = p(\mathbf{p_1}) + p_A^\mu(\mathbf{p}_1)\, y^A_\mu(\bar{y}_1) \, .
    \]
    As $[\mathbf{p}_1] = [\mathbf{p}_2]$, it follows that $p^\mu_A(\mathbf{p}_1) = p^\mu_A([\mathbf{p}_1]) = p^\mu_A([\mathbf{p}_2]) = p^\mu_A(\mathbf{p}_2)$, then also $p(\mathbf{p}_1) = p(\mathbf{p}_2)$. 
    This proves the injectivity.

    Second, $\mu_\mathcal{W}\circ{\rm j}_\circ$ is surjective. 
    In fact, let $(\bar{y},[\mathbf{p}],s)\in\mathcal{W}_r$, then there exists a $(\bar{y},\mathbf{q},s)\in{\rm j}_\circ(\mathcal{W}_\circ)$ such that
    \[p^\mu_A(\mathbf{q})=p^\mu_A([\mathbf{p}]) \text{ and } p(\mathbf{q})=p^\mu_A([\mathbf{p}]) \, y^A_\mu(\bar{y}) - L(\bar{y}) \, ,\]
    that gives $[\mathbf{q}]=[\mathbf{p}]$, since $p^\mu_A([\mathbf{p}])=p^\mu_A(\mathbf{p})$.
    This proves the surjectivity.

    Finally, $\mu_\mathcal{W}$-transversality and $1$-codimension as a submanifold of $\mathcal{W}$, follows from the fact that
    $\displaystyle\ker (\mu_\mathcal{W})_*= \left<\frac{\partial}{\partial p}\right>$ 
    and then, for the constraint defining $\mathcal{W}_\circ$ in $\mathcal{W}$, 
    we have that $\displaystyle\frac{\partial}{\partial p}(L- C)= 1$.
\end{proof}

Now, to define a premulticontact structure in $\mathcal{W}_\circ$, we consider the form 
$$
\Theta_\circ := {\rm j}_\circ^*\Theta_\mathcal{W} \in \Omega^m(\mathcal{W}_\circ)\, ,
$$
whose local expression is:
\begin{equation} 
\label{Theta0}
\Theta_\circ = -p_A^\mu \, \mathrm{d}y^A \wedge \mathrm{d} ^{m-1}x_\mu-(L-y^A_\mu p^\mu_A) \,  \mathrm{d}^m x + \mathrm{d}s^\mu \wedge \mathrm{d}^{m-1}x_\mu \, ,
\end{equation} 
and, for which, 
$$
\d \Theta_\circ =  -\d p_A^\mu \wedge \mathrm{d}y^A \wedge \mathrm{d} ^{m-1}x_\mu - \left( \frac{\partial L}{\partial y^A} \d y^A + \Big(\frac{\partial L}{\partial y^A_\mu}-p^\mu_A\Big) \d y^A_\mu +  \frac{\partial L}{\partial s^\mu} \d s^\mu - y^A_\mu \d p^\mu_A\right) \wedge  \mathrm{d}^m x \, . 
$$
Now, taking a generic vector field $\displaystyle X= f^\mu \frac{\partial}{\partial x^\mu} + f^A \frac{\partial}{\partial y^A} + f^A_\mu \frac{\partial}{\partial y^A_\mu} + g^\mu_A \frac{\partial}{\partial p^\mu_A} +g^\mu \frac{\partial}{\partial s^\mu} \in \mathfrak{X}(\mathcal{W}_\circ)$, we can compute
\begin{align*}
\inn{X}\Theta_\circ = \, & g^\mu\, \mathrm{d}^{m-1}x_\mu - f^A p_A^\mu\,\mathrm{d}^{m-1}x_\mu + \\
 & f^\nu \left[p_A^\mu \, \mathrm{d}y^A \wedge \mathrm{d} ^{m-2}x_{\nu\mu} - (L - y^A_\mu \, p^\mu_A)\, \mathrm{d}^{m-1} x_\nu- \d s^\mu \wedge \d^{m-2} x_{\mu \nu} \right]\, , \\
\inn{X}\d\Theta_\circ = \, & -g^\mu \frac{\partial L}{\partial s^\mu} \, \mathrm{d}^m x - g^\mu_A\left(\mathrm{d}y^A \wedge \mathrm{d}^{m-1}x_\mu-y^A_\mu\mathrm{d}^m x\right) - f^A_\mu \left(\frac{\partial L}{\partial y^A_\mu} -  p^\mu_A\right) \, \mathrm{d}^m x + \\
 & f^A \left(\d p_A^\mu\wedge\mathrm{d}^{m-1}x_\mu-\frac{\partial L}{\partial y^A} \, \mathrm{d}^m x \right)-f^\nu\,\d p_A^\mu \wedge \mathrm{d}y^A \wedge \mathrm{d}^{m-2}x_{\nu \mu}+ \\
 &  f^\nu \left( \frac{\partial L}{\partial y^A} \d y^A + \frac{\partial L}{\partial y^A_\mu} \d y^A_\mu + \frac{\partial L}{\partial s^\mu} \d s^\mu - y^A_\mu\d p^\mu_A-p^\mu_A \d  y^A_\mu \right) \wedge  \mathrm{d}^{m-1} x_\nu \, , 
\end{align*}
from which one obtains
\begin{align*}
\ker\Theta_{\circ} &=
\left\{\displaystyle f^A\bigg(\frac{\partial}{\partial y^A}+p^\mu_A\parder{}{s^\mu}\bigg) + f^A_\mu\frac{\partial}{\partial y^A_\mu} + g^\mu_A\frac{\partial}{\partial p_A^\nu}\right\} \ , \\
\ker\d\Theta_{\circ} &= \left\{ f^A_\mu \frac{\partial}{\partial y^A_\mu}+g^\mu \parder{}{s^\mu} \ \Big\vert\  -g^\mu \frac{\partial L}{\partial s^\mu}-f^A_\mu \left(\frac{\partial L}{\partial y^A_\mu} - p^\mu_A\right) = 0 \right\} \ , \\
\ker\,\omega &= \left\{f^A\frac{\partial}{\partial y^A} + f^A_\mu\frac{\partial}{\partial y^A_\mu}+ g^\mu_A\frac{\partial}{\partial p_A^\mu}+ g^\mu\parder{}{s^\mu} \right\} \, .
\end{align*}
and, using Definition \ref{def:reeb_dist} and bearing in mind that ${\cal A}^m(\ker\omega)=\left<\d^mx\right>$, 
the Reeb distribution is given by
$$
\mathfrak{R}=\left\{f^A_\mu\left(\frac{\partial L}{\partial y^A_\mu}-p^\mu_A\right)\derpar{}{y^A_\mu}+g^\mu\derpar{}{s^\mu}\right\} \, . 
$$
Let us remark that dimension-wise, we have
\begin{align*}
    \rank{\ker \Theta_\circ} &= n + 2nm \,, &
    \rank{\ker\d\Theta_\circ} &= m + nm -1 \, , \\
    \rank{\ker \omega} &= m+n+2nm \,, & 
    \rank\mathfrak{R} &= m + nm \, .
\end{align*}
Therefore, although, in general, $\Theta_\circ$ is a premulticontact form (according to Definition \ref{multicont}), the couple $(\Theta_\circ,\omega)$ does not define a special (pre)multicontact structure in $\mathcal{W}_\circ$.
In fact; we have that $\ker\Theta_\circ\cap\ker\d\Theta_\circ=\{ 0\}$ and 
$\ker\d\Theta_\circ\not=\{ 0\}$, and the conditions in Definition \ref{multicont} hold.
But, as $\ker\Theta_\circ\cap\ker\d\Theta_\circ\cap \ker\,\omega=\{ 0\}$, then
$\rank (\ker\Theta_\circ\cap\ker\d\Theta_\circ \cap\ker\omega)=0\neq\rank\mathfrak{R}-m=nm$, and conditions (2) and (3) in Definition \ref{multicontactdef} are incompatible, unless some additional conditions hold.
Thus, we have the following:

\begin{proposition}
The triad $(\mathcal{W}_\circ,\Theta_\circ,\omega)$ is a special premulticontact bundle on the points of the $\rho_M$-transversal submanifold ${\rm j}_1\colon\mathcal{W}_1\hookrightarrow\mathcal{W}_\circ$ defined as
\begin{align}
\mathcal{W}_1 &:=\{ w\in\mathcal{W}_\circ\mid
(\inn{X}\d\Theta_\circ)(w)=0\quad \text{ for every }\ X\in\vf^{{\rm V}(\overline{\pr}_\circ)}(\mathcal{W}_\circ)\} \label{W1a} \\
&=
\left\{ (x^\mu, y^A, y^A_\mu, p^\mu_A, s^\mu)\in \mathcal{W}_\circ \  \big\vert \  \frac{\partial L}{\partial y^A_\mu} = p^\mu_A \right\} \, .
\label{W1b}
\end{align}
\end{proposition}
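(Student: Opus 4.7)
The plan is to establish the proposition in three stages: first, prove the equivalence of the two characterizations \eqref{W1a} and \eqref{W1b} of $\mathcal{W}_1$; second, verify that $\mathcal{W}_1$ is a regular $\rho_M$-transversal submanifold of $\mathcal{W}_\circ$; and third, check the four conditions of Definition \ref{multicontactdef} at points of $\mathcal{W}_1$.

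For the equivalence of \eqref{W1a} and \eqref{W1b}, in local coordinates the projection $\overline{\pr}_\circ\colon\mathcal{W}_\circ\to\mathcal{P}^*$ reads $(x^\mu,y^A,y^A_\mu,p^\mu_A,s^\mu)\mapsto(x^\mu,y^A,p^\mu_A,s^\mu)$, so its vertical vector fields are locally generated by $\partial/\partial y^A_\mu$. Setting $f^\mu=f^A=g^\mu_A=g^\mu=0$ in the general expression for $\inn{X}\d\Theta_\circ$ computed above gives $\inn{X}\d\Theta_\circ=-f^A_\mu(\partial L/\partial y^A_\mu-p^\mu_A)\,\d^m x$, and demanding this to vanish for every such $X$ is equivalent to the $nm$ scalar equations $\partial L/\partial y^A_\mu=p^\mu_A$. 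The defining functions $F^\mu_A:=p^\mu_A-\partial L/\partial y^A_\mu$ satisfy $\partial F^\mu_A/\partial p^\nu_B=\delta^\mu_\nu\delta^A_B$, so their differentials are pointwise linearly independent and $\mathcal{W}_1$ is a closed regular submanifold of codimension $nm$. Each $\partial/\partial x^\nu$ admits a lift $\partial/\partial x^\nu+(\partial^2L/\partial x^\nu\partial y^A_\mu)\partial/\partial p^\mu_A$ tangent to $\mathcal{W}_1$, so $\rho_M|_{\mathcal{W}_1}$ is a surjective submersion.

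For the third stage, conditions (1), (2) and (4) of Definition \ref{multicontactdef} are read off from the ranks and formulas already computed on $\mathcal{W}_\circ$: one has $\rank\ker\omega=m+n+2nm=N$ and $\rank\mathfrak{R}=m+nm$, which forces $k=nm$; and $\{\inn{R}\Theta_\circ\mid R\in\mathfrak{R}\}=\{g^\mu\,\d^{m-1}x_\mu\}=\mathcal{A}^{m-1}(\ker\omega)$ holds globally, since contracting $\Theta_\circ$ with any Reeb vector field yields only multiples of $\d^{m-1}x_\mu$ (the $\partial/\partial y^A_\mu$ contributions vanish because $\Theta_\circ$ contains no $\d y^A_\mu$ terms). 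The decisive point is condition (3). Intersecting the explicit descriptions of $\ker\Theta_\circ$ and $\ker\d\Theta_\circ$ forces $f^A=0$ and $g^\mu_A=0$, hence also $g^\mu=f^A p^\mu_A=0$, leaving only vectors of the form $f^A_\mu\partial/\partial y^A_\mu$ subject to the scalar constraint $f^A_\mu(\partial L/\partial y^A_\mu-p^\mu_A)=0$ inherited from $\ker\d\Theta_\circ$. At a point of $\mathcal{W}_1$ this constraint is vacuous by \eqref{W1b}, so $\rank(\ker\omega\cap\ker\Theta_\circ\cap\ker\d\Theta_\circ)=nm$, matching $k=nm$ as required.

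The main difficulty is precisely this rank-jump of the characteristic distribution at $\mathcal{W}_1$: away from $\mathcal{W}_1$ the constraint $f^A_\mu(\partial L/\partial y^A_\mu-p^\mu_A)=0$ is a nontrivial linear equation and drops the intersection's rank to $nm-1$, which would force $k=nm-1$ in condition (3) and clash with $k=nm$ dictated by condition (2). It is the vanishing of the coefficients $\partial L/\partial y^A_\mu-p^\mu_A$ along $\mathcal{W}_1$ that reconciles the two rank counts and produces the special premulticontact structure on that submanifold.
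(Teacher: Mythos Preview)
Your proof is correct and follows essentially the same route as the paper: both arguments hinge on the explicit local descriptions of $\ker\Theta_\circ$, $\ker\d\Theta_\circ$, $\ker\omega$, and $\mathfrak{R}$ already computed before the proposition, and both identify condition~(3) of Definition~\ref{multicontactdef} as the only one that fails off $\mathcal{W}_1$ (where the single linear constraint $f^A_\mu(\partial L/\partial y^A_\mu - p^\mu_A)=0$ cuts the characteristic distribution down) and becomes vacuous on $\mathcal{W}_1$, giving $k=nm$. Your treatment is in fact slightly more careful than the paper's in handling the Reeb distribution (the paper's displayed $\mathfrak{R}=\langle\partial/\partial s^\mu\rangle$ in the proof is a slip; as you correctly use, $\mathfrak{R}=\langle\partial/\partial y^A_\mu,\partial/\partial s^\mu\rangle$ with rank $m+nm$), and you also supply the regularity and $\rho_M$-transversality of $\mathcal{W}_1$ explicitly.
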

\begin{proof}
Thus, bearing in mind these results,
for Definition \ref{multicontactdef} to hold, we have to impose:
\[ \xi^\mu_A\equiv\frac{\partial L}{\partial y^A_\mu} - p^\mu_A = 0 \, .
\label{contraints1} \]
In this case, we have that $\rank{\ker \d\Theta_\circ}=nm+\ell$, where $0\leq \ell\leq m-1$
(this value depends on if $\displaystyle \derpar{L}{s^\mu}=0$, for some $\mu$); then 
$$
\mathfrak{R}=\left\langle \derpar{}{s^\mu}\right\rangle \, , \qquad 
\ker \Theta_\circ \cap \ker \d \Theta_\circ \cap \ker \omega =
\left\langle\frac{\partial}{\partial y^A_\mu}\right\rangle \, ,
$$
and conditions (2) and (3) of Definition \ref{multicontactdef} hold for $k=nm$.
These constraints $\xi^\mu_A$ locally define the submanifold \eqref{W1b} which, as
$\displaystyle \vf^{{\rm V}(\overline{\pr})}(\mathcal{W}_\circ)=\left\langle\frac{\partial}{\partial y^A_\mu}\right\rangle$,
can be characterized as indicated in \eqref{W1a}.
Thus, it is obvious that $\mathcal{W}_1$ is a 
$\rho_M$-transversal submanifold of $\mathcal{W}_\circ$.
\end{proof}

So, we have the following diagram:
\begin{equation}
\begin{tikzcd}[column sep=large, row sep=large]
    & \mathcal{W}
    \arrow[dddl, swap, "\pr", bend right]
    \arrow[dddr, "\overline{\pr}", bend left]
    \\
     & \mathcal{W}_\circ
    \arrow[u, hook, "\mathrm{j}_\circ"]
    \arrow[ddd, "\rho^\circ_M", crossing over, bend right = 35]
    \arrow[ddl, "\pr_\circ", swap, bend right=10]
    \arrow[ddr, "\overline{\pr}_\circ", bend left=10]
    \\
    & \mathcal{W}_1
    \arrow[u, hook, "\mathrm{j}_1"]
    \arrow[dd, "\rho^{1}_M"]
    \arrow[dl, "\pr_1", crossing over]
    \arrow[dr, "\overline{\pr}_1", swap]
    \\
    \mathcal{P} = J^1\pi \times_M \bigwedge\nolimits^{m-1}\cT M
    \arrow[dr, "\tau", swap]
    & & J^{1*}\pi \times_M \bigwedge\nolimits^{m-1} \mathrm{T}^*M=\mathcal{P}^*
    \arrow[dl, "\bar\tau"]
    \\
    & M
\end{tikzcd}
\end{equation}
\noindent And, as is evident, we also have the following:

\begin{corollary}
$\mathcal{W}_1={\rm graph}\,\mathcal{FL}$.
\end{corollary}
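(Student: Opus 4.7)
The plan is to verify the equality at the level of natural coordinates and then transport it back via Proposition \ref{diffeom}. By the preceding proposition, the submanifold $\mathcal{W}_1\subset\mathcal{W}_\circ$ is cut out by the constraints $\xi^\mu_A = \partial L/\partial y^A_\mu - p^\mu_A = 0$. Since by Proposition \ref{diffeom} the map $\mu_\mathcal{W}\circ\mathrm{j}_\circ\colon\mathcal{W}_\circ\to\mathcal{W}_r$ is a diffeomorphism that in natural coordinates acts as the identity on $(x^\mu, y^A, y^A_\mu, p^\mu_A, s^\mu)$, the submanifold $\mathcal{W}_1$ is identified with the subset $\{(x^\mu, y^A, y^A_\mu, p^\mu_A, s^\mu)\in\mathcal{W}_r : p^\mu_A = \partial L/\partial y^A_\mu\}$ of $\mathcal{W}_r$.

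Next, I would view $\graph\mathcal{FL}$ as a submanifold of $\mathcal{W}_r$. Because $\mathcal{FL} = (\mathcal{F}L_s, \mathrm{Id}_{\bigwedge^{m-1}\mathrm{T}^*M})$ is a bundle morphism over $E$ that is the identity on the $(m-1)$-form factor, the assignment $z\mapsto(z, \mathcal{FL}(z))$ factors through the fibered product $\mathcal{W}_r = (J^1\pi\times_E J^{1*}\pi)\times_M\bigwedge\nolimits^{m-1}\mathrm{T}^*M$ via $(\bar y, s)\mapsto(\bar y, \mathcal{F}L_s(\bar y), s)$. Using the local expression $\mathcal{FL}(x^\mu, y^A, y^A_\mu, s^\mu) = (x^\mu, y^A, \partial L/\partial y^A_\mu, s^\mu)$, the image of this embedding is exactly the subset displayed above, so the two submanifolds coincide chart by chart and thus globally.

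The whole argument is a routine coordinate verification; the only mild subtlety is to notice that $\graph\mathcal{FL}$ naturally lands inside the fibered product $\mathcal{W}_r$ (rather than merely inside the larger product $\mathcal{P}\times\mathcal{P}^*$), precisely because $\mathcal{FL}$ covers the identity on both $E$ and $\bigwedge\nolimits^{m-1}\mathrm{T}^*M$. Once this identification is made, Proposition \ref{diffeom} closes the argument, and I do not foresee any substantive obstacle.
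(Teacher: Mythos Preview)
Your proposal is correct and follows exactly the approach implied by the paper, which presents the corollary as ``evident'' without further argument: the constraints $p^\mu_A=\partial L/\partial y^A_\mu$ defining $\mathcal{W}_1$ are precisely the coordinate description of $\graph\mathcal{FL}$ once one identifies $\mathcal{W}_\circ\simeq\mathcal{W}_r$ via Proposition~\ref{diffeom}. Your only addition is to spell out explicitly why $\graph\mathcal{FL}$ sits inside the fibered product $\mathcal{W}_r$ rather than merely in $\mathcal{P}\times\mathcal{P}^*$, which is a welcome clarification the paper leaves implicit.
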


Now, we must define the associated dissipation form
$\sigma_{\Theta_\circ}\in\df^1(\mathcal{W}_\circ)$.
Thus, for a generic vector field
$\displaystyle X=f^A_\mu \frac{\partial}{\partial y^A_\mu}+g^\mu \frac{\partial}{\partial s^\mu}\in\mathfrak{R}$,
and taking 
$\sigma_{\Theta_\circ}=\sigma_\mu\d x^\mu+\sigma_A\d y^A+\sigma_A^\mu\d y^A_\mu+\sigma_\mu^A\d p^\mu_A+\alpha_\mu\d s^\mu$
the  \eqref{sigma} leads to
$\sigma_A=\sigma_A^\mu=\sigma_\mu^A=\alpha_\mu=0$, and
$$
0=\Big(\sigma^\mu-\frac{\partial L}{\partial s^\mu}\Big)f^\mu+
\Big( \frac{\partial L}{\partial y^A_\mu} - p^\mu_A\Big)f_\mu^A
\underset{\mathcal{W}_1}{=} \Big(\sigma^\mu-\frac{\partial L}{\partial s^\mu}\Big)f^\mu
\quad \Longrightarrow\quad \sigma_\mu \underset{\mathcal{W}_1}{=}\frac{\partial L}{\partial s^\mu} \ ,
$$
and hence we obtain 
\[
\sigma_{\Theta_\circ} = \frac{\partial L}{\partial s^\mu} \mathrm{d}x^\mu \qquad \text{(on $\mathcal{W}_1$)} \, .
\]

Finally, on the points of $\mathcal{W}_1$, we can define the form
$$
\bd\Theta_\circ = 
\mathrm{d} \Theta_\circ + \sigma_{\Theta_\circ} \wedge \Theta_\circ\in \Omega^m(\mathcal{W}_\circ)\, ,
$$
whose local expression is:
\begin{align}
\bd\Theta_\circ= & - \d p_A^\mu \wedge \d y^A \wedge \d  ^{m-1}x_\mu - \Big( \frac{\partial L}{\partial y^A} - \frac{\partial L}{\partial s^\mu} \, p_A^\mu \Big) \d y^A \wedge \d ^mx \\ & 
- \Big(\frac{\partial L}{\partial y^A_\mu} - p^\mu_A \Big) \d y^A_\mu \wedge \d^mx + y^A_\mu \,\d p^\mu_A \wedge \d^mx 
\label{Omega0a}\\
\underset{\mathcal{W}_1}{=} & - \d p_A^\mu \wedge \d y^A \wedge \d^{m-1}x_\mu - \Big( \frac{\partial L}{\partial y^A} - \frac{\partial L}{\partial s^\mu} \, \frac{\partial L}{\partial y^A_\mu} \Big) \d y^A \wedge \d^mx + y^A_\mu\, \d p_A^\mu \wedge \d^mx
    \, .
    \label{Omega0b}
\end{align}

\begin{remark}
\label{remarkHam}
The result in Proposition \ref{diffeom} allows us to define a Hamiltonian section of the projection $\mu_\mathcal{W}\colon \mathcal{W}\to\mathcal{W}_r$, namely ${\bf h}_{_\mathcal{W}}\colon\mathcal{W}_r\to\mathcal{W}$, such that $\mathcal{W}_\circ={\rm Im}\,{\bf h}_{_\mathcal{W}}$; which, in coordinates, is 
$${\bf h}_{_\mathcal{W}}(x^\mu,y^A,y^A_\mu,p_A^\mu,s^\mu)=(x^\mu,y^A,y^A_\mu,p_A^\mu,p=-{\rm H}(x^\nu,y^B,y^B_\nu,p^\nu_B,s^\nu),s^\mu)\,,$$
and where ${\rm H}\in\Cinfty(\mathcal{W}_r)$ is its associated Hamiltonian function.
We denote also by ${\rm H}$ the function
$(\mu_\mathcal{W}\circ{\rm j}_\circ)^*{\rm H}\in\Cinfty(\mathcal{W}_\circ)$,
since both functions have the same coordinate expression:
$$
{\rm H}(x^\nu,y^B,y^B_\nu,p^\nu_B,s^\nu)=p^\mu_Ay^A_\mu-L(x^\nu,y^B,y^B_\nu,s^\nu)\,.
$$
Using this Hamiltonian section one can construct the premulticontact form 
$\Theta_r:={\bf h}_{_\mathcal{W}}^*\Theta_\mathcal{W} \in \Omega^m(\mathcal{W}_r)$,
which verifies that $\Theta_\circ=(\mu_\mathcal{W}\circ{\rm j}_\circ)^*\Theta_r$ and
whose coordinate expression is also like in \eqref{Theta0}. 
Therefore, all the formalism can also be constructed
starting from the multicontact manifold $(\mathcal{W}_r,\Theta_r)$ 
and introducing the corresponding submanifold $(\mu_\mathcal{W}\circ{\rm j}_\circ)(\mathcal{W}_1)$
of $(\mathcal{W}_r,\Theta_r)$
on which this structure is a special premulticontact one.

Furthermore, from the Hamiltonian section ${\bf h}_{\mathcal{W}}$, it is possible to recover a Hamiltonian section in the multicontact Hamiltonian formalism,  $\widetilde{\mathbf{h}}_\circ\colon\mathcal{P}_\circ^*\rightarrow\widetilde{P}$
(see Diagram \eqref{diaghamsec0}). 
As shown in the following diagram, it is defined by $\widetilde{\mathbf{h}}_\circ([\mathbf{p}],s)=(\widetilde{\pr}\circ{\bf h}_{_\mathcal{W}})\left((\widehat{\pr})^{-1}({\rm j}_\circ([\mathbf{p}],s))\right)$, for every $([\mathbf{p}],s)\in\mathcal{P}^*_\circ$; 
\begin{equation}
\begin{tikzcd}[column sep=huge, row sep=huge]
    \widetilde{\mathcal{P}}_\circ
    \arrow[r, "\widetilde{\jmath_\circ}"]
    \arrow[d, "\mathfrak{p}_\circ"]
    &
    \widetilde{\mathcal{P}}
    \arrow[d, "\widetilde{\mathfrak{p}}"]
    &
    \mathcal{W}
    \arrow[l, "\widetilde{\pr}", swap]
    \arrow[d, "\mu_{\mathcal{W}}", swap]
    \\
    \mathcal{P}^*_\circ
    \arrow[r, "\jmath_\circ"]
    \arrow[ur, "\widetilde{\mathbf{h}}_\circ", swap]
    \arrow[u, "\mathfrak{p}_\circ^{-1} = \mathbf{h}_\circ", bend left]
    &
    \mathcal{P}^*
    &
    \mathcal{W}_r
    \arrow[l, "\widehat{\pr}"]
    \arrow[u, "\mathbf{h}_{\mathcal{W}}", swap, bend right]
\end{tikzcd}
\end{equation}
Note that, for hyper-regular Lagrangians, we have $\mathcal{P}^*_\circ=\mathcal{P}^*$ and $\widetilde{\mathcal{P}}_\circ= \widetilde{\mathcal{P}}$.
\end{remark}

\subsection{Field equations}

As in the Lagrangian and Hamiltonian formalisms,
for the premulticontact Hamiltonian system $({\cal W}_\circ,\Theta_\circ,\omega)$, 
the field equations can be stated in three equivalent ways (at least locally). 
Thus, the {\sl\textbf{premulticontact Lagrange--Hamilton problem}} in $\mathcal{W}_\circ$ consists in finding sections $\bm\psi_\circ\colon M \rightarrow \mathcal{W}_\circ$, with image in $\mathcal{W}_1$,
such that:
\begin{itemize}
\item [(1)]
They are the solutions to the {\sl\textbf{premulticontact equations for sections}}:
\begin{equation} 
\label{Hamilton-Lagrange eq}
\bm\psi_\circ^*\Theta_\circ\underset{\mathcal{W}_1}{=} 0\,, \quad \bm\psi_\circ^* \ i(Y_\circ)\bd\Theta_\circ \underset{\mathcal{W}_1}{=} 0\,, \qquad \text{ for every $Y_\circ \in \mathfrak{X}(\mathcal{W}_\circ)$}\,, 
\end{equation}
or, equivalently, for their canonical prolongations,
\begin{equation}
\label{Hamilton-Lagrange eq2}
\inn{\bm{\psi}^{(m)}}(\Theta_{\circ}\circ\bm{\psi})\underset{\mathcal{W}_1}{=}0 \,,\qquad
\inn{\bm{\psi}^{(m)}}(\bd\Theta_{\circ}\circ\bm{\psi}) \underset{\mathcal{W}_1}{=} 0\,.
\end{equation}
\item [(2)]
They are the integral sections of classes of integrable and $\rho^\circ_M$-transverse $m$-multivector fields $\{X_\circ\} \subset \mathfrak{X}^m(\mathcal{W}_\circ)$, tangent to  $\mathcal{W}_1$, which are solutions to the 
{\sl\textbf{premulticontact equations for multivector fields}}:
\begin{equation}
\label{vfH20}
\inn{{\bf X}_\circ}\Theta_\circ \underset{\mathcal{W}_1}{=} 0 \, , \quad \inn{\bfX_\circ}\bd\Theta_\circ\underset{\mathcal{W}_1}{=} 0 \, , \qquad \text{ for every $\mathbf{X}_\circ \in \{\mathbf{X}_\circ$}\} \, .
\end{equation}
\item [(3)]  
They are the integral sections of integrable Ehresmann connections $\nabla_\circ\in\mathcal{T}^1_1(\mathcal{W}_\circ)$, whose horizontal distributions are tangent to  $\mathcal{W}_1$, which are solutions to the {\sl\textbf{premulticontact equations for Ehresmann connections}}
\begin{equation}
\label{EcH0}
\inn{\nabla_\circ}\Theta_\circ\underset{\mathcal{W}_1}{=}(m-1)\Theta_\circ \,,\qquad
\inn{\nabla_\circ}\bd\Theta_\circ\underset{\mathcal{W}_1}{=}(m-1)\bd\Theta_\circ \,.
\end{equation}
\end{itemize}
Locally decomposable and $\rho^\circ_M$-transverse multivector fields and orientable connections, which are solutions to equations
\eqref{vfH20} and \eqref{EcH0}, respectively, are called {\sl\textbf{Lagrange--Hamiltonian multivector fields}} and {\sl\textbf{connections}} for the premulticontact system $(\mathcal{W}_\circ, \Theta_\circ,\omega)$. \textsl{Euler--Lagrange} and \textsl{Hamilton--de Donder--Weyl} multivector fields and connections can be recovered from these Lagrange--Hamiltonian multivector fields and connections, as we will see later.

\subsubsection*{Field equations for multivector fields and Ehresmann connections}

In order to analyze the information contained in the equations,
let us consider the expression in a natural chart of coordinates of ${\cal W}_\circ$ of
a $\rho^\circ_M$-transverse and locally decomposable $m$-multivector field
satisfying $\inn{\bfX_\circ}\omega=1$,
$$
{\bf X}_\circ= \bigwedge_{\mu=1}^m
\bigg(\parder{}{x^\mu}+(X_\circ)_\mu^A\frac{\displaystyle\partial}{\displaystyle
\partial y^A}+(X_\circ)_{\mu\nu}^A\frac{\displaystyle\partial}{\displaystyle\partial y^A_\nu}+
(X_\circ)_{\mu A}^\nu\frac{\displaystyle\partial}{\displaystyle\partial p_A^\nu}+(X_\circ)_\mu^\nu\,\frac{\partial}{\partial s^\nu}\bigg) \,;
$$
and of the Ehresmann connection $\nabla_\circ$
associated with the class $\{{\bf X}_\circ\}$ is,
$$\displaystyle
\nabla_\circ= \d x^\mu\otimes
\bigg(\parder{}{x^\mu}+(X_\circ)_\mu^A\frac{\displaystyle\partial}{\displaystyle
\partial y^A}+(X_\circ)_{\mu\nu}^A\frac{\displaystyle\partial}{\displaystyle\partial y^A_\nu}+
(X_\circ)_{\mu A}^\nu\frac{\displaystyle\partial}{\displaystyle\partial p_A^\nu}+(X_\circ)_\mu^\nu\,\frac{\partial}{\partial s^\nu}\bigg) \,,
$$
Then, bearing in mind the local expressions \eqref{Theta0} and \eqref{Omega0a},
the second equation of \eqref{vfH20} or \eqref{EcH0} gives
\begin{align}
(X_\circ)_\mu^A =y^A_\mu  \qquad \text{ (on $\mathcal{W}_1$)} \,, \label{coefs0a} \\
(X_\circ)_{\mu A}^\mu =
\frac{\partial L}{\partial y^A}+\frac{\partial L}{\partial s^\mu}\, p_A^\mu  \qquad \text{ (on $\mathcal{W}_1$)} \,, 
\label{coefs0b} \\
\frac{\partial L}{\partial y^A_\mu} - p^\mu_A = 0
 \qquad \text{ (on $\mathcal{W}_1$)} \, ,
\label{coefs0c}
\end{align}
and another set of equalities which are identities when the above equations are taken into account. 
Now, using \eqref{coefs0a}, the first equation of \eqref{vfH20} or \eqref{EcH0} leads to
\begin{equation}
\label{firsteqs}
(X_\circ)_\mu^\mu =L \qquad \text{ (on $\mathcal{W}_1$)} \,.
\end{equation}
Equations \eqref{coefs0c} are compatibility constraints for the field equations \eqref{vfH20} and \eqref{EcH0},
and are satisfied identically, since they are also the constraints \eqref{contraints1} defining the submanifold $\mathcal{W}_1$ in $\mathcal{W}_\circ$, and the field equations hold at support on this submanifold.
The other equations \eqref{coefs0a} and \eqref{coefs0b} partially determine the multivector fields and connections solutions to 
the equations \eqref{vfH20} or \eqref{EcH0},
which are
\begin{align}
{\bf X}_\circ & \underset{\mathcal{W}_1}{=} \,  \bigwedge_{\mu=1}^m
\bigg(\parder{}{x^\mu}+y_\mu^A\frac{\partial}{\partial y^A}+(X_\circ)_{\mu\lambda}^A\frac{\partial}{\partial y^A_\lambda}+
(X_\circ)_{\mu A}^\lambda\frac{\partial}{\partial p_A^\lambda}+(X_\circ)_\mu^\lambda\,\frac{\partial}{\partial s^\lambda}\bigg)\equiv 
\bigwedge_{\mu=1}^m X_{\circ\,\mu}\,,  \\
\nabla_\circ & \underset{\mathcal{W}_1}{=} \, \d x^\mu\otimes
\bigg(\parder{}{x^\mu}+y_\mu^A\frac{\partial}{\partial y^A}+(X_\circ)_{\mu\lambda}^A\frac{\partial}{\partial y^A_\lambda}+
(X_\circ)_{\mu A}^\lambda\frac{\partial}{\partial p_A^\lambda}+(X_\circ)_\mu^\lambda\,\frac{\partial}{\partial s^\lambda}\bigg)\equiv
\d x^\mu\otimes X_{\circ\,\mu}  \,,
\end{align}
where the components $(X_\circ)_{\mu A}^\mu$ are related
by the equations \eqref{coefs0b}.
In particular, equations \eqref{coefs0a} set the semi-holonomy condition of multivector fields and connections solution.

Next, we have to impose the condition that these solutions are tangent to the submanifold $\mathcal{W}_1$, which gives rise to the following equations:
\begin{align}
0 & \underset{\mathcal{W}_1}{=}
\Lie_{X_{\circ\,\mu}}\xi^\nu_B =
\Lie_{X_{\circ\,\mu}}\bigg(\frac{\partial L}{\partial y^B_\nu} - p^\nu_B\bigg)
\nonumber \\ & =
\parderr{L}{x^\mu}{y_\nu^B}
+\frac{\partial^2L}{\partial y^A \partial y^B_\nu}\,y_\mu^A
+\frac{\partial^2L}{\partial y^A_\lambda\partial y^B_\nu}(X_\circ)_{\mu\lambda}^A
-(X_\circ)_{\mu B}^\nu
+\frac{\partial^2L}{\partial s^\lambda\partial y^B_\nu}(X_\circ)_\mu^\lambda \,. 
\label{tangeqs}
\end{align}
This is a system of linear equations for the coefficients of the multivector field $\mathbf{X}_\circ$ or the connection $\nabla_\circ$.
In particular, we can resolve the coefficients $(X_\circ)_{\mu B}^\nu$
as functions of the arbitrary coefficients $(X_\circ)_{\mu\lambda}^A$ and the coefficients $(X_\circ)_\mu^\lambda$. 
But the result must be compatible with equations \eqref{coefs0b};
so, taking the coefficients $(X_\circ)_{\mu B}^\mu$ in the last equation and combining with  \eqref{coefs0b}, we get
\begin{equation}
\frac{\partial L}{\partial y^B}+\frac{\partial L}{\partial s^\mu}\, p_B^\mu \, \underset{\mathcal{W}_1}{=} \,
\parderr{L}{x^\mu}{y_\mu^B} 
+\frac{\partial^2L}{\partial y^A \partial y^B_\mu}y_\mu^A 
+\frac{\partial^2L}{\partial y^A_\lambda\partial y^B_\mu}(X_\circ)_{\mu\lambda}^A
+\frac{\partial^2L}{\partial s^\lambda\partial y^B_\mu}(X_\circ)_\mu^\lambda \, ,
\label{unieqs}
\end{equation}
which is a new system of linear equations for the coefficients
$(X_\circ)_{\mu\lambda}^A$ and $(X_\circ)_\mu^\lambda$.
Then, we have two possibilities:
\begin{itemize}
\item 
If the Lagrangian $L$ is regular, then the Hessian matrix $\displaystyle\bigg(\frac{\partial^2L}{\partial y^A_\lambda\partial y^B_\mu}\bigg)$ is regular everywhere and in the system \eqref{unieqs}, this allows us to isolate the coefficients $(X_\circ)_{\mu\lambda}^A$,
although they cannot be completely determined
as functions of the remaining coefficients 
$(X_\circ)_\mu^\nu$, unless $m=1$.
%since $\rank\displaystyle\bigg(\frac{\partial^2L}{\partial y^B_\nu\partial y^A_\mu}\bigg)=nm$ (the system is undetermined).
\item 
If the Lagrangian $L$ is singular, 
%then $\rank\displaystyle\bigg(\frac{\partial^2L}{\partial y^A_\lambda\partial y^B_\mu}\bigg)<nm$, 
the Hessian matrix $\displaystyle\bigg(\frac{\partial^2L}{\partial y^A_\lambda\partial y^B_\mu}\bigg)$ is singular too
and the system \eqref{unieqs} could be incompatible,
unless some additional constraints are satisfied.
These constraints define locally a new submanifold
$\mathcal{W}_2\hookrightarrow\mathcal{W}_1$
where solutions for the coefficients $(X_\circ)_{\mu\lambda}^A$ exist.
In this case, the constraint algorithm continues by demanding the tangency condition for the new multivector fields or connections found,
until finding a final constraint submanifold ${\cal W}_f\hookrightarrow\mathcal{W}_1$, where tangent solutions exist (if there are).
\end{itemize}
Observe that, in any case, the semi-holonomy condition is always ensured by equations \eqref{coefs0a}.

Finally, for the integral sections of $\bfX_\circ$
and $\nabla_\circ$,
\begin{equation}
\bm{\psi}_\circ(x^\nu)=\left(x^\mu,y^A(x^\nu),y^A_\mu(x^\nu),p^\mu_A(x^\nu),s^\mu(x^\nu)\right) \, ,
\label{intsec}
\end{equation}
we have that
$$
(X_\circ)_\mu^A= \derpar{y^A}{x^\mu} \,, \qquad 
(X_\circ)_{\nu\mu}^A=\derpar{y^A_\mu}{x^\nu}\,, \qquad 
(X_\circ)_{\mu A}^\nu=\derpar{p_A^\mu}{x^\nu}\,, \qquad 
(X_\circ)_\mu^\nu=\derpar{s^\nu}{x^\mu} \,.
$$
Then, taking into account equations \eqref{coefs0a}
(the semi-holonomy condition), we get
\begin{equation}
y_\mu^A= \derpar{y^A}{x^\mu}\,,
\label{sopdecond}
\end{equation}
which is the holonomy condition for the sections,
that appears naturally within the unified formalism, and it is not necessary to impose it by hand in the singular case, as in the Lagrangian formalism.
Therefore, $\displaystyle (X_\circ)_{\nu\mu}^A=\frac{\partial^2y^A}{\partial x^\nu \partial x^\mu}$,
and equations \eqref{firsteqs} and \eqref{coefs0b} transform into,
\begin{align}
 \parder{s^\mu}{x^\mu} &= L\circ{\bm{\psi}_\circ}  \qquad \text{ (on $\mathcal{W}_1$)}
 \,, \label{finaleqs1} \\
\derpar{p_A^\mu}{x^\mu} &=
\bigg(\frac{\partial L}{\partial y^A}+
\displaystyle\frac{\partial L}{\partial s^\mu} p_A^\mu\bigg)\circ{\bm{\psi}_\circ}
 \qquad \text{ (on $\mathcal{W}_1$)} \, .
\label{finaleqs2}
\end{align}
These are called the {\sl\textbf{Herglotz--Lagrangian--Hamiltonian equations}}.

\subsubsection*{Field equations for sections}

%\begin{remark}
The above equations can be equivalently obtained starting from the field equations for sections \eqref{Hamilton-Lagrange eq}. 
In fact, taking a section like \eqref{intsec} and $\displaystyle\Big\{\parder{}{x^\mu},\derpar{}{y^B},\derpar{}{y^B_\mu},\derpar{}{p_B^\mu},\derpar{}{s^\mu}\Big\}$ as a 
basis for the local vector fields in $\mathcal{W}_\circ$, 
and, bearing in mind the local expression \eqref{Omega0a}, we get
\begin{align}
0 & \underset{\mathcal{W}_1}{=}\, \bm\psi_\circ^*\Theta_\circ =
\Big(-p_A^\mu\derpar{y^A}{x^\mu}-L+y^A_\mu p^\mu_A+\derpar{s^\mu}{x^\mu}\Big)\,\mathrm{d}^m x \, ,  \\
0 & \underset{\mathcal{W}_1}{=}\bm\psi_\circ^*\inn{\tparder{}{p_B^\mu}}\bd\Theta_\circ = \Big(-\derpar{y^A}{x^\mu}+y^A_\mu \Big)\d^mx \, , \label{prvert} \\
0 & \underset{\mathcal{W}_1}{=}\bm\psi_\circ^*\inn{\tparder{}{y^B}}\bd\Theta_\circ = \bigg(\derpar{p^\mu_A}{x^\mu}-\Big( \frac{\partial L}{\partial y^A}-\frac{\partial L}{\partial s^\mu}\,p_A^\mu\Big)\bigg)\d^mx \, , \\
0 & \underset{\mathcal{W}_1}{=}\bm\psi_\circ^*\inn{\tparder{}{y^B_\mu}}\bd\Theta_\circ = \bigg(\frac{\partial L}{\partial y^A_\mu} - p^\mu_A\bigg)\d^mx \, .
\end{align}
The last equations hold identically as a consequence of the constraints \eqref{coefs0c}, and it guarantees that the image of section $\bm\psi_\circ$ is on $\mathcal{W}_1={\rm graph}\,\mathcal{FL}$.
Equations $\displaystyle\bm\psi_\circ^*\inn{\tparder{}{x^\mu}}\bd\Theta_\circ\underset{\mathcal{W}_1}{=} 0$ also hold identically using the other equations.
Finally, equations $\displaystyle\bm\psi_\circ^*\inn{\tparder{}{s^\mu}}\bd\Theta_\circ\underset{\mathcal{W}_1}{=} 0$ 
hold because $\displaystyle\derpar{}{s^\mu}\in\ker\bd\Theta_\circ$.
From here the equalities \eqref{sopdecond}, \eqref{finaleqs1}, and \eqref{finaleqs2} are obtained,
which proves the equivalence between the three types of field equations
\eqref{Hamilton-Lagrange eq}, \eqref{vfH20}, and \eqref{EcH0}.

Furthermore, if \eqref{intsec} is an integral section of every integrable multivector field ${\bf X}_\circ$ solution to \eqref{vfH20}, s \eqref{tangeqs} give
$$
\derpar{p_\nu^A}{x^\mu}\underset{\mathcal{W}_1}{=}
\parderr{L}{x^\mu}{y_\nu^B}
+\frac{\partial^2L}{\partial y^A \partial y^B_\nu}\derpar{y^A}{x^\mu}
+\frac{\partial^2L}{\partial y^A_\lambda\partial y^B_\nu}\derpar{y_\lambda^A}{x^\mu}
+\frac{\partial^2L}{\partial s^\lambda\partial y^B_\nu}\derpar{s^\lambda}{x^\mu}=\derpar{}{x^\mu}\Big(\parder{L}{y_\nu^B}\Big) \, , 
$$
which hold as a consequence of the constraints \eqref{coefs0c};
that is, the Legendre transform,
and hence, they are neither new equations nor new constraints.

\subsection{Relation with the Lagrangian and the Hamiltonian formalisms}

The relation among the Lagrangian, the Hamiltonian, and the unified Lagrangian-Hamiltonian formalisms can be established starting from the field equations for sections \eqref{Hamilton-Lagrange eq}, as follows:

\begin{theorem}
\label{relation}
(See Diagram \eqref{unified} below).
Let $({\cal W}_\circ,\Theta_\circ,\omega)$ be a premulticontact Hamiltonian system described by a (hyper)regular Lagrangian.
If $\bm\psi_\circ\colon M \to \mathcal{W}_\circ$ is a section fulfilling s \eqref{Hamilton-Lagrange eq}
and such that ${\rm Im}\,\bm\psi_\circ\subset\mathcal{W}_1$, then we can construct the sections $\bm\psi_\mathcal{L}=\pr_\circ\circ \,\bm\psi_\circ \colon M\to\mathcal{P}$ and
$\bm\psi_\mathcal{H}=\mathcal{FL}\circ \bm{\psi_\mathcal{L}}=\overline{\pr}_\circ\circ\bm\psi_\circ\colon M\to\mathcal{P}^*$
which verify that:
\begin{enumerate}
\item 
The section $\bm\psi_\mathcal{L}$ is the canonical lift to $\mathcal{P}$ of the projected section $\phi=\rho_E\circ\pr_\circ\circ\,\bm\psi_\circ\colon M \rightarrow E$, and so $\bm\psi_\mathcal{L}$,
is a holonomic section.
\item 
The sections $\bm\psi_\mathcal{L}$ and $\bm\psi_\mathcal{H}$ are solutions to the Lagrangian equations \eqref{sect1H} and the Hamiltonian equations \eqref{sect1H0}, respectively.
\end{enumerate}
Conversely, for every section $\bm\psi_\mathcal{L}\colon M\rightarrow \mathcal{P}$ solution to the Lagrangian equations \eqref{sect1H},
the section $\bm\psi_\circ={\rm j}_1\circ pr_1^{-1}\circ\bm\psi_\mathcal{L}={\rm j}_1\circ\bm\psi_1\colon M \to \mathcal{W}_\circ$ is a solution to the field equations \eqref{Hamilton-Lagrange eq}
and, obviously, ${\rm Im}\,\bm\psi_\circ\subset\mathcal{W}_1$.
\begin{equation}\label{unified}
\begin{tikzcd}[column sep=huge, row sep=huge]
    &
    \mathcal{W}
    \\
    &
    \mathcal{W}_\circ
    \arrow[u, "\mathbf{j}_\circ", swap, hook]
    \arrow[ddl, "\pr_\circ", swap, bend right=20]
    \arrow[ddr, "\overline{\pr}_\circ", bend left=20]
    \\
    &
    \mathcal{W}_1
    \arrow[u, "\mathbf{j}_1", swap, hook]
    \arrow[dl, "\pr_1", swap, pos=0.6]
    \arrow[dr, "\overline{\pr}_1", pos=0.6]
    \arrow[ddd]
    \\
    \mathcal{P} = J^1\pi \times_M \bigwedge\nolimits^{m-1} \cT M
    \arrow[d, "\rho", swap]
    \arrow[ddr, "\rho_E", pos=0.3]
    \arrow[rr, "\mathcal{FL}", crossing over, pos=0.35]
    &&
    J^{1*}\pi \times_M \bigwedge\nolimits^{m-1} \cT M = \mathcal{P}^*
    \arrow[d, "\varrho"]
    \arrow[ddl, "\varrho_E", swap, pos=0.3]
    \\
    J^1\pi
    \arrow[dr, "\pi^1", pos=0.3]
    \arrow[ddr, "\bar\pi^1", swap, bend right=10]
    &&
    J^{1*}\pi
    \arrow[dl, "\kappa^1", swap, pos=0.3]
    \arrow[ddl, "\bar\kappa^1", bend left=10]
    \\
    &
    E
    \arrow[d, "\pi"]
    \\
    &
    M
    \arrow[u, "\phi", pos=0.55, bend left=25, shorten=0.75mm]
    \arrow[uuuuu, "\bm{\psi}_\circ", swap, bend left=40, crossing over]
    \arrow[uuuu, "\bm{\psi}_1", bend right, crossing over, pos=0.66]
    \arrow[uuul, "\bm{\psi}_{\mathcal{L}}", bend left=50, end anchor={[xshift=8ex]south west}]
    \arrow[uuur, "\bm{\psi}_{\mathcal{H}}", swap, bend right=50, end anchor={[xshift=-8ex]south east}]
\end{tikzcd}
\end{equation}
\end{theorem}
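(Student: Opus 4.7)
The plan is to reduce everything to a pullback identity on the graph submanifold $\mathcal{W}_1 = \mathrm{graph}\,\mathcal{FL}$. On $\mathcal{W}_1$ the constraints $\xi^\mu_A = \partial L/\partial y^A_\mu - p^\mu_A$ vanish, so comparing \eqref{Theta0} with \eqref{thetacoor1} and \eqref{thetacoor2} (and using the already established $\mathcal{FL}^*\Theta_\mathcal{H} = \Theta_\mathcal{L}$ together with the factorization $\overline{\pr}_1 = \mathcal{FL}\circ\pr_1$) yields the key relations
\begin{equation}
\mathrm{j}_1^*\Theta_\circ = \pr_1^*\Theta_\mathcal{L} = \overline{\pr}_1^*\Theta_\mathcal{H},
\end{equation}
and analogous identities for $\bd\Theta_\circ$, $\bd\Theta_\mathcal{L}$, $\bd\Theta_\mathcal{H}$. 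Hyperregularity guarantees that $\mathcal{FL}$, and hence $\pr_1$ and $\overline{\pr}_1$, are global diffeomorphisms.

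For the forward direction, the holonomy claim (item 1) is extracted directly from the unified equations: contracting $\bd\Theta_\circ$ in \eqref{Omega0a} with the $\pr_\circ$-vertical vector field $\partial/\partial p^\mu_B$ and pulling back by $\bm\psi_\circ$ gives $\partial(y^B\circ\bm\psi_\circ)/\partial x^\mu = y^B_\mu\circ\bm\psi_\circ$, whence $\bm\psi_\mathcal{L} = \pr_\circ\circ\bm\psi_\circ = j^1\phi$ for $\phi = \rho_E\circ\bm\psi_\mathcal{L}$. For item 2, the hypothesis $\mathrm{Im}\,\bm\psi_\circ \subset \mathcal{W}_1$ allows us to write $\bm\psi_\circ = \mathrm{j}_1\circ\bm\psi_1$, and the pullback identity produces
\begin{equation}
\bm\psi_\mathcal{L}^*\Theta_\mathcal{L} = \bm\psi_1^*\pr_1^*\Theta_\mathcal{L} = \bm\psi_1^*\mathrm{j}_1^*\Theta_\circ = \bm\psi_\circ^*\Theta_\circ = 0,
\end{equation}
and, symmetrically, $\bm\psi_\mathcal{H}^*\Theta_\mathcal{H} = 0$. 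For the contraction equations, given $X\in\vf(\mathcal{P})$, use hyperregularity to lift $X$ first via $\pr_1^{-1}$ to a field on $\mathcal{W}_1$ and extend it to a $Y\in\vf(\mathcal{W}_\circ)$ tangent to $\mathcal{W}_1$ and satisfying $\mathrm{T}\pr_\circ\cdot Y = X\circ\pr_\circ$ along $\mathcal{W}_1$; then $\bm\psi_\mathcal{L}^*\inn{X}\bd\Theta_\mathcal{L} = \bm\psi_\circ^*\inn{Y}\bd\Theta_\circ = 0$, since the difference $\bd\Theta_\circ - \pr_\circ^*\bd\Theta_\mathcal{L}$ annihilates every $m$-tuple of vectors in $T\mathcal{W}_1$. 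The Hamiltonian equation is obtained by the mirror argument via $\overline{\pr}_\circ$.

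For the converse, hyperregularity makes $\pr_1\colon\mathcal{W}_1\to\mathcal{P}$ a diffeomorphism, so $\bm\psi_\circ := \mathrm{j}_1\circ\pr_1^{-1}\circ\bm\psi_\mathcal{L}$ is a well-defined section of $\rho^\circ_M$ whose image lies in $\mathcal{W}_1$, and the forward pullback identity immediately gives $\bm\psi_\circ^*\Theta_\circ = 0$. For the second field equation, decompose any $Y\in\vf(\mathcal{W}_\circ)$ as $Y = Y_\mathrm{tan} + \alpha^\mu_A\,\partial/\partial p^\mu_A$, where $Y_\mathrm{tan}$ is tangent to $\mathcal{W}_1$ and $\pr_\circ$-projectable onto some $X\in\vf(\mathcal{P})$; the tangential part contributes $\bm\psi_\mathcal{L}^*\inn{X}\bd\Theta_\mathcal{L} = 0$ by hypothesis, while the transverse part reduces to $\alpha^\mu_A\cdot\bm\psi_\circ^*(\inn{\partial/\partial p^\mu_A}\bd\Theta_\circ) = \alpha^\mu_A\cdot(y^A_\mu - \partial y^A/\partial x^\mu)\circ\bm\psi_\circ\,\d^mx$, which vanishes by the holonomy ensured by $\bm\psi_\mathcal{L} = j^1\phi$. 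The main technical obstacle is precisely this bookkeeping: one must verify that the decomposition separates the Euler--Lagrange information from the Legendre constraint and the holonomy condition without introducing spurious equations, and that the tangent-to-$\mathcal{W}_1$ extension of lifts always exists, both of which are consequences of the diffeomorphism property of $\pr_1$ in the hyperregular case.
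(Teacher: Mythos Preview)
Your proof is correct and follows essentially the same route as the paper: both arguments hinge on the pullback identity $\mathrm{j}_1^*\Theta_\circ = \pr_1^*\Theta_\mathcal{L} = \overline{\pr}_1^*\Theta_\mathcal{H}$ on $\mathcal{W}_1 = \mathrm{graph}\,\mathcal{FL}$, extract holonomy from the contraction with $\partial/\partial p^\mu_A$, lift vector fields via the diffeomorphism $\pr_1$, and for the converse decompose an arbitrary $Y\in\vf(\mathcal{W}_\circ)$ into a $\mathcal{W}_1$-tangent part (handled by the Lagrangian equations) plus a $\pr_\circ$-vertical part (killed by holonomy). The only cosmetic difference is that the paper phrases the forward contraction step directly via $\bm\psi_1^*\bigl(\inn{(\pr_1^{-1})_*X}\,\mathrm{j}_1^*\bd\Theta_\circ\bigr)$ rather than through your observation that $\bd\Theta_\circ - \pr_\circ^*\bd\Theta_\mathcal{L}$ vanishes on $m$-tuples tangent to $\mathcal{W}_1$, but these are equivalent.
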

\begin{proof}
Let $\bm\psi_\circ\colon M\to\mathcal{W}_\circ$,
be a section solution to the field equations \eqref{Hamilton-Lagrange eq},
whose local coordinate expression is \eqref{intsec}.
As we have seen in the previous section,
the holonomy condition, given by  \eqref{sopdecond}, appears naturally within the unified formalism. 
As $\bm\psi_\circ$ takes value in $\mathcal{W}_1$, 
we can define a section $\bm\psi_1\colon M\to\mathcal{W}_1$ such that
$\bm\psi_\circ={\rm j}_1\circ\bm\psi_1$,
and we have the local expression
$$
\bm{\psi}_\circ(x^\nu)\underset{\mathcal{W}_1}{=}\Big(x^\mu,y^A(x^\nu),\derpar{y}{x^\mu}(x^\nu),\derpar{L}{y_\mu^A}(x^\nu),s^\mu(x^\nu)\Big)
=\bm{\psi}_1(x^\nu) \, .
$$
Then we can construct the sections
$$
\bm\psi_\mathcal{L}:=\pr_\circ\circ \,\bm\psi_\circ=\pr_1\circ \,\bm\psi_1\colon M\to\mathcal{P} \,, \qquad
\bm\psi_\mathcal{H}:=\overline{\pr}_\circ\circ\bm\psi_\circ=\overline{\pr}_1\circ\bm\psi_1\colon M\to\mathcal{P}^* \,,
$$
and, since $\mathcal{W}_1=\graph\,\mathcal{FL}$,
we have $\bm\psi_\mathcal{H}=\mathcal{FL}\circ\bm\psi_\mathcal{L}$.
Their coordinate expressions are:
$$
\bm{\psi}_\mathcal{L}(x^\nu)=\Big(x^\mu,y^A(x^\nu),\derpar{y}{x^\mu}(x^\nu),s^\mu(x^\nu)\Big) \, , \quad
\bm{\psi}_\mathcal{H}(x^\nu)=\Big(x^\mu,y^A(x^\nu),\derpar{L}{y_\mu^A}(x^\nu),s^\mu(x^\nu)\Big) \, ;
$$
and it is customary to write 
$\bm\psi_\circ=(\bm\psi_\mathcal{L}, \mathcal{FL}\circ\bm\psi_\mathcal{L})$.

Notice that, as $\mathcal{W}_1=\graph\,\mathcal{FL}$, then $\mathcal{W}_1$ and $\mathcal{P}$ are diffeomorphic, 
being $\pr_1\colon\mathcal{W}_1\to\mathcal{P}$
(the restriction of the projection $\pr_\circ$ to $\mathcal{W}_1$) the corresponding diffeomorphism.
\begin{enumerate}
\item 
Since $\bm\psi_\circ$ is a holonomic section, so is $\bm\psi_\mathcal{L}$,
as the above coordinate expression shows,
and hence, clearly, $\bm\psi_\mathcal{L}$ is the canonical lift of the projected section $\phi=\rho_E\circ\pr_\circ\circ\,\bm\psi_\circ\colon M \rightarrow E$,
whose coordinate expression is $\phi(x^\nu)=(x^\mu,y^A(x^\nu))$.
\item 
From the corresponding coordinate expressions, it is easy to see that
$\pr_1^*\Theta_\mathcal{L}={\rm j}_1^*\Theta_\circ$.
Then, for every $X\in\vf(\mathcal{P})$, first we have,
\[
\bm\psi_\mathcal{L}^*\Theta_\mathcal{L}=
\bm\psi_\mathcal{L}^*({\rm j}_1\circ \pr_1^{-1})^*\Theta_\circ=
({\rm j}_1\circ \pr_1^{-1}\circ \,\bm\psi_\mathcal{L})^*\Theta_\circ=
\bm\psi_\circ^*\Theta_\circ
\underset{\mathcal{W}_1}{=} 0 \, ,
\label{from psi0 to psiL A}
\]
and, second,
\begin{equation}
\begin{aligned}
\bm\psi_\mathcal{L}^*\inn{X}\bd\Theta_\mathcal{L}=\, & (\pr_\circ\circ \, \bm\psi_\circ)^*\inn{X}\bd\Theta_\mathcal{L}=(\pr_1\circ\,\bm\psi_1)^*\inn{X}\bd\Theta_\mathcal{L}= (\bm\psi_1^*\circ \pr_1^*)\left(\inn{X} \bd\Theta_\mathcal{L}\right)=
\\ = \, & \bm\psi_1^*\left(\inn{(\pr_1^{-1})_*X} \, \pr_1^*\bd\Theta_\mathcal{L}\right)= \bm\psi_1^*\left(\inn{(\pr_1^{-1})_*X} \, {\rm j}_1^*\bd\Theta_\circ\right) \\
\underset{\mathcal{W}_1}{=} \, & (\bm\psi_1^*\circ {\rm j}_1^*) \inn{X_\circ}\bd\Theta_\circ =
\psi_\circ^*\inn{X_\circ}\bd\Theta_\circ \underset{\mathcal{W}_1}{=} 0\, ;
\label{from psi0 to psiL}
\end{aligned}
\end{equation}
since $X_\circ\in\vf(\mathcal{W}_\circ)$ is a vector field tangent to $\mathcal{W}_\circ$,
and the last equality holds for every vector field in $\mathcal{\W}_\circ$, under s \eqref{Hamilton-Lagrange eq}.

The proof for the Hamiltonian section $\bm\psi_\mathcal{H}$
follows the same patterns as for $\bm\psi_\mathcal{L}$.
First,
$$
\bm\psi_\mathcal{H}^*\,\Theta_\mathcal{H}=
\bm\psi_\circ^*\,\overline{\pr}_\circ^*\,\Theta_\mathcal{H}=
\bm\psi_\circ^*\,{\rm j}_1^*\,\overline{\pr}_\circ^*\,\Theta_\mathcal{H}=
\bm\psi_\circ^*\,\Theta_\circ
\underset{\mathcal{W}_1}{=} 0 \, .
$$
The proof of the second equation 
$\displaystyle \bm\psi_\mathcal{H}^*\,\bd\Theta_\mathcal{H}\underset{\mathcal{W}_1}{=} 0$,
is like in \eqref{from psi0 to psiL},
since the projection $\overline{\pr}_1$ is also a diffeomorphism.
\end{enumerate}

Conversely, consider any section $\bm\psi_\mathcal{L}\colon M \to\mathcal{P}$ solution to the s \eqref{sect1H}.
Observe that $\overline{\pr}_1=\mathcal{FL}\circ{\pr}_1$ is a (local) diffeomorphism
and then we can construct the section
$\bm\psi_\circ={\rm j}_1\circ \pr_1^{-1}\circ \, \bm\psi_\mathcal{L}={\rm j}_1\circ\bm\psi_1\colon M\to\mathcal{W}_\circ$.
Then, ${\rm Im}\,\bm\psi_\circ\subset\mathcal{W}_1$ and, for every vector field $X_\circ\in \mathfrak{X}(\mathcal{W}_\circ)$,
following the same reasoning that in \eqref{from psi0 to psiL A}, we obtain that
$\bm\psi_\circ^*\Theta_\circ\underset{\mathcal{W}_1}{=} 0$.
For the second equation,
we split it into $X_\circ=X_\circ^1+X_\circ^2$, with $X_\circ^1\in\vf(\mathcal{W}_\circ)$ tangent to $\mathcal{W}_1$ 
and $X_\circ^2\in\vf^{V(\pr_\circ)}(\mathcal{W}_\circ)$; and hence,
\[ \bm\psi_\circ^*\inn{X_\circ}\bd\Theta_\circ=\bm\psi_\circ^*\inn{X_\circ^1}\bd\Theta_\circ+\bm\psi_\circ^*\inn{X_\circ^2}\bd\Theta_\circ \, .\]
Therefore, on the one hand, following the same reasoning as in \eqref{from psi0 to psiL}, we obtain that
\[ \bm\psi_\circ^*\inn{X_\circ^1}\bd\Theta_\circ=({\rm j}_1\circ pr_1^{-1}\circ \bm\psi_\mathcal{L})^*\inn{X_\circ^1}\bd\Theta_\circ=\bm\psi_\mathcal{L}^*\inn{X^1}\bd\Theta_\mathcal{L}\underset{\mathcal{W}_1}{=} 0 \, , \]
where $X^1=(\pr_1^{-1})_*X_\circ^1\in\vf(\mathcal{P})$.
On the other hand, taking $\displaystyle\Big\{\derpar{}{p^\mu_A}\Big\}$ as a local basis of
$\vf^{V(\pr_\circ)}(\mathcal{W}_\circ)$,
from \eqref{prvert}, we obtain that $\displaystyle\bm\psi_\circ^*\inn{\tparder{}{p^\mu_A}}\bd\Theta_\circ\underset{\mathcal{W}_1}{=} 0$,
since, as $\bm\psi_\mathcal{L}$ must be a holonomic section, the conditions
$\displaystyle y^A_\mu=\derpar{y^ A}{x^\mu}$ hold.
Thus, these results lead to $\bm\psi_\circ^*\inn{X_\circ}\bd\Theta_\circ\underset{\mathcal{W}_1}{=} 0$.

Finally, let $\bm\psi_\mathcal{H}\colon M\rightarrow \mathcal{P}^*$ be a section solution to the Hamiltonian equations \eqref{sect1H0}.
Then, we construct the section $\phi=\varrho_E\circ\bm\psi_\mathcal{H}\colon M\to E$ and the canonical lift
$j^1\bm\phi=(j^1\phi,s)\colon M\to\mathcal{P}$.
Therefore, the equivalence between the Lagrangian and the Hamiltonian formalisms states that $j^1\bm\phi$ is a solution to the Lagrangian equations \eqref{sect1H} and, hence,
from it, a section $\bm\psi_\circ\colon M\to\mathcal{W}_\circ$ solution to the field equations \eqref{Hamilton-Lagrange eq} is obtained following the procedure set out above.
\end{proof}

\begin{remark}
\label{singequiv}
For the case of singular (almost-regular) systems
we have an analogous statement,
but taking ${\cal P}_\circ^*:= {\cal FL}({\cal P})=P_\circ\times\bigwedge\nolimits^{m-1}\Tan^*M$
instead of ${\cal P}^*$,
and considering the premulticontact Hamiltonian system
$(\mathcal{P}_\circ^*,\Theta_\mathcal{H}^\circ,\omega)$
(see the end of Section \ref{multcontHam}).
Then we have the following diagram instead of Diagram \eqref{unified},
\begin{equation}\label{unified2}
\begin{tikzcd}[column sep=huge, row sep=huge]
    &
    \mathcal{W}_\circ
    \arrow[ddl, "\pr_\circ", swap, bend right=20]
    \arrow[dr, "\overline{\pr}_\circ", bend left=10]
    \\
    &
    \mathcal{W}_1
    \arrow[u, "\mathbf{j}_1", swap, hook]
    \arrow[dl, "\pr_1", swap, pos=0.6]
    \arrow[dr, "\overline{\pr}_1^\circ", pos=0.6]
    \arrow[dd]
    &
    J^{1*}\pi \times_M \bigwedge\nolimits^{m-1} \cT M = \mathcal{P}^*
    \\
    \mathcal{P} = J^1\pi \times_M \bigwedge\nolimits^{m-1} \cT M
    \arrow[rr, "\mathcal{FL}_\circ", crossing over, pos=0.12, swap]
    &&
    P_\circ \times_M \bigwedge\nolimits^{m-1} \cT M = \mathcal{P}^*_\circ
    \arrow[u, "\jmath_\circ", swap, hook]
    \\
    &
    M
    \arrow[uuu, "\bm{\psi}_\circ", bend left=30, crossing over, pos=0.46]
    \arrow[uu, "\bm{\psi}_1", bend right, crossing over, pos=0.66, swap, pos=0.65]
    \arrow[ul, "\bm{\psi}_{\mathcal{L}}", bend left=10]
    \arrow[ur, "\bm{\psi}_{\mathcal{H}}^\circ", swap, bend right=10]
\end{tikzcd}
\end{equation}
Then, the proof is the same as for Theorem \ref{relation} but taking into account that, now, the projection $\overline{\pr}_1^\circ$ is not a diffeomorphism but a submersion
(and, in particular, the vector fields $X_\circ\in\vf(\mathcal{W}_\circ)$ appearing in the  \eqref{from psi0 to psiL} must be tangent to $\mathcal{W}_1$ and such that their restrictions to $\mathcal{W}_1$ are $\overline{\pr}_1^\circ$-projectable).
In addition, in the case of singular Lagrangian systems with a
final constraint submanifold ${\cal W}_f\hookrightarrow\mathcal{W}_1$,
the vector fields must be tangent to this submanifold and
all results must hold on the points of ${\cal W}_f$ instead of ${\cal W}_1$.
\end{remark}

\begin{remark}
\label{recovering}
It is easy to obtain the above relations in coordinates.
On the one hand, starting from equations \eqref{finaleqs1} and \eqref{finaleqs2},
as they hold at support on $\mathcal{W}_1$,
using the constraints \eqref{contraints1}
(i.e., the Legendre map) and substituting the multimomenta $p^\mu_A$, these equations transform into
\begin{align}
 \parder{s^\mu}{x^\mu} &= L\circ{\bm{\psi}_{\mathcal{L}}} \,, \\
\frac{\partial}{\partial x^\mu}
\left(\frac{\displaystyle\partial L}{\partial
y^B_\mu}\circ{\bm{\psi}_{\mathcal{L}}}\right) &= 
\left(\frac{\partial L}{\partial y^A}+
\displaystyle\frac{\partial L}{\partial s^\mu}\displaystyle\frac{\partial L}{\partial y^A_\mu}\right)\circ{\bm{\psi}_{\mathcal{L}}} \,,
\end{align}
which are the {\sl Herglotz--Euler--Lagrange equations}
\eqref{actioneqs} and \eqref{ELeqs2}.
Observe that these last equations can also be obtained from the compatibility equations \eqref{unieqs}, since
 \begin{align}
0 =\, & 
\parderr{L}{x^\mu}{y_\mu^B} 
+\frac{\partial^2L}{\partial y^A \partial y^B_\mu}y_\mu^A 
+\frac{\partial^2L}{\partial y^A_\lambda\partial y^B_\mu}(X_\circ)_{\mu\lambda}^A
+\frac{\partial^2L}{\partial s^\lambda\partial y^B_\mu}(X_\circ)_\mu^\lambda
-\frac{\partial L}{\partial y^B}-\frac{\partial L}{\partial s^\mu}\, p_B^\mu 
\\ = & \parderr{L}{x^\mu}{y_\mu^B}
+\frac{\partial^2L}{\partial y^A \partial y^B_\mu}\derpar{y^A}{x^\mu}
+\frac{\partial^2L}{\partial y^A_\lambda\partial y^B_\mu}\frac{\partial^2y^A}{\partial x^\mu \partial x^\lambda}
+\frac{\partial^2L}{\partial s^\lambda\partial y^B_\mu}\derpar{s^\lambda}{x^\mu}
-\frac{\partial L}{\partial y^B}-\frac{\partial L}{\partial s^\mu}\frac{\partial L}{\partial y^B_\mu} \\
= & \frac{\partial}{\partial x^\mu}
\left(\frac{\displaystyle\partial L}{\partial
y^B_\mu}\circ{\bm{\psi}_\circ}\right)-
\left(\frac{\partial L}{\partial y^B}+
\displaystyle\frac{\partial L}{\partial s^\mu}\displaystyle\frac{\partial L}{\partial y^B_\mu}\right)\circ{\bm{\psi}_\circ} \, .
\end{align}

On the other hand, taking the Hamiltonian function
${\rm H} = p^\mu_Ay^A_\mu-L$, introduced in Remark \ref{remarkHam},
we have that $\displaystyle\derpar{{\rm H}}{p^\mu_A}=y^A_\mu$, $\displaystyle\derpar{{\rm H}}{s^\mu}=-\derpar{L}{s^\mu}$,
$\displaystyle\derpar{{\rm H}}{y^A}=-\derpar{L}{y^A}$, $\displaystyle\derpar{{\rm H}}{x^\mu}=-\derpar{L}{x^\mu}$ and, moreover, 
$\bm{\psi}_{\mathcal{H}}=\overline{\pr}_\circ \circ\bm{\psi}_\circ$
and ${\rm H}=H\circ\overline{\pr}_\circ$.
Then, first,  \eqref{finaleqs1} transforms into
$$
\parder{s^\mu}{x^\mu}=L\circ{\bm{\psi}}_\circ=
\left(p^\mu_Ay^A_\mu-{\rm H}\right)\circ{\bm{\psi}}_\circ=
\bigg(p^\mu_A\,\derpar{{\rm H}}{p^\mu_A}-{\rm H}\bigg)\circ{\bm{\psi}}_\circ=
\bigg(p^\mu_A\,\derpar{H}{p^\mu_A}-H\bigg)\circ{\bm{\psi}}_{\mathcal{H}}\,. 
$$
Second, from equations \eqref{coefs0a} (the semi-holonomy condition), we obtain,
$$
\derpar{y^A}{x^\mu}=y^A_\mu=\derpar{{\rm H}}{p^\mu_A} \circ{\bm{\psi}}_\circ=
\derpar{H}{p^\mu_A}\circ{\bm{\psi}}_{\mathcal{H}}\, .
$$
Finally, from  \eqref{finaleqs2} we get,
$$
\derpar{p^\mu_A}{x^\mu}=
\bigg(\frac{\partial L}{\partial y^A}+\frac{\partial L}{\partial s^\mu} \frac{\partial L}{\partial y^A_\mu}\bigg)\circ{\bm{\psi}}_\circ=
-\bigg(\frac{\partial{\rm H}}{\partial y^A}+\frac{\partial{\rm H}}{\partial s^\mu} p_A^\mu\bigg)\circ{\bm{\psi}}_\circ=
-\bigg(\frac{\partial H}{\partial y^A}+\frac{\partial H}{\partial s^\mu} p_A^\mu\bigg)\circ{\bm{\psi}}_{\mathcal{H}} \, .
$$
In this way we have recovered the {\sl Herglotz--Hamilton--de Donder--Weyl equations} \eqref{HHDWeqs}
\end{remark}

As a final remark, the relation among the Lagrangian, Hamiltonian, and unified Lagrangian--Hamiltonian formalisms can also be established for the field equations for multivector fields \eqref{vfH20} 
(and, as a consequence, for the equivalent field equations for Ehresmann connections \eqref{EcH0}). In fact:

\begin{theorem}
\label{relation2}
Let $({\cal W}_\circ,\Theta_\circ,\omega)$ be a premulticontact Hamiltonian system described by a (hyper)regular Lagrangian.
If ${\bf X}_\circ\in\vf^m(\mathcal{W}_\circ)$ is any integrable $\rho^\circ_M$-transverse multivector field, tangent to $\mathcal{W}_1$, fulfilling s \eqref{vfH20}, 
then the multivector field ${\bf X}_\mathcal{L}=(\pr_\circ)_*{\bf X}_\circ\in\vf^m(\mathcal{P})$
is a holonomic multivector field solution to the Lagrangian field equations \eqref{vfH},
and the multivector field ${\bf X}_\mathcal{H}=(\overline{\pr}_\circ)_*{\bf X}_\circ\in\vf^m(\mathcal{P}^*)$
is a $\overline\tau$-transverse and integrable multivector field solution to the Hamilton--de Donder--Weyl
equations \eqref{vfH2}

Conversely, starting from every holonomic multivector field ${\bf X}_\mathcal{L}\in\vf^m(\mathcal{P})$ and from every $\overline\tau$-transverse and integrable multivector field ${\bf X}_\mathcal{H}\in\vf^m(\mathcal{P}^*)$ 
fulfilling s \eqref{vfH} and \eqref{vfH2}, respectively, we recover 
an integrable $\rho^\circ_M$-transverse multivector field 
${\bf X}_\circ=(\pr_\circ^{-1})_*{\bf X}_\mathcal{L}=(\overline{\pr}_\circ^{-1})_*{\bf X}_\mathcal{H}\in\vf^m(\mathcal{W}_\circ)$
which is a solution to the s \eqref{vfH20}
(remember that, in the (hyper)regular case, ${\pr}_\circ$ and $\overline{\pr}_\circ$ are (local) diffeomorphisms).
\end{theorem}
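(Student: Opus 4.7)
The plan is to reduce the statement to Theorem~\ref{relation} by exploiting two structural facts already established in the preceding subsection. First, on the submanifold $\mathcal{W}_1=\graph\,\mathcal{FL}$, the projections $\pr_1\colon\mathcal{W}_1\to\mathcal{P}$ and $\overline{\pr}_1=\mathcal{FL}\circ\pr_1\colon\mathcal{W}_1\to\mathcal{P}^*$ are (local) diffeomorphisms in the (hyper)regular case. Second, comparing the coordinate expressions \eqref{thetacoor1}, \eqref{thetacoor2}, and \eqref{Theta0} after substituting $p^\mu_A=\partial L/\partial y^A_\mu$ yields the pullback identities $\pr_1^*\Theta_\mathcal{L}={\rm j}_1^*\Theta_\circ$ and $\overline{\pr}_1^*\Theta_\mathcal{H}={\rm j}_1^*\Theta_\circ$, together with their $\bd$-counterparts. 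These ingredients will allow one to translate the unified field equations \eqref{vfH20} into the Lagrangian equations \eqref{vfH} and the Hamiltonian equations \eqref{vfH2}, and back.

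For the forward direction, the argument proceeds as follows. Since $\mathbf{X}_\circ$ is locally decomposable and tangent to $\mathcal{W}_1$, its restriction $\mathbf{X}_\circ|_{\mathcal{W}_1}$ is a well-defined $m$-multivector field on $\mathcal{W}_1$, and one sets $\mathbf{X}_\mathcal{L}:=(\pr_1)_*(\mathbf{X}_\circ|_{\mathcal{W}_1})$ and $\mathbf{X}_\mathcal{H}:=(\overline{\pr}_1)_*(\mathbf{X}_\circ|_{\mathcal{W}_1})=\mathcal{FL}_*\mathbf{X}_\mathcal{L}$. Transversality, local decomposability, and integrability are inherited under push-forward by a diffeomorphism, together with the identities $\tau\circ\pr_1=\rho^1_M=\bar\tau\circ\overline{\pr}_1$. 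The field equations then follow from the pullback identities: for instance,
\begin{equation*}
\pr_1^*\inn{\mathbf{X}_\mathcal{L}}\Theta_\mathcal{L}=\inn{(\pr_1^{-1})_*\mathbf{X}_\mathcal{L}}\pr_1^*\Theta_\mathcal{L}={\rm j}_1^*\inn{\mathbf{X}_\circ}\Theta_\circ=0,
\end{equation*}
so that $\inn{\mathbf{X}_\mathcal{L}}\Theta_\mathcal{L}=0$ by applying $(\pr_1^{-1})^*$; the $\bd$-equation and the Hamiltonian counterparts follow identically. Holonomy of $\mathbf{X}_\mathcal{L}$ comes from the semi-holonomy relations \eqref{coefs0a} on $\mathcal{W}_1$, which are preserved by $\pr_1$ (acting as the identity on the coordinates $(x^\mu,y^A,y^A_\mu,s^\mu)$), combined with the integrability of $\mathbf{X}_\mathcal{L}$: a locally decomposable, integrable, semi-holonomic multivector field necessarily has holonomic integral sections.

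For the converse, given a holonomic $\mathbf{X}_\mathcal{L}\in\vf^m(\mathcal{P})$ solving \eqref{vfH}, one defines $\mathbf{X}_1:=(\pr_1^{-1})_*\mathbf{X}_\mathcal{L}\in\vf^m(\mathcal{W}_1)$ and takes any locally decomposable $\mathbf{X}_\circ\in\vf^m(\mathcal{W}_\circ)$ tangent to $\mathcal{W}_1$ whose restriction coincides with $({\rm j}_1)_*\mathbf{X}_1$; then the equations \eqref{vfH20} at points of $\mathcal{W}_1$ follow by applying the same pullback identities in reverse. The asserted identification $(\pr_\circ^{-1})_*\mathbf{X}_\mathcal{L}=(\overline{\pr}_\circ^{-1})_*\mathbf{X}_\mathcal{H}$ reduces to $\mathbf{X}_\mathcal{H}=\mathcal{FL}_*\mathbf{X}_\mathcal{L}$ combined with $\overline{\pr}_1=\mathcal{FL}\circ\pr_1$, which is the content of the classical Lagrangian--Hamiltonian equivalence for (hyper)regular systems.

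The hard part will be the careful bookkeeping around the ``at support on $\mathcal{W}_1$'' condition: since $\pr_\circ$ and $\overline{\pr}_\circ$ are not diffeomorphisms on the whole $\mathcal{W}_\circ$, the expressions $(\pr_\circ)_*\mathbf{X}_\circ$ and $(\overline{\pr}_\circ)_*\mathbf{X}_\circ$ in the statement must be interpreted through $\pr_1$ and $\overline{\pr}_1$ via the tangency of $\mathbf{X}_\circ$ to $\mathcal{W}_1$, and one must check that the projected multivector fields do not depend on the extension of $\mathbf{X}_\circ$ off $\mathcal{W}_1$. The remaining non-routine verification is the identity $\pr_1^*\bd\Theta_\mathcal{L}={\rm j}_1^*\bd\Theta_\circ$, which follows from comparing \eqref{sigmaL} with the dissipation form $\sigma_{\Theta_\circ}=(\partial L/\partial s^\mu)\,\d x^\mu$ on $\mathcal{W}_1$ already obtained in the preceding subsection.
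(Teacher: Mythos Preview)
Your proposal is correct and follows essentially the same approach as the paper: the paper's proof is a single sentence asserting that the result is ``immediate'' because in the (hyper)regular case the projections are (local) diffeomorphisms and holonomy of $\mathbf{X}_\circ$ forces holonomy of $\mathbf{X}_\mathcal{L}$. Your version is in fact more careful than the paper's, since you correctly work with the restrictions $\pr_1\colon\mathcal{W}_1\to\mathcal{P}$ and $\overline{\pr}_1\colon\mathcal{W}_1\to\mathcal{P}^*$ (which are the genuine diffeomorphisms) rather than $\pr_\circ$ and $\overline{\pr}_\circ$ on all of $\mathcal{W}_\circ$; the paper's statement and proof elide this distinction, writing $(\pr_\circ)_*$ and $(\pr_\circ^{-1})_*$ where only $(\pr_1)_*$ and $(\pr_1^{-1})_*$ make literal sense.
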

\begin{proof}
The result is immediate bearing in mind that, in the (hyper)regular case,
${\pr}_\circ$ and $\overline{\pr}_\circ$ are (local) diffeomorphisms
and that, as ${\bf X}_\circ$ is a holonomic vector field, hence so is ${\bf X}_\mathcal{L}$.
\end{proof}

\begin{remark}
As in Remark \ref{singequiv},
in the singular (almost-regular) case, we have an analogous statement,
but taking ${\cal P}_\circ^*$ instead of ${\cal P}_\circ$,
and considering the premulticontact Hamiltonian system
$(\mathcal{P}_\circ^*,\Theta_\mathcal{H}^\circ,\omega)$
(see Diagram \eqref{unified2}).
Then, the proof is the same as for the above theorem, but taking into account that now the projection $\overline{\pr}_1^\circ$ is not a diffeomorphism but a submersion and, 
in particular, the multivector fields ${\bf X}_\circ\in\vf^m(\mathcal{W}_\circ)$ 
must be tangent to $\mathcal{W}_1$ and such that their restrictions to $\mathcal{W}_1$ are $\overline{\pr}_1^\circ$-projectable).
In addition, in the case of singular systems with a
final constraint submanifold ${\cal W}_f\hookrightarrow\mathcal{W}_1$,
the multivector fields must be tangent to this submanifold and
all results must hold at the points of ${\cal W}_f$ instead of ${\cal W}_1$.
\end{remark}

Notice that, in the Hamiltonian formalism, the coefficients $(X_\circ)_{\mu B}^\nu$ of the multivector fields solution to the Hamilton--de Donder--Weyl equations \eqref{vfH2}
are not related to the coefficients $(X_\circ)_{\mu\lambda}^A$ of the multivector fields solution to the Lagrangian equations \eqref{vfH}
(as in the unified formalism, equation \eqref{tangeqs}) 
as long as the equivalence between both Lagrangian and Hamiltonian formalisms is not established through the Legendre map.

%%%%%%%%%%%%%%%%%%%%%%%%%%%%%%%%%%%%%%%%%%%%%%%%%%%%%%%%%%%%%%%%
\section{Maxwell's electromagnetism with action dependent terms}
\label{example}

As an example, we study the Skinner--Rusk multicontact formalism of Maxwell's equations with charges and currents, and a non-conservative term.

The Lagrangian and/or Hamiltonian formalism for 
a nonconservative version of Maxwell's equations was first studied in \cite{LPAF_2018},
in the context of contact geometry. 
Later, it was formalized in \cite{GM_22,GRR_22}, using $k$-contact structures,
and in \cite{LGMRR_23,LGMRR_25}, using multicontact geometry.

Let $M$ be a $4$-dimensional manifold representing spacetime, $P \rightarrow M$ the principal bundle with structure group $U(1)$, $\pi\colon C\rightarrow M$ the associated bundle of connections \cite{CM_01},
and $\bar\pi^1\colon J^1\pi\to M$ the corresponding jet bundle.
In order to state the unified Skinner--Rusk formalism of this theory,
consider the extended jet-multimomentum bundle
$\mathcal{W}=(J^1 \pi \times_E \mathcal{M}\pi) \times_M \bigwedge\nolimits^{m-1} \mathrm{T}^*M$, 
with local coordinates $(x^\mu,A_\mu,A_{\mu,\nu},p^{\mu,\nu},p,s^\mu)$ such that the volume form is expressed as $\omega=\d x^0\wedge\d x^1\wedge\d x^2\wedge\d x^3$,
and where $A_\mu$ are the components of the electromagnetic four-potential
and $A_{\mu,\nu}$ and $p^{\mu,\nu}$ denote the corresponding multivelocities and multimomenta.

The electromagnetic Lagrangian with a linear dissipation term is taken to be
\[
L=-\frac{1}{4\mu_0}g^{\alpha\mu}g^{\beta\nu}F_{\mu\nu}F_{\alpha\beta}-A_\alpha J^\alpha-\gamma_\alpha s^\alpha\,,
\]
where $F_{\mu\nu}=A_{\nu,\mu}-A_{\mu,\nu}$ is the electromagnetic tensor field, $J^\alpha$ (the electromagnetic four-current) and $\gamma_\alpha$ are smooth functions on $M$ (for $0\leq\alpha\leq 3$), $g^{\mu\nu}$ is a Lorentzian metric on $M$, %with signature $(-,+,+,+)$, 
and $\mu_0$ is a constant.
%\cite{LGMRR_23,LGMRR_25,GM_22}.

The constraint function that defines the Hamiltonian submanifold $\mathcal{W}_\circ\hookrightarrow\mathcal{W}$ is
\[
C-L=p+p^{\mu,\nu}A_{\mu,\nu}+\frac{1}{4\mu_0}g^{\alpha\mu}g^{\beta\nu}F_{\mu\nu}F_{\alpha\beta}+A_\alpha J^\alpha+\gamma_\alpha s^\alpha= 0 \, ,
\]
and we have the $m$-form
\[
\Theta_\circ = -p^{\alpha, \mu} \, \d A_\alpha \wedge \d ^3x_\mu - (L- A_{\mu, \alpha} \, p^{\alpha, \mu}) \, \d^4x + \d s^\mu\wedge\d ^3x_\mu\in\df^m(\mathcal{W}_\circ) \,.
\]
Then, $(\mathcal{W}_\circ,\Theta_\circ,\omega)$ is a special premulticontact bundle, on the points of the submanifold
\begin{align*}
\mathcal{W}_1:= &\{ w\in\mathcal{W}_\circ\mid
(\inn{X}\d\Theta_\circ)(w)=0\,,\quad \text{ for every } X\in\vf^{{\rm V}(\overline{\pr})}(\mathcal{W}_\circ)\} \\
= & \left\{ (x^\mu,A_\mu,A_{\mu,\alpha},p^{\mu,\alpha},s^\mu)\in \mathcal{W}_\circ \mid  p^{\mu,\alpha}=\derpar{L}{A_{\mu,\alpha}} =\frac{1}{\mu_0}g^{\mu\nu}g^{\alpha\beta}F_{\beta\nu} \right\} \, .
\end{align*}
Indeed,
$$
   \mathcal{C} \underset{\mathcal{W}_1}{=} \left<\frac{\partial}{\partial A_{\mu,\nu}}\right>_{\mu,\nu=0,1,2,3}
   \,\quad \text{ and } \qquad
   \mathcal{D}^\mathfrak{R}\underset{\mathcal{W}_1}{=}\left<\frac{\partial}{\partial A_{\mu,\nu}},\frac{\partial}{\partial s^\mu}\right>_{\mu,\nu=0,1,2,3}\,. 
$$
and the last condition of Definition \ref{multicontactdef} is satisfied because $\displaystyle \innp{\frac{\partial}{\partial s^\mu}}\Theta_\circ=\d^3x_\mu$.
The dissipation form is $\sigma_{\Theta_\circ} = \gamma_\mu \d x^\mu$, which is closed as long as $\displaystyle\frac{\partial \gamma_\mu}{\partial x^\nu}=\frac{\partial \gamma_\nu}{\partial x^\mu}$.
Thus, we have
\begin{equation}
\begin{aligned}
\bd\Theta_\circ =& -\d p^{\alpha, \mu} \wedge \d A_\alpha \wedge \d ^3x_\mu + (J^\alpha + \gamma_\mu p^{\alpha, \mu})\, \d A_\alpha \wedge \d^4 x \\ - & \left(\frac{1}{\mu_0} g^{\alpha \beta} g^{\mu \nu} F_{\beta \nu}- p^{\alpha, \mu}\right) \d A_{\mu, \alpha} \wedge \d^4x + A_{\mu, \alpha}\d p^{\alpha, \mu}\wedge\d ^4x \\
\underset{\mathcal{W}_1}{=} & -\d p^{\alpha, \mu} \wedge \d A_\alpha \wedge \d ^3x_\mu + \left(J^\alpha + \frac{\gamma_\mu}{\mu_0}g^{\mu\nu}g^{\alpha\beta}F_{\beta\nu}\right) \d A_\alpha \wedge \d^4 x + A_{\mu, \alpha}\d p^{\alpha, \mu}\wedge\d ^4x \, .
\end{aligned}
\end{equation}

The expression in a natural chart of coordinates of ${\cal W}_\circ$ of a $\rho^\circ_M$-transverse and locally decomposable $m$-multivector field
satisfying $\inn{\bfX_\circ}\omega=1$ is
$$
{\bf X}_\circ= \bigwedge_{\rho=1}^m
\bigg(\parder{}{x^\rho}+(X_\circ)_{\rho,\alpha}\derpar{}{A_\alpha}+(X_\circ)_{\rho\mu,\alpha}\derpar{}{A_{\mu,\alpha}}+
(X_\circ)^{\mu,\alpha}_\rho\derpar{}{p^{\mu,\alpha}}+(X_\circ)_\rho^\mu\,\derpar{}{s^\mu}\bigg) \,.
$$
For such a multivector field, the field equations \eqref{vfH20} lead to
\begin{equation}
    \begin{dcases}
    \, (X_\circ)^\mu_\mu=-\frac{1}{4\mu_0}g^{\alpha\mu}g^{\beta\nu}F_{\mu\nu}F_{\alpha\beta}-A_\alpha J^\alpha-\gamma_\alpha s^\alpha \,,\\
    \, (X_\circ)_{\mu,\alpha} = A_{\mu,\alpha} \,, \\
    \, (X_\circ)^{\mu,\alpha}_\mu=-\left(J^\alpha +\gamma_\mu \, p^{\mu, \alpha}\right) \,, \\
    \, p^{\mu,\alpha}=\frac{1}{\mu_0}g^{\mu\nu}g^{\alpha\beta}F_{\beta\nu} \, .
    %\ \mu_0 J^\mu =g^{\nu\sigma}g^{\mu\tau}\left(X_{\tau\sigma,\nu}-X_{\sigma\tau,\nu}+\gamma_\nu X_{\tau,\sigma}-\gamma_\nu X_{\sigma,\tau}\right) \, .
    \label{EM1}
    \end{dcases}
\end{equation}

Thus, as expected in the Skinner--Rusk formalism, we recover the holonomy condition and the Legendre map. 
For the integral sections $\bm\psi_\circ(x^\nu)=\big( x^\nu,A_\mu(x^\nu),A_{\mu,\alpha}(x^\nu),p^{\mu,\alpha}(x^\nu),s^\mu(x^\nu) \big)$ of the integrable multivector fields ${\bf X}_\circ$
solutions to these equations, the system \eqref{EM1} leads to
\begin{equation}
    \begin{dcases}
    \, \derpar{s^\mu}{x^\mu}=-\frac{1}{4\mu_0}g^{\alpha\mu}g^{\beta\nu}F_{\mu\nu}F_{\alpha\beta}-A_\alpha J^\alpha-\gamma_\alpha s^\alpha \,,\\
    \, \derpar{A_\mu}{x^\alpha}=A_{\mu,\alpha} \,, \\
    \, \derpar{p^{\mu,\alpha}}{x^\mu}=-\left(J^\alpha +\gamma_\mu \, p^{\mu, \alpha}\right) \,, \\
    \, p^{\mu,\alpha}=\frac{1}{\mu_0}g^{\mu\nu}g^{\alpha\beta}F_{\beta\nu}\ \,.
    \label{EM2}
    \end{dcases}
\end{equation}
which are also the coordinate expression of s \eqref{Hamilton-Lagrange eq} for this theory.
Now, following the procedure described in Remark \ref{recovering},
first we recover the Herglotz--Euler--Lagrange equations,
\begin{equation}
    \begin{dcases}
    \frac{\partial s^\mu}{\partial x^\mu}=-\frac{1}{4\mu_0}g^{\alpha\mu}g^{\beta\nu}F_{\mu\nu}F_{\alpha\beta}-A_\alpha J^\alpha-\gamma_\alpha s^\alpha   \, , \\
    \mu_0 J^\mu = - g^{\nu\alpha}g^{\mu\sigma}\Big(\frac{\partial F_{\sigma\alpha}}{\partial x^\nu}+\gamma_\nu F_{\sigma\alpha}\Big) \,, 
    \label{EM3}
    \end{dcases}
\end{equation}
and second the Herglotz--Hamilton--de Donder--Weyl equations,
\begin{equation}
    \begin{dcases}
    \frac{\partial s^\mu}{\partial x^\mu}=-\frac{\mu_0}{4}g_{\alpha\mu}g_{\beta\nu}p^{\mu\nu}p^{\alpha\beta}-A_\alpha J^\alpha-\gamma_\alpha s^\alpha \, , \\
    \derpar{A_\mu}{x^\alpha}=\frac{\mu_0}{2}\,g_{\alpha\nu}g_{\mu\beta}\big(p^{\beta,\nu}-p^{\nu,\beta} \big) \,, \\
   J^\mu=-\derpar{p^{\nu,\mu}}{x^\nu}-\gamma_\nu p^{\nu,\mu} \,.
    \end{dcases}
\end{equation}
The physical meaning of these equations and their interpretation as an electromagnetic field in matter is discussed in \cite{LGMRR_23,LGMRR_25,GM_22,LPAF_2018}.

%%%%%%%%%%%%%%%%%%%%%%%%%%%%%%%%%%%%%%%%%%%%%%%%%%%%%%%%%%%%%%%%
\section{Conclusions and outlook}

The geometric covariant description of conservative classical field theories is well-established and primarily relies on multisymplectic geometry. In contrast, describing action-dependent (non-conservative) field theories requires an analogous framework, the so-called multicontact structures, which generalize both contact and multisymplectic geometries.

This work first has presented a comprehensive review of the theory of multicontact structures, and how this geometric setting allows the geometric description of the Lagrangian and Hamiltonian formalisms action-dependent field theories.
The paper's main original contribution has been the development of a unified formulation that integrates these two formalisms,
which is presented as a generalization of the Skinner--Rusk formalism.

Thus, the {\sl extended} and {\sl restricted jet-multimomentum} (or {\sl Pontryagin}) {\sl bundles},
$\mathcal{W}$ and $\mathcal{W}_r$ are introduced.
The first one, $\mathcal{W}$, carries a canonical premulticontact structure, but the formalism takes place in
the {\sl Hamiltonian submanifold} $\mathcal{W}_\circ$ of $\mathcal{W}$, which is diffeomorphic to $\mathcal{W}_r$.
This manifold $\mathcal{W}_\circ$ is endowed with a premulticontact structure, which is induced by that on $\mathcal{W}$,
but not with a special (pre)multicontact structure, as required to model an action-dependent field theory.
This is achieved by defining a new submanifold $\mathcal{W}_1$ in $\mathcal{W}_\circ$ at which points
the above premulticontact structure becomes a special premulticontact one.
The resulting premulticontact field equations contain all the essential information about the Lagrangian and Hamiltonian descriptions of the theory, namely:
\begin{enumerate}
\item 
The algebraic equations \eqref{coefs0c},
%\[ p^\mu_A - \frac{\partial L}{\partial y^A_\mu} = 0 \, .\]
which are the compatibility constraints for the field equations,
and hold identically, since they are also the constraints \eqref{contraints1}
giving the submanifold $\mathcal{W}_1$ in $\mathcal{W}_\circ$,
where the special premulticontact structure of $\mathcal{W}_\circ$ is defined.
%(they are also the \textsl{primary Hamilton constraints} in the Hamiltonian formalism for singular Lagrangian systems).
This double character of these constraints is a feature of the unified formalism inherent in action-dependent field theories.
As is usual in this formalism,
these constraints also give the Legendre map.
\item 
The semi-holonomy conditions given by the equations \eqref{coefs0a}, which force the sections solutions to the field equations to be holonomic.
These conditions are achieved independently of the regularity of the Lagrangian.
\item 
The Herglotz--Euler--Lagrange equations \eqref{ELeqs2}.
\item 
The Herglotz--Hamilton--de Donder--Weyl equations \eqref{HHDWeqs}.
\item 
The Lagrangian and Hamiltonian variational equations \eqref{actioneqs} and \eqref{actionHameqs}.
\end{enumerate}
In addition, the relation between the unified Skinner-Rusk formalism and the Lagrangian and Hamiltonian formalisms is proved both for (hyper) regular and singular (almost-regular) cases.

This formalism opens possibilities for future extension to higher-order action-dependent field theories and its application to modified action-dependent theories of gravitation;
as well as other interesting cases, such as action-dependent modified Kaluza--Klein and Yang--Mills theories.

%%%%%%%%%%%%%%%%%%%%%%%%%%%%%%%%%%%%%%%%%%%%%%%%%%%%%%%%
%\addcontentsline{toc}{section}{Acknowledgements}
\subsection*{Acknowledgements}

We acknowledge the financial support of the 
{\sl Ministerio de Ciencia, Innovaci\'on y Universidades} (Spain), projects PID2021-125515NB-C21 (MCIN/AEI/10.13039/501100011033/FEDER,UE), and RED2022-134301-T of AEI,
and Ministry of Research and Universities of
the Catalan Government, project 2021 SGR 00603 \textsl{Geometry of Manifolds and Applications, GEOMVAP}.

% \section*{Conflict of interest}
% On behalf of all authors, the corresponding author states that there is no conflict of interest.

%%%%%%%%%%%%%%%%%%%%%%%%%%%%%%%%%%%%%%%%%%%%%%%%%%%%%%%%
\bibliographystyle{abbrv}
{\small
\bibliography{references.bib}

\begin{thebibliography}{10}

\bibitem{BH_16}
A.~Banyaga and D.~F. Houenou.
\newblock {\em {A brief introduction to symplectic and contact manifolds}},
  volume~15 of {\em Nankai Tracts in Mathematics}.
\newblock World Scientific Publishing Co. Pte. Ltd., Singapore, 2016.
\newblock \href{https://doi.org/10.1142/9667}{10.1142/9667}.

\bibitem{Bra_17}
A.~Bravetti.
\newblock {Contact Hamiltonian dynamics: The concept and its use}.
\newblock {\em Entropy}, {\bf 10}(19):535, 2017.
\newblock \href{https://doi.org/10.3390/e19100535}{10.3390/e19100535}.

\bibitem{BCGGG_91}
R.~L. Bryant, S.~S. Chern, R.~B. Gardner, H.~L. Goldshmidt, and P.~A. Griffith.
\newblock {\em {Exterior Differential Systems}}, volume~18 of {\em Mathematical
  Sciences Research Institute Publications}.
\newblock Springer-Verlag, New York, 1st edition, 1991.
\newblock
  \href{https://doi.org/10.1007/978-1-4613-9714-4}{10.1007/978-1-4613-9714-4}.

\bibitem{CIL_96}
F.~Cantrijn, A.~Ibort, and M.~de~León.
\newblock {Hamiltonian structures on multisymplectic manifolds}.
\newblock {\em Rend. Sem. Mat. Univ. Pol. Torino}, {\bf 54}(3):225--236, 1996.
\newblock
  \url{http://www.seminariomatematico.polito.it/rendiconti/cartaceo/54-3/225.pdf}.

\bibitem{CIL_99}
F.~Cantrijn, L.~A. Ibort, and M.~de~León.
\newblock {On the geometry of multisymplectic manifolds}.
\newblock {\em J. Austral. Math. Soc. Ser. A}, {\bf 66}(3):303--330, 1999.
\newblock
  \href{https://doi.org/10.1017/S1446788700036636}{10.1017/S1446788700036636}.

\bibitem{CCI_91}
J.~F. Cariñena, M.~Crampin, and L.~A. Ibort.
\newblock {On the multisymplectic formalism for first order field theories}.
\newblock {\em Diff. Geom. Appl.}, {\bf 1}(4):345--374, 1991.
\newblock
  \href{https://doi.org/10.1016/0926-2245(91)90013-Y}{10.1016/0926-2245(91)90013-Y}.

\bibitem{CG_19}
J.~F. Cariñena and P.~Guha.
\newblock {Nonstandard {Hamiltonian} structures of the {Liénard} equation and
  contact geometry}.
\newblock {\em Int. J. Geom. Methods Mod. Phys.}, {\bf 16}:1940001, 2019.
\newblock
  \href{https://doi.org/10.1142/S0219887819400012}{10.1142/S0219887819400012}.

\bibitem{CM_01}
M.~Castrill\'on-L\'opez and J.~M. noz Masqu\'e.
\newblock {The geometry of the bundle of connections}.
\newblock {\em Math. Z.}, {\bf 236}(4):797--811, 2001.
\newblock \href{https://doi.org/10.1007/PL00004852}.

\bibitem{LGLRR_20}
M.~de~León, J.~Gaset, M.~Lainz-Valcázar, X.~Rivas, and N.~Román-Roy.
\newblock {Unified Lagrangian--Hamiltonian formalism for contact systems}.
\newblock {\em Fortschr. Phys.}, {\bf 68}(8):2000045, 2020.
\newblock
  \href{https://doi.org/10.1002/prop.202000045}{10.1002/prop.202000045}.

\bibitem{LGMRR_23}
M.~de~León, J.~Gaset, M.~C. Muñoz-Lecanda, X.~Rivas, and N.~Román-Roy.
\newblock {Multicontact formulation for non-conservative field theories}.
\newblock {\em J. Phys. A: Math. Theor.}, {\bf 56}(2):025201, 2023.
\newblock
  \href{https://doi.org/10.1088/1751-8121/acb575}{10.1088/1751-8121/acb575}.

\bibitem{LGMRR_25}
M.~de~León, J.~Gaset, M.~C. Muñoz-Lecanda, X.~Rivas, and N.~Román-Roy.
\newblock {Practical introduction to action-dependent field theories}.
\newblock {\em Fortschr. Phys.}, 73(5):e70000, 2025.
\newblock \href{https://doi.org/10.1002/prop.70000}{10.1002/prop.70000}.

\bibitem{LGMR_21}
M.~de~León, J.~Gaset, M.~C. Muñoz-Lecanda, and N.~Román-Roy.
\newblock {Higher-order contact mechanics}.
\newblock {\em Ann. Phys.}, {\bf 425}:168396, 2021.
\newblock
  \href{https://doi.org/10.1016/j.aop.2021.168396}{10.1016/j.aop.2021.168396}.

\bibitem{LIR_25}
M.~de~León, R.~Izquierdo-López, and X.~Rivas.
\newblock {Brackets in multicontact geometry and multisymplectization}.
\newblock \href{https://arxiv.org/abs/2505.13224}{2505.13224}, 2025.

\bibitem{LMM_03}
M.~de~León, J.~C. Marrero, and D.~{Martín de Diego}.
\newblock {A new geometrical setting for classical field theories}.
\newblock In {\em Classical and Quantum Integrability}, volume~{\bf 59}, pages
  189--209. Banach Center Pub., Inst. of Math., Polish Acad. Sci., Warsawa,
  2003.
\newblock \href{https://doi.org/10.4064/bc59-0-10}{10.4064/bc59-0-10}.

\bibitem{LMS_03}
M.~de~León, D.~{Martín de Diego}, and A.~Santamaría-Merino.
\newblock {Tulczyjew triples and Lagrangian submanifolds in classical field
  theories}.
\newblock In {\em Applied Differential Geometry and Mechanic}, pages 21--47.
  Univ. of Gent, Academia Press, Gent, 2003.
\newblock \href{https://arxiv.org/abs/math-ph/0302026}{arxiv:math-ph/0302026}.

\bibitem{LMM_96a}
M.~de~León, J.~Marín-Solano, and J.~C. Marrero.
\newblock {\em {A geometrical approach to Classical Field Theories: a
  constraint algorithm for singular theories}}, pages 291--312.
\newblock Springer Netherlands, Dordrecht, 1996.
\newblock
  \href{https://doi.org/10.1007/978-94-009-0149-0_22}{10.1007/978-94-009-0149-0\_22}.

\bibitem{LMM_96}
M.~de~León, J.~Marín-Solano, and J.~C. Marrero.
\newblock {The constraint algorithm in the jet formalism}.
\newblock {\em Diff. Geom. Appl.}, {\bf 6}(3):275--300, 1996.
\newblock
  \href{https://doi.org/10.1016/0926-2245(96)82423-5}{10.1016/0926-2245(96)82423-5}.

\bibitem{EIMR_12}
A.~Echeverría-Enríquez, A.~Ibort, M.~C. Muñoz-Lecanda, and N.~Román-Roy.
\newblock {Invariant forms and automorphisms of locally homogeneous
  multisymplectic manifolds}.
\newblock {\em J. Geom. Mech.}, {\bf 4}(4):397--419, 2012.
\newblock
  \href{https://doi.org/10.3934/jgm.2012.4.397}{10.3934/jgm.2012.4.397}.

\bibitem{ELMMR_04}
A.~Echeverría-Enríquez, C.~L\'opez, J.~Mar\'in-Solano, M.~C. Muñoz-Lecanda,
  and N.~Román-Roy.
\newblock {Lagrangian-Hamiltonian unified formalism for field theory}.
\newblock {\em J. Math. Phys.}, {\bf 45}(1):360--385, 2004.
\newblock \href{https://doi.org/10.1063/1.1628384}{10.1063/1.1628384}.

\bibitem{EMR_96}
A.~Echeverría-Enríquez, M.~C. Muñoz-Lecanda, and N.~Román-Roy.
\newblock {Geometry of Lagrangian first-order classical field theories}.
\newblock {\em Fortschr. Phys.}, {\bf 44}(3):235--280, 1996.
\newblock
  \href{https://doi.org/10.1002/prop.2190440304}{10.1002/prop.2190440304}.

\bibitem{EMR_98}
A.~Echeverría-Enríquez, M.~C. Muñoz-Lecanda, and N.~Román-Roy.
\newblock {Multivector fields and connections: Setting Lagrangian equations in
  field theories}.
\newblock {\em J. Math. Phys.}, {\bf 39}(9):4578--4603, 1998.
\newblock \href{https://doi.org/10.1063/1.532525}{10.1063/1.532525}.

\bibitem{Gar_74}
P.~L. García.
\newblock {The Poincaré--Cartan invariant in the calculus of variations}.
\newblock In {\em Symposia Mathematiea}, volume~14, pages 219--246, London,
  1974. (Convegno di Geometria Simplettica e Fisica Matematica, INDAM, Rome,
  1973), Academic Press.

\bibitem{GLMR_24}
J.~Gaset, M.~Lainz, A.~Mas, and X.~Rivas.
\newblock {The Herglotz variational principle for dissipative field theories}.
\newblock {\em Geom. Mech.}, {\bf 1}(2):153--178, 2024.
\newblock
  \href{https://doi.org/10.1142/S2972458924500060}{10.1142/S2972458924500060}.

\bibitem{GM_22}
J.~Gaset and A.~Marín-Salvador.
\newblock {Application of Herglotz's variational principle to electromagnetic
  systems with dissipation}.
\newblock {\em Int. J. Geom. Methods Mod. Phys.}, {\bf 19}(10):2250156, 2022.
\newblock
  \href{https://doi.org/10.1142/S0219887822501560}{10.1142/S0219887822501560}.

\bibitem{GRR_25}
J.~Gaset, X.~Rivas, and N.~Román-Roy.
\newblock {A survey on geometric frameworks for action-dependent classical
  field theories and their relationship}.
\newblock \href{https://arxiv.org/abs/2506.11646}{2506.11646}, 2025.

\bibitem{Gei_08}
H.~Geiges.
\newblock {\em {An introduction to contact topology}}, volume 109 of {\em
  Cambridge Studies in Advanced Mathematics}.
\newblock Cambridge University Press, New York, NY, 2008.
\newblock
  \href{https://doi.org/10.1017/CBO9780511611438}{10.1017/CBO9780511611438}.

\bibitem{GGB_03}
B.~Georgieva, R.~Gunther, and T.~Bodurov.
\newblock {Generalized variational principle of Herglotz for several
  independent variables. First Noether-type theorem}.
\newblock {\em J. Math. Phys.}, {\bf 44}(9):3911--3927, 2003.
\newblock \href{https://doi: 10.1063/1.1597419}{10.1063/1.1597419}.

\bibitem{GS_73}
H.~Goldschmidt and S.~Sternberg.
\newblock {The Hamilton--Cartan formalism in the calculus of variations}.
\newblock {\em Ann. Inst. Fourier}, {\bf 23}(1):203--267, 1973.
\newblock \href{https://doi.org/10.5802/aif.451}{10.5802/aif.451}.

\bibitem{Go_91}
M.~Gotay.
\newblock {A multisymplectic framework for classical field theory and the
  calculus of variations. I. Covariant Hamiltonian formalism}.
\newblock In {\em {Mechanics, analysis and geometry: 200 years after
  Lagrange}}, volume 198 of {\em North-Holland Delta Ser.}, pages 203--235.
  North-Holland, Amsterdam, 1991.

\bibitem{GRR_22}
X.~Gràcia, X.~Rivas, and N.~Román-Roy.
\newblock {Skinner--Rusk formalism for $k$-contact systems}.
\newblock {\em J. Geom. Phys.}, {\bf 172}:104429, 2022.
\newblock
  \href{https://doi.org/10.1016/j.geomphys.2021.104429}{10.1016/j.geomphys.2021.104429}.

\bibitem{Kho_13}
A.~L. Kholodenko.
\newblock {\em {Applications of contact geometry and topology in physics}}.
\newblock World Scientific, Singapore, 2013.
\newblock \href{https://doi.org/10.1142/8514}{10.1142/8514}.

\bibitem{Kij_73}
J.~Kijowski.
\newblock {A finite-dimensional canonical formalism in the classical field
  theory}.
\newblock {\em Comm. Math. Phys.}, {\bf 30}(2):99--128, 1973.
\newblock \href{https://doi.org/10.1007/BF01645975}{10.1007/BF01645975}.

\bibitem{KT_79}
J.~Kijowski and W.~M. Tulczyjew.
\newblock {\em {A symplectic framework for field theories}}, volume 107 of {\em
  Lecture Notes in Physics}.
\newblock Springer-Verlag, Berlin Heidelberg, 1st edition, 1979.
\newblock \href{https://doi.org/10.1007/3-540-09538-1}{10.1007/3-540-09538-1}.

\bibitem{KMS_93}
I.~Kolář, P.~W. Michor, and J.~Slovák.
\newblock {\em {Natural operations in differential geometry}}.
\newblock Springer Berlin, Heidelberg, 1993.
\newblock
  \href{https://doi.org/10.1007/978-3-662-02950-3}{10.1007/978-3-662-02950-3}.

\bibitem{LPAF_2018}
M.~Lazo, J.~Paiva, J.~Amaral, and G.~Frederico.
\newblock {An action principle for action-dependent Lagrangians: Toward an
  action principle to non-conservative systems}.
\newblock {\em J. Math. Phys.}, {\bf 59}(3):032902, 2018.
\newblock \href{https://doi.org/10.1063/1.5019936}{10.1063/1.5019936}.

\bibitem{LM_87}
P.~Libermann and C.-M. Marle.
\newblock {\em {Symplectic geometry and analytical mechanics}}, volume~35 of
  {\em Mathematics and Its Applications}.
\newblock Springer Dordrecht, 1987.
\newblock
  \href{http://doi.org/10.1007/978-94-009-3807-6}{10.1007/978-94-009-3807-6}.

\bibitem{Mar_97}
C.-M. Marle.
\newblock {The Schouten--Nijenhuis bracket and interior products}.
\newblock {\em J. Geom. Phys.}, {\bf 23}(3--4):350--359, 1997.
\newblock
  \href{https://doi.org/10.1016/S0393-0440(97)80009-5}{10.1016/S0393-0440(97)80009-5}.

\bibitem{Mar_88}
G.~Martin.
\newblock {A Darboux theorem for multi-symplectic manifolds}.
\newblock {\em Lett. Math. Phys.}, {\bf 16}:133--138, 1988.
\newblock \href{https://doi.org/10.1007/BF00402020}{10.1007/BF00402020}.

\bibitem{PR_15}
P.~D. Prieto-Martínez and N.~Román-Roy.
\newblock {A new multisymplectic unified formalism for second order classical
  field theories}.
\newblock {\em J. Geom. Mech.}, {\bf 7}(2):203--253, 2015.
\newblock
  \href{https://doi.org/10.3934/jgm.2015.7.203}{10.3934/jgm.2015.7.203}.

\bibitem{RT_23}
X.~Rivas and D.~Torres.
\newblock {Lagrangian--Hamiltonian formalism for cocontact systems}.
\newblock {\em J. Geom. Mech.}, {\bf 15}(1):1--26, 2023.
\newblock \href{https://doi.org/10.3934/jgm.2023001}{10.3934/jgm.2023001}.

\bibitem{RW_19}
L.~Ryvkin and T.~Wurzbacher.
\newblock {An invitation to multisymplectic geometry}.
\newblock {\em J. Geom. Phys.}, {\bf 142}:9--36, 2019.
\newblock
  \href{https://doi.org/10.1016/j.geomphys.2019.03.006}{10.1016/j.geomphys.2019.03.006}.

\bibitem{Sau_89}
D.~J. Saunders.
\newblock {\em {The geometry of jet bundles}}, volume 142 of {\em London
  Mathematical Society Lecture Note Series}.
\newblock Cambridge University Press, 1989.
\newblock
  \href{https://doi.org/10.1017/CBO9780511526411}{10.1017/CBO9780511526411}.

\bibitem{SR_83}
R.~Skinner and R.~Rusk.
\newblock {Generalized Hamiltonian dynamics. I. Formulation on T*Q $\oplus$
  TQ}.
\newblock {\em J. Math. Phys.}, {\bf 24}(11):2589--2594, 1983.
\newblock \href{https://doi.org/10.1063/1.525654}{10.1063/1.525654}.

\bibitem{SR_83_2}
R.~Skinner and R.~Rusk.
\newblock {Generalized Hamiltonian dynamics. II. Gauge transformations}.
\newblock {\em J. Math. Phys.}, {\bf 24}(11):2595--2601, 1983.
\newblock \href{https://doi.org/10.1063/1.525655}{10.1063/1.525655}.

\bibitem{Vit_15}
L.~Vitagliano.
\newblock {$L_\infty$-algebras from multicontact geometry}.
\newblock {\em Diff. Geom. Appl.}, {\bf 39}:147--165, 2015.
\newblock
  \href{https://doi.org/10.1016/j.difgeo.2015.01.006}{10.1016/j.difgeo.2015.01.006}.

\end{thebibliography}
}

\end{document}